%% file: main.tex
\documentclass[11pt]{article}
\input{header}

\input{macros}

\title{
Parallel Reachability and Shortest Paths on Non-sparse Digraphs: Near-linear Work and Sub-square-root Depth
}

\author{Vikrant Ashvinkumar \thanks{Rutgers.} \and Aaron Bernstein \thanks{New York University. Supported by Sloan Fellowship, Google Research Fellowship,  NSF Grant 1942010, and Charles S. Baylis endowment at NYU.} \and Maximilian Probst Gutenberg \thanks{ETH Zurich. The research leading to these results has received funding from grant no. 200021 204787 of the Swiss National Science Foundation.} \and Thatchaphol Saranurak \thanks{University of Michigan. Supported by NSF Grant CCF-2238138 and a Sloan Fellowship.}}
\date{}

\begin{document}
\maketitle
\begin{abstract}
We present parallel algorithms for computing single-source reachability and shortest paths on directed $n$-vertex $m$-edge graphs using near-linear $\tilde{O}(m)$ work and $o(\sqrt{n})$ depth whenever $m\ge n^{1+o(1)}$.
At the extreme of $m=\Omega(n^{2})$, our reachability and shortest path algorithms have depth only $n^{0.136}$ and $n^{0.25+o(1)}$, respectively. The state-of-the-art parallel algorithms with near-linear work for both problems \cite{jambulapati2019parallel,cao2020efficient,rozhovn2023parallel,cao2023parallel,brand2025parallel} require $\Omega(\sqrt{n})$ depth in all density regimes.
\end{abstract}

\pagenumbering{gobble}
\clearpage
\tableofcontents
\clearpage
\pagenumbering{arabic}
\setcounter{page}{1}
\input{intro}
\input{preliminaries}
\input{overview}
\input{shortcut-set}
\input{reach}
\input{hopset}
\input{sssp}
\input{open-problems}
\section*{Acknowledgments}
We would like to thank Shang-En Huang, who declined co-authorship despite being involved in our discussions.

\bibliographystyle{alpha}
\bibliography{references}

\appendix
\input{sketch-key-lemma}

\end{document}

%% file: header.tex
\usepackage[margin=1in]{geometry}
\usepackage[utf8]{inputenc} %
\usepackage[svgnames]{xcolor}
\usepackage{graphicx,xspace} %
\usepackage{pifont}
\usepackage{amsthm,amsmath,amssymb,mathtools}
\usepackage{tcolorbox}
\usepackage{subfig}
\usepackage{enumerate}
\usepackage[abs]{overpic}
\tcbuselibrary{skins,breakable}
\tcbset{enhanced jigsaw}

\usepackage{thmtools} 
\usepackage{thm-restate}

\usepackage[linktocpage=true,
	pagebackref=true,
        colorlinks,linkcolor=red,citecolor=olive,
	bookmarks,bookmarksopen,bookmarksnumbered]
	{hyperref}
\usepackage[noabbrev,nameinlink]{cleveref}

\usepackage{ifthen}
\usepackage{todonotes}

\newtheorem{lemma}{Lemma}[section]

\newtheorem{proposition}[lemma]{Proposition}
\newtheorem{observation}[lemma]{Observation}

\newtheorem{claim}[lemma]{Claim}

\newtheorem{remark}[lemma]{Remark}

\crefname{lemma}{Lemma}{Lemmas}
\crefname{appendix}{Appendix}{Appendices}
\crefname{proposition}{Proposition}{Propositions}
\crefname{observation}{Observation}{Observations}
\crefname{claim}{Claim}{Claims}
\crefname{figure}{Figure}{Figures}

%% file: macros.tex
\newcommand{\eps}{\ensuremath{\varepsilon}}
\newcommand{\Ot}{\ensuremath{\widetilde{O}}}

\newcommand{\bracket}[1]{\left[#1\right]}
\newcommand{\paren}[1]{\left(#1\right)}
\newcommand{\card}[1]{\left\vert{#1}\right\vert}
\newcommand{\set}[1]{\left\{ #1 \right\}}

\newcommand{\IN}{\ensuremath{\mathbb{N}}}
\newcommand{\cA}{\ensuremath{{\cal A}}}

\newcommand{\prob}[1]{\Pr\bracket{#1}}
\newcommand{\expect}[1]{\Exp\bracket{#1}}
\newcommand{\expectt}[2]{\Exp_{#1}\bracket{#2}}

\DeclareMathOperator*{\Exp}{\ensuremath{{\mathbb{E}}}}
\newcommand{\algname}[1]{\underline{\textbf{\textsc{#1}}}\\}
\newenvironment{tbox}{\begin{tcolorbox}[
		enlarge top by=5pt,
		enlarge bottom by=5pt,
		 breakable,
		 boxsep=2pt,
                  left=5pt,
                  right=7pt,
                  top=10pt,
                  arc=0pt,
                  boxrule=1pt,toprule=1pt,
                  colback=white
                  ]%
	}
{\end{tcolorbox}}
\newenvironment{ttbox}{\begin{tcolorbox}[
		enlarge top by=5pt,
		enlarge bottom by=5pt,
		 breakable,
		 boxsep=2pt,
                  left=5pt,
                  right=7pt,
                  top=10pt,
                  arc=0pt,
                  boxrule=1pt,toprule=1pt,
                  colback=white,
                  colframe=lightgray
                  ]%
	}
{\end{tcolorbox}}

\newcommand{\cE}{\mathcal{E}}
\newcommand{\dist}{\mathrm{dist}}
\newcommand{\bdist}[1]{\mathrm{dist}^{\paren{#1}}}
\newcommand{\length}{\mathrm{length}}
\newcommand{\JLS}{\textsc{JLS}\xspace}
\newcommand{\CFR}{\textsc{CFR}\xspace}
\newcommand{\TC}{\textsc{TC}\xspace}

\newcommand{\TDijkstra}{\textsc{TruncSSSP}\xspace}

\newcommand{\anc}[2]{R^{-}\hspace{-0.2em}\paren{#1, #2}}
\newcommand{\danc}[3]{R^{-}_{#3}\hspace{-0.2em}\paren{#1, #2}}
\newcommand{\vanc}[1]{{#1}^{\mathrm{Anc}}}
\newcommand{\desc}[2]{R^{+}\hspace{-0.2em}\paren{#1, #2}}
\newcommand{\ddesc}[3]{R^{+}_{#3}\hspace{-0.2em}\paren{#1, #2}}
\newcommand{\vdesc}[1]{{#1}^{\mathrm{Des}}}
\newcommand{\relevant}[2]{\mathrm{R}\hspace{-0.1em}\paren{#1, #2}}
\newcommand{\drelevant}[3]{\mathrm{R}_{#3}\hspace{-0.1em}\paren{#1, #2}}
\newcommand{\children}[1]{\mathrm{children}\paren{#1}}
\newcommand{\cprunei}{DarkMagenta}

\newcommand{\cpruneii}{FireBrick}

\newcommand{\prunei}{\textcolor{\cprunei}{\TC-Pruning}\xspace}

\newcommand{\pruneii}{\textcolor{\cpruneii}{\TDijkstra-Pruning}\xspace}

\newcommand{\omeganc}{\omega\xspace}
\renewcommand{\rho}{\varrho}

%% file: intro.tex
\section{Introduction}
\label{sec:intro}

\newboolean{restate}
\setboolean{restate}{false}

Single-source reachability and shortest paths (SSSP) are some of the most fundamental algorithmic problems on digraphs.
Given a digraph $G = (V,E)$ on $n$ vertices and $m = \Omega(n)$ edges, and a source $s \in V$, reachability asks for the set of all vertices that $s$ can reach, and shortest paths for the distances from $s$ to $v$ for all $v \in V$.
While the solutions to these problems are fairly well understood in the sequential setting --- reachability is solvable in $O(m)$ time using a breadth-first search (BFS) and shortest paths in $\Ot(m)$ time\footnote{Here and throughout, we use the $\Ot(\cdot)$ notation to suppress polylogarithmic factors.} using Dijkstra's algorithm --- we are much more limited in our understanding of these problems in other computational models.
We focus on these problems under the parallel setting in this paper.

The \emph{work} $W(n)$ of a parallel algorithm is the number of primitive operations it performs on an input of size $n$, which can be interpreted as the running time of the algorithm when sequentialized, i.e., the running time with just one processor.
Ideally, $W(n)$ asymptotically equals the best sequential running time in which case we call the algorithm work-efficient.
For example, a work-efficient algorithm for reachability would have $W(\card{E(G)}) = O(m)$, matching the running time of BFS.
We call a parallel algorithm nearly work-efficient if $W(n)$ matches the best sequential running time up to polylogarithmic factors.
We interchangeably use the terms \emph{depth} or \emph{span} $D(n)$ of a parallel algorithm to refer to the length of the longest chain of sequential dependencies in the algorithm, which can be interpreted as the fastest parallel time the algorithm could possibly run in, i.e., the running time when given an unconstrained number of processors.

In this paper, we give nearly work-efficient algorithms for both single-source reachability and SSSP that have $o(\sqrt{n})$ depth on non-sparse digraphs.
Prior to our work, the state-of-the-art nearly work-efficient algorithms for both reachability and SSSP had depth at least $\Omega(\sqrt{n})$ for all graph densities, achieved by \cite{jambulapati2019parallel,brand2025parallel} and \cite{rozhovn2023parallel,cao2023parallel,brand2025parallel} respectively.

\subsection{Prior Work and Background}

\paragraph{Background on Parallel Reachability. }
The two most basic algorithms for parallel reachability are (i) using parallel matrix multiplication to compute the transitive closure of $G$ which, while highly parallel with $\textrm{polylog}(n)$ depth, is not remotely work-efficient; and (ii) using parallel BFS which, while work-efficient, unfortunately has span proportional to the depth of the BFS-tree of $G$, which is $\Omega(n)$ on worst-case inputs.
Until somewhat recently, no nearly work-efficient algorithm with sublinear depth had been known.
The breakthrough of Fineman~\cite{fineman2018nearly} gave the first nearly work-efficient algorithm with sublinear depth $\Ot(n^{2/3})$.
This was then built upon by Jambulapati, Liu, and Sidford~\cite{jambulapati2019parallel} to give a nearly work-efficient algorithm with $n^{1/2 + o(1)}$ depth, which remains the best known upper bound. This algorithm can also be generalized to give a work-span tradeoff \cite{cao2021brief}.

Both \cite{fineman2018nearly,jambulapati2019parallel} utilize a framework based on \emph{shortcut sets}, which dates back to \cite{ullman1991high}.
A $\beta$-shortcut set of $G$ is a set $H \subseteq V \times V$ such that the reachability relations of $G$ and $G \cup H$ are the same, and for any $u,v \in V$ such that $u$ reaches $v$, there is a path from $u$ to $v$ using at most $\beta$ hops in $G \cup H$.
Given a $\beta$-shortcut set $H$, one may compute single-source reachability in $O(m + \card{H})$ work and $O(\beta)$ depth by computing a parallel BFS on $G \cup H$ from source $s$.
There is a folklore construction of an $O(\sqrt{n})$-shortcut set $H$ with $\card{H} = \Ot(n)$: randomly sample $\Ot(\sqrt{n})$ vertices and add all shortcut edges $uv$ where $u$ and $v$ are sampled vertices such that $u$ can reach $v$ in $G$.
This was (at the time of \cite{fineman2018nearly,jambulapati2019parallel}) the best known existential bound, but no efficient algorithmic constructions meeting these bounds were known, even in the \emph{sequential} setting.
The breakthrough of \cite{fineman2018nearly} was a nearly work-efficient construction of an $\Ot(n^{2/3})$-shortcut set with linear size.
\cite{jambulapati2019parallel} then gave the first nearly work-efficient construction of a shortcut set that essentially matches the folklore bound.
The key technical contribution to both results was the first near-linear time sequential construction of said shortcut sets, which they then showed how to parallelilize. 

Both \cite{fineman2018nearly} and \cite{jambulapati2019parallel} construct a shortcut set $H$ with only $\Ot(n)$ edges. But for many problems, and for parallel reachability in particular, a shortcut set with $\Ot(m)$ edges would do just as well.
For non-sparse graphs, allowing $H$ to contain more edges is conceptually advantageous because the folklore construction can be tuned to achieve a $O(n/\sqrt{m})$-shortcut set $H$ with $\card{H} = \Ot(m)$, by simply sampling more vertices.
Unfortunately, the constructions of \cite{fineman2018nearly,jambulapati2019parallel} are not so readily tunable in this way, so we do not have any efficient algorithmic constructions of these denser hopsets, even in the sequential settings. In this paper, we give a near-linear time sequential construction of an $\Ot(m)$ size $\beta$-shortcut set where $\beta$ scales with the density of the base graph; for any $m$ polynomially larger than $n$, we achieve $\beta$ polynomially better than $\sqrt{n}$.
We can further improve the tradeoff with fast matrix multiplication: at the extreme case, if $\omega = 2$, our construction essentially matches the bounds of the tuned folklore construction, generalizing \cite{jambulapati2019parallel}.
We then show that all our sequential bounds can be easily parallelized, leading to 
nearly work-efficient parallel reachability algorithms with significantly lower depth in non-sparse graphs.

\paragraph{Background on Parallel Shortest Paths. }
Rozho\v{n} \textcircled{r}\footnote{A randomized author ordering generated in the cited work is delimited by \textcircled{r} separators.} Haeupler \textcircled{r} Martinsson \textcircled{r} Grunau \textcircled{r} Zuzic~\cite{rozhovn2023parallel} gave a blackbox reduction from SSSP to single-source $(1+\eps)$-approximate shortest paths, so we focus on the latter.

A similar framework to computing shortcut sets followed by parallel BFS is viable for approximate shortest paths.
Instead of BFS, one uses an algorithm of Klein and Subramanian \cite{klein1997randomized} for finding (exact) shortest $\beta$ hop paths in $\Ot(m)$ work and $\Ot(\beta)$ span.
And instead of a $\beta$-shortcut set, one constructs a so-called $(\beta,\eps)$-\emph{hopset} of $G$ which is a set $H \subseteq V \times V$ such that $\dist_G(u,v) \le \bdist{\beta}_{G \cup H}(u,v) \le (1+\eps)\dist_G(u,v)$ for all $u,v \in V$, where $\bdist{\beta}$ is the length of the shortest path using at most $\beta$ hops.
The folklore shortcut set is easily adapted to showing the existence of $(O(\sqrt{n}),\eps)$-hopsets $H$ with $\card{H} =\Ot(n)$.
Analogously, Cao, Fineman, and Russell~\cite{cao2020efficient} adapted the shortcut set construction of \cite{jambulapati2019parallel} to give a near-linear time construction of an $(n^{1/2+o(1)},\eps)$-hopset $H$ with $\card{H} =\Ot(n)$, essentially meeting this folklore bound.
A work-span tradeoff was then later given in \cite{cao2021brief}.

Just as with shortcut sets, the requirement that $\card{H} = \Ot(n)$ is sometimes too stringent.
The folklore hopset can in the same way be tuned to give an $(O(n/\sqrt{m}),\eps)$-hopset $H$ with $\card{H} = \Ot(m)$ by sampling more vertices, but the hopset of \cite{cao2020efficient} does not immediately generalize in this way.
In this paper, we give a near-linear time sequential construction of an $\Ot(m)$ size $(\beta,\eps)$-hopset where $\beta$ scales with the density of the base graph.
This is then parallelized and plugged into the abovementioned framework to get faster nearly work-efficient parallel SSSP algorithms.

\subsection{Our Results}

See \Cref{fig:plot} for a quick comparison of our main results with the strongest known bounds prior to this work.
Note that the parallel SSSP algorithm is purely combinatorial; for reachability, we get a better tradeoff using fast matrix multiplication.
\clearpage

\begin{figure}[h]
    \begin{overpic}[scale=0.4]{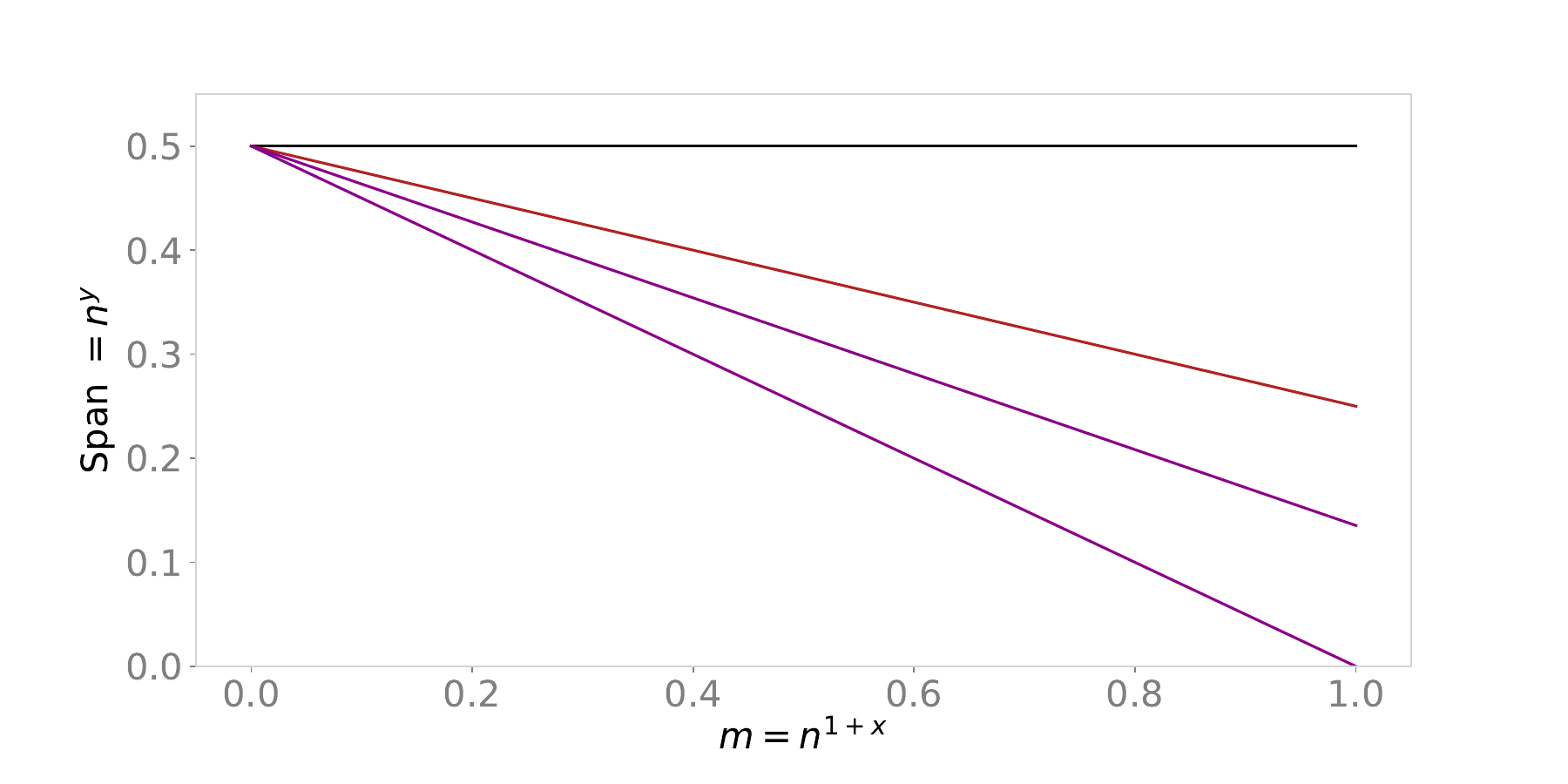}
        \put(315,142){\small{\JLS Parallel Reachability \cite{jambulapati2019parallel} and}}
        \put(315,132){\small{\CFR Parallel SSSP \cite{cao2020efficient,rozhovn2023parallel}}}
        \put(315,80){\textcolor{\cpruneii}{\small{Parallel SSSP (\Cref{thm:par-sssp})}}$\ ^{\dagger}$}
        \put(315,55){\textcolor{\cprunei}{\small{Parallel Reachability (\Cref{thm:par-reach}),}}}
        \put(315,45){\textcolor{\cprunei}{\small{assuming $\omeganc \approx 2.371339$}}}
        \put(315,25){\textcolor{\cprunei}{\small{assuming $\omeganc = 2$}}}
    \end{overpic}
    \caption{
        Comparison of spans of $\Ot(m)$ work parallel algorithms for reachability and SSSP on digraphs.
        Each plotted curve corresponds to such a parallel algorithm.
        The x-axis is the density of the input digraph, and the y-axis is the span of the algorithm.
        $\dagger$: The bound for parallel SSSP also holds for \emph{combinatorial} parallel reachability (i.e. $\omega = 3$).
    }
    \label{fig:plot}
\end{figure}

\paragraph{Parallel Reachability and Shortcut Sets. }
Our main result is on parallel reachability.

\begin{restatable}[Parallel Reachability]{theorem}{parreach}
\label{thm:par-reach}
    Let $\omeganc$ be the fast matrix multiplication exponent in the work for parallel algorithms using polylogarithmic (in $n$) span.
    There is an $\Ot(m)$ work
    and $\sqrt[2\omeganc-2]{n^{\omeganc+o(1)}/m}$ span
    randomized parallel algorithm that, given an unweighted digraph $G$ and source $s \in V$, outputs with high probability all vertices that $s$ can reach and all vertices that reach $s$.
\end{restatable}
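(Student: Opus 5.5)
The plan follows the shortcut-set framework. First I construct, in $\Ot(m)$ work and small span, a $\beta$-shortcut set $H$ with $\card{H}=\Ot(m)$ and $\beta = \sqrt[2\omeganc-2]{n^{\omeganc+o(1)}/m}$. Then I run parallel BFS from $s$ on $G\cup H$, which takes $O(m+\card{H})=\Ot(m)$ work and $O(\beta)$ span. Running the same procedure on the reverse graph gives the set of vertices that reach $s$.

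The construction generalizes the recursive scheme of \JLS (building on Fineman's). I sample pivot vertices at a density tuned to $m$. For each pivot $p$, I compute its ancestor set $\anc{p}{\cdot}$ and descendant set $\desc{p}{\cdot}$ up to some hop/size budget, add the shortcuts $(u,p)$ and $(p,v)$ for those sets, and partition the remaining vertices into subproblems according to their relation to the pivots (reaches $p$, reached by $p$, or unrelated). I then recurse on the subproblems. The density enters through the budget. With $\Ot(m)$ edges available instead of $\Ot(n)$, I can afford roughly $m/n$ times more pivots per level. Following the folklore intuition, this should cut the hop bound from about $\sqrt n$ toward $n/\sqrt m$.

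The obstacle is that the \JLS analysis bounds work by noting each vertex lies in few ancestor/descendant sets. With many more pivots, the naive cost of exploring $\anc{p}{\cdot}\cup\desc{p}{\cdot}$ for every pivot becomes superlinear. The plan is to prune these searches with two mechanisms.

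\textbf{Step 1 (\prunei).} For a batch of pivots, compute the transitive closure among them using parallel fast matrix multiplication with polylogarithmic span. Each search from a pivot can then stop at vertices already known to be covered by a related pivot. This is where $\omeganc$ enters. Balancing a batch of $k$ pivots, which costs $k^{\omeganc}$ work, against the edge budget $m$, and the hop length against the number of batches, should produce the exponent $\omeganc/(2\omeganc-2)$ on $n$ and $1/(2\omeganc-2)$ on $m$.

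\textbf{Step 2 (\pruneii).} Truncate the searches by a hop or size threshold, and argue via the random ordering of pivots that each edge is scanned in only $\Ot(1)$ searches in expectation.

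With pruning in place, the analysis reduces to a key lemma, presumably the one sketched in the appendix. It should state that a pruned search still shortcuts every path: for any $u$ reaching $v$, with high probability either some pivot on the path shortcuts it, or the path lies within a single subproblem whose size shrinks geometrically. This yields a recursion depth of $O(\log n)$ and a hop bound of $\beta\cdot n^{o(1)}$.

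The main work is proving this lemma and checking the parameter balance. I expect the correctness of the pruning to be the hardest part, namely that stopping searches at vertices already covered by the transitive closure does not destroy the \JLS-style path-shortcutting argument. I would first try an exchange argument: any pruned vertex is reachable via a pivot whose own shortcuts already cover it, so it loses at most $O(1)$ extra hops per recursion level.

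Finally, every step consists of BFS with bounded hops, matrix multiplication, or sorting and partitioning. Each of these parallelizes with span proportional to $\beta$ times polylogarithmic factors, so the overall algorithm has $\Ot(m)$ work and the span claimed in \Cref{thm:par-reach}.
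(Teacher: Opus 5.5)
Your outer framework matches the paper: build a shortcut set, then run parallel BFS on $G\cup H$ and on its reverse. The construction and its analysis, however, have genuine gaps.

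\textbf{The density mechanism.} You try to exploit density by sampling about $m/n$ times more pivots and then recovering the work by pruning searches with a transitive closure ``among the pivots.'' Neither part works.
\begin{itemize}
\item Denser sampling in the \JLS recursion only splits paths into more pieces. The potential argument of \Cref{lem:topdown-jls} gains a factor $\sqrt 2$ per level regardless of how many pivots are drawn. Since every node has potential at least $1$, it still yields $n^{1/2+o(1)}$ hops. This is why the paper says \JLS is not readily tunable.
\item Pivot-to-pivot reachability in $G$ runs through non-pivot vertices. It is therefore not the closure of any small matrix you can square without first doing the searches you want to avoid.
\item Stopping searches early leaves vertices without the labels that define the partition.
\item \pruneii is the hopset mechanism and plays no role in the reachability result.
\end{itemize}
The missing idea is \prunei. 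Keep the \JLS sampling rates unchanged. For every pivot $p$ with $\card{\relevant{G}{p}}\le (k^2\log^2 n)\rho^2$, add \emph{all} of $\TC(G[\relevant{G}{p}])$ to $H$, at cost $\Ot(\rho^{2\omeganc})$ by repeated squaring. There are $\Ot(k^{r+1})$ pivots at level $r$, and \Cref{prop:jls-1} gives $\card{\relevant{G'}{v}}\le 2nk^{-r}$. Hence this costs $\Ot(n\rho^{2\omeganc-2})$ time and $\Ot(n\rho^2)$ edges. Choosing $\rho=(m/n)^{1/(2\omeganc-2)}$ gives $n^{1/2+o(1)}/\rho$, the claimed exponent. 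The exponent comes from this per-ball accounting, not from a batch balance.

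\textbf{The hop analysis.} Your proposed key lemma would not yield a hop bound. A path is split into many pieces across subproblems, and the hop count is the number of nodes of the pruned subproblem tree; your lemma says nothing about that count. The paper's argument has three parts.
\begin{enumerate}
\item Use the potential $\sqrt{\card{\relevant{G_{P'}}{P'}}}$ with Cauchy--Schwarz, Jensen, and the shrinkage lemma \Cref{prop:jls-2}. This shows at most $n^{1/2+o(1)}/\rho$ nodes in expectation have $\card{\relevant{G_{P'}}{P'}}\ge\rho^2$.
\item Every node has $O(k\log n)$ children.
\item A small node's subtree has $O(\log n)$ expected size, by a case split. If every vertex of its relevant set has a ball under the TC threshold, the TC calls leave at most one child. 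Otherwise some vertex has a large ball, which forces $\card{\relevant{G_{P''}}{P''}}<nk^{-r-2}\log^{-2}n$, so a pivot lands there only with probability $O(1/(k\log n))$.
\end{enumerate}

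\textbf{The span claim.} Your span bound fails as stated. A BFS computing full reachability has span equal to the current graph's hopbound, which can be $\Omega(n)$. Truncating searches by hops instead breaks \Cref{obs:s-splitting}, leading to the \CFR-style complications. The paper gets $\Ot(\beta)$ span from three ingredients:
\begin{itemize}
\item the reduction of \Cref{prop:shallow-shortcut} to digraphs with reachability hopbound $\lambda\beta$;
\item SCC contraction;
\item the observation that \JLS subproblems preserve that hopbound.
\end{itemize}
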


Using the fact that $\omeganc < 2.371339$ (by \cite{alman2025more} and \Cref{prop:par-mm}), we can compute reachability in $\Ot(m)$ work and $n^{0.86461}/m^{0.36461}$ span ($n^{0.136}$ span when $m = \Omega(n^2)$).
The previous best span for nearly work-efficient algorithms for reachability was at least $\Omega(\sqrt{n})$, regardless of digraph density: \cite{jambulapati2019parallel} achieves depth $n^{1/2+o(1)}$, while the min-cost flow algorithm of \cite{brand2025parallel} reduces the depth to $\Ot(\sqrt{n})$ when $m \gtrsim n^{1.5}$.

The technical developments used to get the parallel reachability algorithm of \Cref{thm:par-reach} is largely a near-linear time \emph{sequential} construction for a $\sqrt[2\omega-2]{n^{\omega+o(1)}/m}$-shortcut set with size $\Ot(m)$.

\begin{restatable}[Sequential Shortcut Set]{theorem}{seqshort}
\label{thm:seq-shortcut}
    Let $\omega$ be the fast matrix multiplication exponent.
    There is an $\Ot(m)$ time randomized algorithm that, given an unweighted digraph $G$, outputs with high probability a
    $\sqrt[2\omega-2]{n^{\omega+o(1)}/m}$
    -shortcut set $H$ with size 
    $\Ot\paren{m}$.

    More generally, we show the following tradeoff which the above is a special case of.
    Let $\rho \in [\sqrt{n}]$.
    There is an 
    $\Ot\paren{m + n\rho^{2\omega-2}}$
    time randomized algorithm that, given an unweighted digraph $G$, outputs with high probability an $n^{1/2 + o(1)}/\rho$-shortcut set $H$ with size 
    $\Ot(n\rho^2)$.
\end{restatable}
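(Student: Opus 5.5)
This is the tradeoff for the shortcut set; I would build it on the Jambulapati--Liu--Sidford (\JLS) recursive framework. I would first derive the special case from the general tradeoff. Choose $\rho$ so that $n\rho^{2\omega-2} = m$, i.e.\ $\rho = (m/n)^{1/(2\omega-2)}$. Since $\omega \ge 2$, the size bound satisfies $n\rho^2 \le n\rho^{2\omega-2} = m$. The hop bound then becomes
\[
n^{1/2+o(1)}/\rho = \paren{n^{(\omega-1)(1+o(1))} \cdot n/m}^{1/(2\omega-2)} = \sqrt[2\omega-2]{n^{\omega+o(1)}/m},
\]
as claimed. It therefore suffices to prove the general statement.

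\textbf{Generalizing \JLS.} \JLS picks random pivots $v$ and computes the descendant set $\desc{v}{\cdot}$ and ancestor set $\anc{v}{\cdot}$ (restricted to a current vertex subset). It adds shortcut edges from every ancestor to $v$ and from $v$ to every descendant. It then splits the vertices into the parts ``ancestor only'', ``descendant only'' and ``neither'' and recurses. Each vertex participates in polylogarithmically many searches per level, which gives near-linear time. The hop bound $n^{1/2+o(1)}$ comes from a potential argument: a fixed path is cut into $\Ot(\sqrt n)$ pieces, each shortcut in $O(1)$ hops after a depth-$n^{o(1)}$ recursion.

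My plan is to replace single pivots by \emph{batches} of about $\rho^2$ pivots. Each pivot's batch should induce shortcuts among a denser set, with roughly $\rho^2$ shortcut edges per vertex, and so give an $n/\rho$-style hop bound in the spirit of the tuned folklore construction. I would handle a batch $S$ in three steps.
\begin{enumerate}
\item Compute the reachability relation among the pivots in $S$: the transitive closure of the graph induced on $S$ via reachability in $G$.
\item For each vertex $u$, compute which pivots of $S$ it reaches and which pivots reach it.
\item Add shortcuts from $u$ to the minimal reachable pivots (and symmetrically), keeping only $O(\rho^2)$ shortcuts per vertex.
\end{enumerate}
The transitive closure on $\rho^2$ vertices costs $\rho^{2\omega}$ with fast matrix multiplication. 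Charged over the $n/\rho^2$ batches, this gives total time $n\rho^{2\omega-2}$, matching the term in the theorem. The $\Ot(m)$ term comes from the graph searches, which are made near-linear exactly as in \JLS: vertices are discarded from subproblems once separated, so each edge is scanned polylogarithmically often.

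\textbf{Hop analysis and the main obstacle.} Fix a shortest path $P$ from $u$ to $v$. I would adapt the \JLS ``relevant set'' potential $\relevant{\cdot}{\cdot}$: the number of vertices related to the path shrinks geometrically with each recursion level. The difference is that with $\rho^2$ pivots sampled per batch, every path segment of length about $\sqrt n/\rho$ already contains a pivot with good probability. This yields $n^{1/2+o(1)}/\rho$ pieces, each bypassed by $O(1)$ shortcut hops.

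I expect the main obstacle to be making the batch processing consistent with the recursion's partition. With many pivots, a vertex can be an ancestor of some pivots and a descendant of others, so the three-way split becomes a partition indexed by the DAG on the pivots. I must ensure two things:
\begin{itemize}
\item every vertex lies in few subproblems, needed for the $\Ot(m)$ time bound;
\item path-relevant sets still shrink, needed for the hop bound.
\end{itemize}
I would first try to contract the strongly connected components among pivots and order them topologically. I would then assign each vertex to the interval between its latest reaching pivot and its earliest reached pivot. The intent is to show, as in \JLS, that the expected number of relevant vertices drops by an $n^{-o(1)}$-independent constant factor per level, using a union bound over pivots in place of the single-pivot argument.
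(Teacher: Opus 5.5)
Your reduction of the first statement to the tradeoff (taking $\rho=(m/n)^{1/(2\omega-2)}$ and using $n\rho^2\le n\rho^{2\omega-2}=m$) is correct and is exactly the paper's final step. The gap is in the tradeoff itself: the batched-pivot scheme does not achieve either the time bound or the hop bound as described.

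\textbf{Time.} Step~1 is not a transitive closure of a $\rho^2$-vertex graph. The relation ``pivot $a$ reaches pivot $b$ in $G$'' is itself what must be computed, and its witnessing paths run through non-pivot vertices, so it cannot come from squaring a $\rho^2\times\rho^2$ matrix. Steps~1 and~2 therefore require graph searches from all $\rho^2$ pivots of a batch, which costs $\Theta(m)$ per pivot in general. The \JLS accounting you invoke (each edge scanned polylogarithmically often) relies on each vertex being related to only $O(k\log n)$ pivots in its current subproblem. That follows from \Cref{prop:jls-1} together with the sparse sampling rate $\approx k^{r+1}/n$. Once a vertex is related to $\rho^2$ pivots, the same accounting gives $\Theta(m\rho^2)$, which is precisely the term that keeps \cite{cao2021brief} from being near-linear. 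Also, $n/\rho^2$ batches of $\rho^2$ pivots means $n$ pivots in total, which is inconsistent with the sampling used in your hop argument.

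\textbf{Hops.} The claim that every segment of length $\sqrt n/\rho$ contains a pivot, so each piece is bypassed in $O(1)$ hops, is the folklore argument. It needs about $\sqrt n\rho$ pivots and a shortcut between \emph{every} pair of related pivots, which is exactly what no one knows how to compute in near-linear time. In your scheme pivots are closed up only within a batch, so consecutive pivots on $P$ from different batches are not linked. A single batch of $\rho^2$ pivots, meanwhile, hits $P$ only about every $n/\rho^2\ge\sqrt n/\rho$ vertices. Finally, the obstacle you single out is not the real one. \JLS already samples many pivots per level and partitions by label vectors, and \Cref{obs:s-splitting} shows paths still split into contiguous pieces. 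Moreover, a constant-factor drop in relevant-set size per level would not give $n^{1/2+o(1)}/\rho$.

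\textbf{The missing idea.} Keep \JLS entirely unchanged (same sparse pivot rates, labels and recursion) and let $\rho$ enter only through \prunei. For each pivot $p$ whose ball $\relevant{G}{p}$ in the current subproblem has at most $(k^2\log^2 n)\rho^2$ vertices, add all of $\TC(G[\relevant{G}{p}])$.
\begin{itemize}
\item \emph{Cost and size.} This is a closure of an explicit small induced subgraph, so it costs $\Ot(\card{\relevant{G}{p}}^\omega)$. There are $\Ot(k^r)$ pivots at level $r$, and $\card{\relevant{G}{p}}\le\min\{\Ot(\rho^2),2nk^{-r}\}$. Hence the per-level cost is $\Ot(n\rho^{2\omega-2})$ both when $k^r<n/\rho^2$ and when $k^r\ge n/\rho^2$. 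The same two cases give the $\Ot(n\rho^2)$ size bound.
\item \emph{Few large nodes.} The hop bound comes from the subproblem tree of a fixed path. The potential $\phi(P')=\sqrt{\card{\relevant{G_{P'}}{P'}}}$ grows by at most a factor $\sqrt2$ per level, by \Cref{prop:jls-2}, Cauchy--Schwarz and Jensen. So in expectation at most $n^{1/2+o(1)}/\rho$ nodes have $\card{\relevant{G_{P'}}{P'}}\ge\rho^2$ (\Cref{lem:topdown-jls}).
\item \emph{Few children.} Every node has $O(k\log n)$ children.
\item \emph{Small subtrees.} The pruned subtree below a small node has $O(\log n)$ expected size (\Cref{lem:small-subtree}). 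Either all vertices relevant to the node have small balls, so \prunei leaves it at most one child. Or one of them has a ball of more than $(k^2\log^2 n)\rho^2$ vertices; by \Cref{prop:jls-1} this forces the node's own relevant set below $nk^{-r-2}\log^{-2}n$, so it receives a pivot with probability only $O(1/(k\log n))$.
\end{itemize}
Combining these bounds gives $n^{1/2+o(1)}/\rho$ nodes in the pruned tree, and hence the hop bound.
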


Refer to \Cref{table:omegas} for the spans of our algorithms given for different values of $\omega$ shown in a tabular format, summarizing \Cref{thm:par-reach} and \Cref{thm:seq-shortcut}.
Of particular interest are the extremes where \Cref{thm:seq-shortcut} outputs, if $\omega = 3$, a combinatorially constructed $n^{3/4 + o(1)}/m^{1/4}$-shortcut set and, if $\omega = 2$ an $n^{1+o(1)}/\sqrt{m}$-shortcut set almost matching the tunable folklore shortcut set in its parameters.
Also of note: the tradeoff given by \Cref{thm:seq-shortcut} is strictly better than that given by \cite{cao2021brief}, which is a randomized $\Ot\paren{m\rho^2 + n\rho^4}$ time construction of an $n^{1/2 + o(1)}/\rho$-shortcut set $H$ with size $\Ot(n\rho^2)$; most crucially, because of the $m\rho^2$ term, their tradeoff could not achieve near-linear work for any graph density.

Parallelizing the previous state-of-the-art near-linear size $n^{1/2+o(1)}$-shortcut set of \cite{jambulapati2019parallel} was, while important, somewhat tedious.
Using the new technology of \cite{blackbox}, we can easily parallelize \Cref{thm:seq-shortcut} --- see \Cref{thm:par-shortcut} for details --- and consequently use the parallel construction to immediately get our main result \Cref{thm:par-reach}.

\begin{table}
    \centering{
        \begin{tabular}{|c|c|c|}
            \hline 
            \textbf{Value of $\omega$} & Span (or $\beta$) & Span (or $\beta$) when $m=\Theta(n^{2})$\tabularnewline
            \hline 
            \hline 
            \textbf{(Combinatorial): $3$} & $n^{3/4 + o(1)}/m^{1/4}$ & $n^{1/4 + o(1)}$\tabularnewline
            \hline 
            \textbf{(Ideal): $2$} & $n^{1+o(1)}/\sqrt{m}$ & $n^{o(1)}$\tabularnewline
            \hline 
            \textbf{(Current \cite{alman2025more}): $2.371339$} & $n^{0.86461}/m^{0.36461}$ & $n^{0.136}$\tabularnewline
            \hline 
        \end{tabular}
    }
    \caption{Span of our near-linear work parallel reachability algorithms (or $\beta$ value of our $\beta$-shortcut sets) for different values of $\omega$.}
    \label{table:omegas}
\end{table}

\paragraph{Parallel SSSP and Hopsets. }
We then show analogous results for SSSP, albeit with slightly weaker bounds.
\begin{restatable}[Parallel SSSP]{theorem}{parsssp}
\label{thm:par-sssp}
    There is an $\Ot(m)$ work
    and $\sqrt[4]{n^{3 + o(1)}/m}$ span
    randomized parallel algorithm that, given a polynomially bounded non-negative integer weighted digraph $G$ and source $s \in V$,
    outputs with high probability
    $\dist(s,v)$ for all $v \in V$ along with a shortest path tree rooted at $s$.
\end{restatable}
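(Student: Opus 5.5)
Following the template of \Cref{thm:par-reach}, I would first prove a sequential hopset analogue of \Cref{thm:seq-shortcut}, then parallelize it and plug it into the standard framework. Concretely, for a parameter $\rho \in [\sqrt{n}]$, I would aim for an $\Ot(m + n\rho^4)$ time construction of an $(n^{1/2+o(1)}/\rho, \eps)$-hopset $H$ with $\card{H} = \Ot(n\rho^2)$. This is the ``$\omega = 3$'' instance of the tradeoff, since $n\rho^{2\omega-2} = n\rho^4$. Choosing $\rho = (m/n)^{1/4}$ makes the work $\Ot(m)$, keeps the size at $\Ot(n\rho^2) = \Ot(\sqrt{nm}) \le \Ot(m)$, and gives hopbound
\[ n^{1/2+o(1)}/\rho = n^{1/2+o(1)} n^{1/4}/m^{1/4} = \sqrt[4]{n^{3+o(1)}/m}, \]
which is exactly the claimed span.

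\textbf{Sequential construction.} I would adapt the \cite{cao2020efficient} recursive hopset construction, itself built on \cite{jambulapati2019parallel}, in the same way the shortcut construction is adapted in \Cref{thm:seq-shortcut}. At each recursion level, sample pivots with rate boosted by $\rho^2$. Instead of computing full reachability or distances from each pivot, which is what costs $m\rho^2$ in \cite{cao2021brief}, run a truncated Dijkstra (\TDijkstra) from each pivot that is pruned to the pivot's relevant ancestor/descendant sets. These are the sets $\danc{\cdot}{\cdot}{\cdot}$ and $\ddesc{\cdot}{\cdot}{\cdot}$ in the macros, and the pruning is what the paper calls \pruneii. Matrix multiplication is used for reachability (\prunei), but for distances it is replaced by this combinatorial pruning, which explains the $\omega = 3$ exponent.

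\textbf{Main obstacle.} The key step is the work bound: I must show each vertex and edge is scanned by only $\Ot(1)$ truncated searches in expectation, plus an extra $n\rho^4$ term. I would argue this via the random pivot ordering, as in the \cite{jambulapati2019parallel}/\cite{cao2020efficient} analyses. For weighted graphs, the pruning has to be done with $(1+\eps)$-approximate distance scales rather than pure reachability, and I expect this to be the delicate part. Correctness of the hopbound follows the path-covering argument of \cite{cao2020efficient}, with the path length divided by $\rho$ thanks to denser sampling.

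\textbf{Parallelization and assembly.} I would parallelize the sequential construction via the black-box of \cite{blackbox}, as in \Cref{thm:par-shortcut}, obtaining $\Ot(m)$ work and $\Ot(\beta)$ span. Then I compute $(1+\eps)$-approximate SSSP by Klein--Subramanian $\beta$-hop shortest paths on $G \cup H$, in $\Ot(m)$ work and $\Ot(\beta)$ span. Finally, the reduction of \cite{rozhovn2023parallel} converts approximate SSSP to exact distances together with a shortest path tree, with only polylogarithmic overhead. This step uses polynomially bounded integer weights.
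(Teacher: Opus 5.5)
Your final assembly matches the paper's proof of this theorem. You take the hopset of \Cref{thm:par-hopset} with $\rho=(m/n)^{1/4}$ and set $\eps=O(1/\log n)$. You use Klein--Subramanian $\beta$-hop shortest paths on $G\cup H$ as the approximate oracle, and then apply the reduction of \cite{rozhovn2023parallel}.

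The gap is in the hopset construction, which carries all the content. You attribute the $1/\rho$ gain to sampling pivots $\rho^2$ times more densely, and you plan to keep the work near-linear by truncating the pivot searches. The work bound is deferred to an argument you do not supply (each vertex and edge scanned by $\Ot(1)$ searches), and this step does not go through. At the top recursion level the distance-limited balls can contain $\Theta(n)$ vertices. With the rate boosted by $\rho^2$, a single vertex can then lie in the searches of up to $\Theta(\rho^2 k\log n)$ pivots, which is exactly the $m\rho^2$ term of \cite{cao2021brief}. Restricting each search to $\danc{G}{v}{\eta_v D_r}$ and $\ddesc{G}{v}{\eta_v D_r}$ is no truncation at all, since that is already what \CFR explores. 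Any genuine truncation, or a random-order pruning of the searches, fails to give every vertex in a pivot's ball that pivot's label. The core/fringe partition, the splitting of a logical node into $O(\card{S})$ actual subpaths, and the shrinkage bound \Cref{prop:cfr-3} all depend on those labels.

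The missing idea is that the sampling rate is left unchanged. \CFR runs as is, and \pruneii additionally calls $\TDijkstra(G,v,\rho^2)$ from \emph{every} vertex $v$ of every recursive instance, not just from pivots. This costs $O(\rho^4)$ per call and $\Ot(n\rho^4)$ overall, with no delicate accounting. The hopbound improvement is then purely analytic, via the top-down potential argument of \Cref{lem:topdown-cfr} with $\phi(P')=\sqrt{\card{\drelevant{G_{P'}}{P'}{\eta_{\textrm{min}}D_r}}}$:
\begin{itemize}
\item It applies Cauchy--Schwarz and Jensen on top of \Cref{prop:cfr-3}, uses \Cref{lem:fringe-shrink} for fringe pieces, and counts secondary copies separately.
\item It shows that in expectation only $n^{1/2+o(1)}/\sqrt{x}$ logical nodes have at least $x$ relevant vertices.
\end{itemize}
Take $x=\rho^2$. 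Every node $P'=\langle v_0,\ldots,v_h\rangle$ with at most $\rho^2$ relevant vertices is shortcut with no error by the truncated search from $v_0$. This is because all of $P'$ lies in $\ddesc{G'}{v_0}{\eta_{\textrm{min}}D_r}$, a set of at most $\rho^2$ vertices. Such nodes become leaves. Since nodes have polylogarithmically many children (\Cref{obs:few-children-2}), the pruned tree has $n^{1/2+o(1)}/\rho$ nodes.

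A secondary omission concerns parallelization. The paper uses the hopset version of the blackbox (\Cref{prop:shallow-hopset}). It requires the inner construction to run in $\Ot(h_0)$ span on graphs with $(1+\eps/2)$-approximate hopbound $h_0$. Distance-limited Dijkstra is not parallel, so the paper rounds weights up to multiples of $\eps\Delta/(9h_0)$, which turns the searches into BFS of depth $\Ot(h_0/\eps)$. It also absorbs the $O(\log^2_\gamma n)$ loss in $\eps$.
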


On dense digraphs, this gives a nearly work-efficient $n^{1/4+o(1)}$ span algorithm for SSSP.
The previous best span for nearly work-efficient algorithms for SSSP was at least $\Omega(\sqrt{n})$, similar to the previous state of reachability: \cite{rozhovn2023parallel,cao2023parallel} achieve depth $n^{1/2+o(1)}$, while the min-cost flow algorithm of \cite{brand2025parallel} reduces the depth to $\Ot(\sqrt{n})$ when $m \gtrsim n^{1.5}$.

Analogously to parallel reachability, the main driver of \Cref{thm:par-sssp} is a near-linear time \emph{sequential} construction of a $\paren{\sqrt[4]{n^{3 + o(1)}/m}, \eps}$-hopset $H$ with size $\Ot(m/\eps^2)$.

\begin{restatable}[Sequential Hopset]{theorem}{seqhop}
\label{thm:seq-hopset}
    There is an $\Ot(m/\eps^2)$ time randomized algorithm that, given a 
    polynomially bounded non-negative integer 
    weighted digraph $G$, outputs with high probability a $\paren{\sqrt[4]{n^{3 + o(1)}/m}, \eps}$-hopset $H$ with size $\Ot(m/\eps^2)$.

    More generally, we show the following tradeoff which the above is a special case of.
    Let $\rho \in [\sqrt{n}]$.
    There is an $\Ot\paren{m/\eps^2 + n\rho^4}$ time randomized algorithm that, given a  
    polynomially bounded \ifthenelse{\boolean{restate}}{}{\footnote{Since we ultimately intend to compare our SSSP bounds with \cite{rozhovn2023parallel}, we assume polynomially bounded edge weights the same way they have. Without this polynomial bound, our running time and hopset size suffers a $\log(nW)$ factor where $W$ is the largest weight in $G$. This holds, too, for \Cref{thm:par-hopset}.}}
    non-negative integer weighted digraph $G$, outputs with high probability a $\paren{n^{1/2 + o(1)}/\rho, \eps}$-hopset $H$ with size $\Ot\paren{n/\eps^2 + n\rho^2}$.
\end{restatable}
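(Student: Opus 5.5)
The plan is to adapt the Cao--Fineman--Russell hopset construction, itself built on the Jambulapati--Liu--Sidford recursive framework, in the same way the paper adapts \cite{jambulapati2019parallel} for \Cref{thm:seq-shortcut}. The combinatorial (i.e.\ $\omega=3$) tradeoff of the shortcut-set theorem is $\Ot(m+n\rho^4)$ time for an $n^{1/2+o(1)}/\rho$-shortcut set of size $\Ot(n\rho^2)$. This is exactly the shape claimed here, up to the $\eps$-dependence. The first statement then follows from the general tradeoff by choosing $\rho=(m/n)^{1/4}$: then $n\rho^4=m$, the hopbound is $n^{1/2+o(1)}(n/m)^{1/4}=\sqrt[4]{n^{3+o(1)}/m}$, and the size is $\Ot(n/\eps^2+n\rho^2)=\Ot(m/\eps^2)$ because $\rho^2\le\rho^4$. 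So the work lies entirely in the general tradeoff.

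For the general tradeoff I would proceed in three steps.

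\textbf{Step 1: reduce to a bounded distance scale.} By the standard scaling/rounding reduction (as in \cite{cao2020efficient}), it suffices to handle, for each of $O(\log n)$ distance scales $d$, pairs with $\dist(u,v)\in[d,2d]$. Edge weights are rounded to multiples of $\eps d/\beta$, which turns the problem into an essentially unweighted-with-additive-error setting. This rounding is what produces the $1/\eps^2$ factors in the time and the $n/\eps^2$ term in the size.

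\textbf{Step 2: run the JLS/CFR recursion with a thresholded base case.} The recursion samples pivots, computes their ancestor and descendant balls (distance-bounded within the scale), adds hub edges from the pivots to their balls weighted by exact distances, and then recurses on the pieces of the partition into ``relevant'' subproblems. The modification is that we stop recursing once a subproblem has at most $n/\rho^2$-ish vertices, or equivalently stop after depth corresponding to pivot density $\rho$. At such a subproblem we apply a ``dense'' base procedure: sample $\rho\cdot\mathrm{polylog}$ hubs and connect hub pairs by truncated shortest-path distances, the folklore-style step. This yields the $n\rho^2$ size term. The base case is computed by truncated Dijkstra (\TDijkstra) from each hub restricted to the subproblem, giving the $n\rho^4$ term: roughly $\rho^2$ hubs per piece times $\rho^2$-dense piece edges, with pruning so that each edge is charged polylogarithmically often. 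This is where \pruneii enters, in place of \prunei, since transitive-closure matrix multiplication is unavailable for distances.

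\textbf{Step 3: bound hops and stretch.} The hop analysis follows the JLS path-covering argument. A shortest path is shortcut by pivots at each recursion level, giving $n^{o(1)}$ levels, each contributing a hop loss that telescopes to $n^{1/2+o(1)}/\rho$. The additive rounding error per hop, $\eps d/\beta$, sums to at most $\eps d$, which gives the $(1+\eps)$ stretch.

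The main obstacle is Step 2's work bound. Ensuring that the truncated Dijkstra searches, together with the $\Ot(m/\eps^2)$ rounded-graph processing, cost only $\Ot(m/\eps^2+n\rho^4)$ rather than the $m\rho^2$ of \cite{cao2021brief} requires that each edge participate in only polylogarithmically many pruned searches across all scales and levels. I would first try to mimic the pruning argument used for the shortcut set: charge each search to vertices of the subproblem, and show that any edge outside the current relevant set is pruned before being scanned.
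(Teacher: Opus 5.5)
Your reduction of the first statement to the tradeoff via $\rho=(m/n)^{1/4}$ is correct, and it is what the paper does. The gap is in Steps~2--3: nothing there explains how $n^{1/2+o(1)}$ hops become $n^{1/2+o(1)}/\rho$.

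``A hop loss per level that telescopes to $n^{1/2+o(1)}/\rho$'' is not an argument. The standard \CFR{} analysis only bounds the logical subproblem tree by $n^{1/2+o(1)}$ nodes. Cutting the recursion off early just moves the difficulty into the base case, which would then have to shortcut every surviving leaf subpath to far fewer than $\sqrt{n}/\rho$ hops.

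Your concrete base case does not provide this. The recursion controls the sizes of distance-limited balls ($\le nk^{-r}$ by \Cref{prop:cfr-1}), not the sizes of subproblems. So ``pieces of $n/\rho^2$ vertices'' and ``the depth with pivot density $\rho$'' are not the same thing. Even granting such pieces, a subpath inside one may need about $n/\rho^3$ hops before it meets one of $\rho\cdot\mathrm{polylog}$ random hubs. That already exceeds $\sqrt{n}/\rho$ whenever $\rho<n^{1/4}$. Your hub counts in Step~2 ($\rho$ versus $\rho^2$ per piece) are also inconsistent.

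The stretch argument is also not the right one. Sequential \CFR{} does no rounding. Its $\eps^{-2}$ comes from decoupling shortcutters from pivots by $L=15-2\log_k\eps$ levels. Its error comes from routing through sampled bridges, which \CFR{} controls by letting a level-$r$ bridge shortcut the level-$(r-L)$ ancestor node.

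The missing idea is to run $\TDijkstra(G',v,\rho^2)$ from every vertex $v$ of every recursive instance $G'$. You then prune the logical subproblem tree by the relevant-set size of a \emph{path node}, not by subproblem size or depth. Suppose a level-$r$ node $P'=\langle v_0,\dots,v_h\rangle$ has $\card{\drelevant{G'}{P'}{\eta_{\textrm{min}}D_r}}\le\rho^2$. Then all of $P'$ lies within distance $D_r\le\eta_{\textrm{min}}D_r$ of $v_0$. The whole $\eta_{\textrm{min}}D_r$-out-ball of $v_0$ in $G'$ lies inside that relevant set, so it has at most $\rho^2$ vertices. Hence the search from $v_0$ adds $v_0v_h$ with exact weight, and the entire subtree below $P'$ can be discarded. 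In particular, stretch is simply inherited from the \CFR{} analysis of the unpruned tree.

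The $1/\rho$ saving then comes from a top-down potential argument (\Cref{lem:topdown-cfr}). Take $\phi(P')=\sqrt{\card{\drelevant{G_{P'}}{P'}{\eta_{\textrm{min}}D_r}}}$. Combine Cauchy--Schwarz and Jensen with the \CFR{} key lemma for core children, \Cref{lem:fringe-shrink} for fringe children, and a direct count of secondary copies. This gives $\expect{\sum_{P'}\phi(P')}\le n^{1/2+o(1)}$. Hence at most $n^{1/2+o(1)}/\rho$ nodes have relevant set larger than $\rho^2$. Each node has $O(k^{3/2}\log n)$ children, and small children of large nodes are leaves, so the pruned tree has $n^{1/2+o(1)}/\rho$ nodes.

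The work bound you flag as the main obstacle then does not arise. The search is truncated by vertex count, so each call costs $O(\rho^4)$ regardless of $m$. The $\Ot(n)$ calls therefore cost $\Ot(n\rho^4)$ and add $\Ot(n\rho^2)$ edges, with no per-edge charging needed.
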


In particular, \Cref{thm:seq-hopset} on dense digraphs runs in $\Ot(n^2/\eps^2)$ time and shortcuts $(1+\eps)$-approximate shortest paths to $n^{1/4 + o(1)}$ hops, whereas the previous best hop bound was $n^{1/2 + o(1)}$ given by \cite{cao2020efficient} for all density regimes.
Similarly to shortcut sets, the tradeoff of \Cref{thm:seq-hopset} is strictly better than (the sequential version) of \cite{cao2021brief}.

We get \Cref{thm:par-sssp} by parallelizing \Cref{thm:seq-hopset}, again simply, using the new technology of \cite{blackbox} --- see \Cref{thm:par-hopset} for details on said parallelization.

Due to technical complications described later in \Cref{rmk:hopset}, not all of the ideas used for shortcut sets (\Cref{thm:seq-shortcut}) go through for constructing hopsets (\Cref{thm:seq-hopset}).
Nevertheless, the main conceptual ideas do indeed go through, allowing us to match the bound in the combinatorial version of \Cref{thm:seq-shortcut}, i.e. when $\omega = 3$ (see \Cref{table:omegas} for more on the combinatorial version).

\paragraph{Related Work on Shortcut Sets and Hopsets. }
There exists a rich literature on shortcut sets and hopsets with stronger existential bounds than the folklore construction \cite{kogan2022new,kogan2022beating,bernstein2023closing,kogan2023faster,kogan2023towards} though currently no near-linear time construction of is known, even in the sequential setting.
In particular, the breakthrough of \cite{kogan2022new} showed the existence of a $\sqrt[3]{n^2/m}$-shortcut set with size $\Ot(m)$ which, if constructible in near-linear work, would give a depth $\sqrt[3]{n^2/m} \ll n/\sqrt{m}$ algorithm for reachability.
Along another direction, \cite{bals2025greedy} recently showed a deterministic construction of an $O(\sqrt{n})$-shortcut set with size $O(n \log^*n)$ in almost-linear time.
We include a more detailed discussion of these related works in \Cref{sec:open-problems} (open problems).

\subsection{Organization}

Preliminaries are covered in \Cref{sec:preliminaries}.
A high-level overview of our main technical ideas is then provided in \Cref{sec:overview}.
The core of the parallel reachability result is in \Cref{sec:seq-shortcut} (which goes over a sequential construction of shortcut sets) and finished up in \Cref{sec:reach} where parallelization of the sequential construction is discussed.
Similarly, the core of the parallel SSSP result is in \Cref{sec:seq-hopset} (which goes over a sequential construction of hopsets) and finished up in \Cref{sec:sssp}.
Finally, we leave open problems to \Cref{sec:open-problems}.

\setboolean{restate}{true}

%% file: preliminaries.tex
\section{Preliminaries}
\label{sec:preliminaries}

\paragraph{Numbers and Sets. }
For any positive integer $z \in \IN$, we use $[z]$ to denote $\set{1,2,\ldots,z}$.
For integers $a \le b$, we use $[a,b]$ to denote $\set{a, a+1, \ldots, b}$.
We also use the notation $[a,b] + c$ to denote $[a+c, b+c]$ and $k[a,b]$ for $[ka,kb]$.

\paragraph{Graphs. }
In this paper we work with unweighted digraphs $G=(V,E)$ and also weighted digraphs $G=(V,E,w)$.
We use $V(G)$ (or when unambiguous, $V$) to refer to the vertex set of $G$ and $E(G)$ (or when unambiguous, $E$) to refer to the edge set of $G$.
An edge pointing from vertex $u$ to vertex $v$ is denoted with $uv$.
For any $V' \subseteq V$ we use $G[V']$ to denote the subgraph of $G$ induced on $V'$.
When it is clear, we use $n = \card{V}$ and $m = \card{E}$.

\paragraph{Paths and Distances. }
A path using $h$ hops is a sequence of vertices $P = \langle v_0, v_1, \ldots, v_h \rangle$ where $v_{i-1}v_i \in E$.
A subpath of $P = \langle v_0, v_1, \ldots, v_j \rangle$ is a path of the form $\langle v_a, v_{a+1}, \ldots, v_z \rangle$ with $0 \le a \le z \le h$.
If we say a path $P$ is split into $k$ subpaths $P_1, P_2, \ldots, P_k$ we mean that the $P_i$ are disjoint and $P = P_1 P_2 \ldots P_k$.
In a weighted digraph, the length of the path is $w(P) = \sum_{i} w(v_{i-1} v_i)$.
The distance $\dist(s,t)$ is the length of the shortest path from $s$ to $t$, and the hop bounded distance $\bdist{h}(s,t)$ is the length of the shortest path from $s$ to $t$ which uses at most $h$ hops; we sometimes subscript these operators with the graph we are taking distances over (e.g. $\dist_G(s,t)$).

\paragraph{Hopbounds. }
We say that a digraph $G$ has \emph{reachability hopbound} $h$ if for all $s,t \in V$ such that $s$ reaches $t$, there is a path using at most $h$ hops from $s$ to $t$.
We say that $G$ has $(1+\eps)$-approximate shortest path hopbound $h$ if, for all $s,t \in V$, we have $\bdist{h}(s,t) \le (1+\eps)\dist(s,t)$.

\paragraph{Shortcut Sets and Hopsets. }
$H \subseteq V^2$ is a \emph{$\beta$-shortcut set} of a digraph $G$ if {\bf 1)} for all $st \in H$, which we call shortcuts, there is an $st$-path in $G$ and {\bf 2)}
The reachability hopbound in $G \cup H$ is not more than $\beta$.
$H \subseteq V^2$ is a \emph{$(\beta, \eps)$-hopset} of a digraph $G$ if for all $s, t \in V$ the following is satisfied:
\begin{align*}
    \dist_G(s,t) \le \bdist{\beta}_{G \cup H}(s,t) \le (1+\eps)\dist_G(s,t).
\end{align*}

\paragraph{Reachability Relations and Relevant Vertices.}
If there is a path from $s$ to $t$ we say $s$ reaches $t$ and also $t$ is reached by $s$; we say here that $s$ and $t$ are related.
$\desc{G}{s}$ is the set of all vertices $s$ reaches in $G$ and $\anc{G}{t}$ is the set of all vertices which reach $t$ in $G$.
We denote the set of \emph{relevant vertices} to $v$ with $\relevant{G}{v} = \desc{G}{v} \cup \anc{G}{v}$.
These are extended to sets in the natural way; for example, $\desc{G}{S} = \cup_{s \in S} \desc{G}{s}$.

We also extend the above notions to bounded distance analogues.
If $\dist(s,t) \le d$ we say $s \preceq_d t$ and also that $s$ and $t$ are $d$-related, $s$ $d$-reaches $t$, and $t$ is $d$-reached by $s$.
Then, $\ddesc{G}{s}{d} = \set{t : s \preceq_d t}$ and $\danc{G}{t}{d} = \set{s : s \preceq_d t}$ and $\drelevant{G}{v}{d} = \ddesc{G}{v}{d} \cup \danc{G}{v}{d}$.
We similarly extend these to sets.

\paragraph{Path-related Vertices (Ancestors, Descendants, Bridges). }
For any path $P$, we call $\anc{G}{P} \setminus \desc{G}{P}$ its \emph{ancestors} and $\desc{G}{P} \setminus \anc{G}{P}$ its \emph{descendants} and finally $\anc{G}{P} \cap \desc{G}{P}$ its \emph{bridges}.
Similarly, we call $\danc{G}{P}{d} \setminus \ddesc{G}{P}{d}$ its \emph{$d$-ancestors} and $\ddesc{G}{P}{d} \setminus \danc{G}{P}{d}$ its \emph{$d$-descendants} and finally $\danc{G}{P}{d} \cap \ddesc{G}{P}{d}$ its \emph{$d$-bridges}.

\paragraph{Fast Matrix Multiplication. }
We use the following result for parallel matrix multiplication, which states that we can multiply matrices with work matching the current best sequential time complexity of \cite{alman2025more} and span only $O(\log n)$.
\begin{proposition}[Paraphrasing (with some modification) of Theorem 5.7 Part 1 in \cite{jaja1992parallel}]
\label{prop:par-mm}
    Let $A$ and $B$ be two $n$ by $n$ matrices with entries in a ring $R$ with operations $+,\times$.
    The matrix $AB$ can be computed by a parallel algorithm with $O(\log n)$ span and $O(M(n))$ work, where $M(n)$ is the best known sequential bound for computing $AB$ over $R$ by an algorithm that can be written as an algebraic circuit with $+,\times$ gates.
\end{proposition}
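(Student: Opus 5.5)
The statement to prove is \Cref{prop:par-mm}. Let $\mathcal{C}$ be a sequential matrix multiplication algorithm over $R$ that is an algebraic circuit with $+,\times$ gates and has $M(n)$ gates. The goal is an $O(\log n)$-span evaluation of the very same bilinear computation, with $O(M(n))$ work and not $O(M(n) n^{\delta})$. The plan follows the proof of Theorem 5.7 in \cite{jaja1992parallel}: use the recursive, bilinear structure that every such fast algorithm has, rather than evaluating an arbitrary circuit of size $M(n)$, whose depth could be large.

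\textbf{Step 1: normalize to a bilinear recursive algorithm.} A $+,\times$ circuit computing $AB$ over a ring can be converted, with constant-factor overhead, into a bilinear algorithm. This is Strassen's normalization: take homogeneous components and keep only the degree-$(1,1)$ part in the entries of $(A,B)$. The resulting algorithm computes $r$ products of linear forms in $A$ and in $B$, then takes linear combinations. The best-known bounds (those of \cite{alman2025more} and earlier) are of exactly this form. Each is obtained by recursively applying a fixed base bilinear algorithm for $k\times k$ matrices with rank $r$, giving $M(n)=O(n^{\log_k r})$ once $n$ is padded to a power of $k$. Padding costs only a constant factor, since $n \le k\cdot(\text{next power})$ and $k$ is a constant. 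I would state the proposition for $M(n)$ of this recursive form; this is the ``some modification'' in the paraphrase.

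\textbf{Step 2: parallel recursion.} At the top level, the $r$ linear forms in the $(n/k)\times(n/k)$ blocks of $A$ and of $B$ are formed in parallel. Each form is a fixed-size combination of blocks, so this takes $O(n^2)$ work and $O(1)$ span. The $r$ resulting subproblems are then solved in parallel recursively. Finally the output blocks are formed as fixed linear combinations of the $r$ products, again with $O(n^2)$ work and $O(1)$ span.

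\textbf{Step 3: cost recurrences.} Work satisfies $W(n)=rW(n/k)+O(n^2)$. Span satisfies $D(n)=D(n/k)+O(1)$, so $D(n)=O(\log n)$. Since the top-level combination work is $n^2$ with $r>k^2$, the recursion tree is leaf-dominated and $W(n)=O(n^{\log_k r})=O(M(n))$.

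\textbf{Main obstacle.} The main difficulty is Step 1, justifying that ``best known sequential bound'' means a recursive bilinear algorithm of fixed base size, so that the recurrence matches $M(n)$ exactly with no $n^{\delta}$ loss. For a general circuit I would invoke the Strassen/Bini result that the exponent is achieved by bilinear rank. I would also note that the laser-method bounds such as \cite{alman2025more} are limits: each fixed $\delta$ yields a concrete base algorithm attaining $\omega+\delta$. Therefore $M(n)$ should be read as the work of whichever concrete circuit is being used, and the recursion reproduces precisely that circuit's gate count up to a constant factor.
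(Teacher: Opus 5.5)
The paper gives no proof of this proposition. It is imported from \cite{jaja1992parallel} with the work bound restated in terms of $M(n)$, so there is no competing argument to compare against. Your argument is the standard justification behind such a citation, and it is correct. Take a fixed rank-$r$ bilinear scheme for $k\times k$ matrices. Forming the $r$ block linear forms and recombining the $r$ products costs $O(rn^2)=O(n^2)$ work and $O(1)$ span per level, so $D(n)=D(n/k)+O(1)=O(\log n)$. Since $r>k^2$, the recurrence $W(n)=rW(n/k)+O(n^2)$ is leaf-dominated and gives $O(n^{\log_k r})$. Padding to a power of $k$ costs only a constant factor.

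Be precise about what this establishes. All of the content lies in the hypothesis that $M(n)$ is attained by recursively applying one fixed-size bilinear scheme. Your Step 1, the Strassen-style bilinearization of an arbitrary $+,\times$ circuit, does no work in the proof. A bilinear algorithm for $n\times n$ matrices obtained this way carries no block-recursive structure. Its linear parts may also have large depth, and general depth-reduction results do not reach $O(\log n)$ depth with only constant-factor size overhead.

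So you do not prove the literal reading ``any circuit of size $M(n)$''. Restricting to recursive bilinear schemes is the right reading of the ``modification'', and it is all the paper needs, since it only plugs in $\omega<2.371339$. The cleanest way to show the restriction loses nothing is the identity $\omega=\inf_k \log_k \mathrm{rk}(\langle k,k,k\rangle)$, where $\mathrm{rk}$ is tensor rank. Any strict upper bound on $\omega$ is then witnessed by a single decomposition with $\log_k r$ below that bound. This lets you drop both the normalization step and the discussion of laser-method limits. The one caveat is that over a general ring $R$ the decomposition's coefficients must be usable in $R$, which is implicit in defining $M(n)$ ``over $R$''.
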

That is to say, throughout this paper $\omeganc < 2.371339$ (even in the parallel setting) which is established by \cite{alman2025more}, whose (sequential) algorithm satisfies the premise of \Cref{prop:par-mm}.

\paragraph{Probability. }
We use the following one-sided Chernoff bound for sums of independent $\set{0,1}$ random variables $X_i$ with mean $\mu = \expect{\sum_i X_i}$:
\begin{align*}
    \prob{X \ge \paren{1 + \delta}\mu} \le e^{-\delta^2\mu/(2 + \delta)} \textrm{ for } \delta \ge 0.
\end{align*}

%% file: overview.tex
\section{High-Level Overview}
\label{sec:overview}

Here we give a sketch of the main ideas in this paper.
To obtain our result, we introduce a new pruning step into the framework of \cite{jambulapati2019parallel,cao2020efficient}, respectively called \prunei and \pruneii.
The main technical contribution is actually in the analysis of this.
We sketch a new top-down analysis of \cite{jambulapati2019parallel,cao2020efficient} which naturally suggests how to apply the pruning steps, leading to the improvements below.
Since the core insights can be found within our reachability result, which is much simpler, we omit any discussion of SSSP in this overview.

\subsection{Summary of Prior Work}
By using standard techniques, the reachability problem on a digraph $G$ is reduced to constructing a shortcut set of $G$; indeed, if one can construct a $\beta$-shortcut set $H$ in $\Ot(m)$ work and $D$ span, then one gets a parallel reachability algorithm by running a parallel BFS from the source $s$ on $G \cup H$, which takes an additional $\Ot(m + \card{H})$ work and $\beta$ span (so, in sum, an $\Ot(m + \card{H})$-work and $O(D + \beta)$-span algorithm).
We thus focus on the problem of constructing a shortcut set.

\cite{jambulapati2019parallel} refines the breakthrough of \cite{fineman2018nearly} to give an $\Ot(m)$ work algorithm which constructs an $n^{1/2 + o(1)}$-shortcut set in span $n^{1/2 + o(1)}$.
As a precursor to this, they give a sequential $\Ot(m)$ time algorithm which constructs an $n^{1/2 + o(1)}$-shortcut set with $\Ot(n)$ size, henceforth called the \JLS shortcut set.
We first turn our attention to this sequential construction.

Let us assume $G$ is a directed acyclic graph (DAG).
Suppressing some details (e.g. parameter settings) that are inessential to an overview, the \JLS algorithm can be described recursively in the following way, where $G$ is the level $0$ recursive instance.
On a level $r$ recursive instance $G[V']$, where $V' \subseteq V$, some pivot vertices $S \subseteq V'$ are selected randomly.
For each $v \in V'$ and $p \in S$, if $v$ reaches $p$ then add $vp$ to the shortcut set, and if $p$ reaches $v$ then add $pv$ to the shortcut set.
Next, we partition $V' = V'_1 \sqcup V'_2 \sqcup \ldots$ into an equivalence relation based on the reachability relation to $S$ in $G[V']$; that is, for all $u,v \in V'$ and $i$, we have $u,v \in V'_i$ iff the set of pivots in $S$ that reach $u$ and $v$ are equal and the set of pivots in $S$ that $u$ and $v$ reach are equal.
We then set each $G[V'_i]$ to be a level $r+1$ recursive instance.
A complete description of \JLS is given later in \Cref{sec:seq-shortcut}.

For appropriately chosen parameters, the above algorithm runs in $\Ot(m)$ time and yields an $n^{1/2 + o(1)}$-shortcut set with near-linear size.

\paragraph*{Framework for Bounding the Diameter. }
To bound the diameter of \JLS, we fix an arbitrary path $P$ and show that $P$ is shortcut to length $n^{1/2 + o(1)}$.
First, observe that at any fixed recursion level, say $r$, the path $P$ is split into contiguous subpaths $P_1, P_2, \ldots$ each belonging to distinct level $r+1$ recursive instances $G_1, G_2, \ldots$ respectively.
To see this, note that if a pivot vertex reaches (resp.~is reached by) any two vertices in $P$, then it reaches (resp.~is reached by) every vertex on the subpath between said two vertices in $P$; thus, if $x$ and $y$ are in the same subproblem $G_i$, then so are all the vertices on any path from $x$ to $y$. (See \Cref{obs:s-splitting} for the formal proof.)

We use this observation to define the \emph{subproblem tree of $P$}, which tracks how $P$ is shortcut.
(i) The root node\footnote{Here and throughout, we use ``node'' to refer to the nodes of subproblem trees, which are analysis tools which our algorithms are not aware of, and ``vertex'' to refer to the vertices of digraphs, which are the objects our algorithms interface with.} of this tree, at level $0$, is $P$; (ii) If a bridge of a level $r$ node $P'$ is selected in the recursion level $r$ instance $G'$ (in which $P'$ is contained) then $P'$ is a leaf node and, otherwise, $P'$ has level $r+1$ children nodes $P'_1, P'_2, \ldots$ corresponding to how $P'$ is split.
The reader should observe that 
each leaf $P'$ is shortcut to $O(1)$ hops (moving through the bridge of $P'$) and, hence, the number of nodes in the subproblem tree upper bounds the number of hops $P$ is shortcut to (up to a constant factor).

\cite{jambulapati2019parallel} shows that the number of nodes in the subproblem tree of any path $P$ is bounded above by $n^{1/2+o(1)}$.
Crucially, they use the following key lemma within a bottom-up argument by induction:
\begin{align*}
    \expect{\sum_{i = 1}^{\card{S} + 1} \card{\relevant{G'_i}{P'_i}} \;\middle|\; \card{S}} \le \frac{2}{\card{S} + 1} \card{\relevant{G'}{P'}},
    \tag{\Cref{prop:jls-2}}
\end{align*}
where $P'$ (in subproblem $G'$) is an arbitrary internal node in the subproblem tree and $S$ are the pivots selected in $\relevant{G'}{P'}$ that split $P'$ into subpaths $P'_1, P'_2, \ldots, P'_{\card{S}+1}$ in subproblems $G'_1, G'_2, \ldots, G'_{\card{S}+1}$ respectively.
Recall (from \Cref{sec:preliminaries}) that $\relevant{G'}{P'}$ are the vertices in $G'$ that can reach or are reached by some vertex in $P'$.

We suggest that \Cref{prop:jls-2} is best interpreted in the following way.
Intuitively, splitting a path $P'$ into $P'_1, P'_2, \ldots, P'_{\card{S}+1}$ is detrimental to shortcutting $P'$ (and hence $P$) since the algorithm will no longer be able to add any shortcut between $P'_i$ and $P'_j$ for $i \neq j$.
For example, if $\card{S} \gg n^{1/2+o(1)}$, then there would no longer be any hope of shortcutting $P$ to $n^{1/2+o(1)}$ hops.
On the other hand, \Cref{prop:jls-2} shows that splitting a path still offers progress: the number of relevant vertices to $P'$ drops significantly.
Since $P' \subseteq \bigcup_{i = 1}^{\card{S} + 1} \relevant{G'_i}{P'_i}$ and each $v \in P'$ is a bridge of some $P'_i$, the fraction of bridges increases\footnote{This is not entirely true, since bridges of $P'$ may cease to be bridges of $P'_i$ for any $i$.}.
It is consequently more likely to select bridges as pivots and shortcut the subpaths that $P'$ is split into to $O(1)$ hops each.

\subsection{Summary of Our Main Result}

We now sketch a new proof of the diameter bound of \JLS, which reveals an opportunity to lessen the diameter of the shortcut set to $\beta \ll n^{1/2 + o(1)}$ at the expense of having more edges in the shortcut set.
Thereafter, we will show concretely how to exploit this opportunity by adding just one line to \JLS.

\paragraph*{A New, Top-Down, More General Analysis of the \JLS Diameter Bound. }
For simplicity's sake, let us assume something stronger than \Cref{prop:jls-2}: that the typical scenario occurs \emph{deterministically}.
\begin{align*}
    \sum_{i = 1}^{\card{S} + 1} \card{\relevant{G'_i}{P'_i}} \le \frac{2}{\card{S} + 1} \card{\relevant{G'}{P'}}.
    \tag{Idealized \Cref{prop:jls-2}}
\end{align*}
Assuming Idealized \Cref{prop:jls-2} above does not change any structure of our argument and only removes clutter related to formalization of probabilistic guarantees.

Given this, we first bound the number of nodes in the subproblem trees.
\begin{claim}
\label{claim:1}
    For all $x$, there are at most $n^{1/2 + o(1)}/\sqrt{x}$ nodes $P'$ in the subproblem tree where $\card{\relevant{G'}{P'}} \ge x$.
\end{claim}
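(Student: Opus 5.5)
We argue level by level in the subproblem tree and then sum over levels. Write $N_r(x)$ for the number of level-$r$ nodes $P'$ with $\card{\relevant{G'}{P'}} \ge x$, and $R_r$ for the total $\sum_{P' \text{ at level } r} \card{\relevant{G'}{P'}}$. Since $\relevant{G}{P} \subseteq V$ we have $R_0 \le n$. Markov-style counting gives $N_r(x) \le R_r/x$. So it suffices to control $R_r$ and the number of nodes at each level, and to trade these two quantities off.

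\textbf{Step 1 (one level).} Fix an internal node $P'$ at level $r$ that is split by $k = \card{S}$ pivots into $k+1$ children. Idealized \Cref{prop:jls-2} gives children relevant-mass at most $\frac{2}{k+1}\card{\relevant{G'}{P'}}$. Suppose the node $P'$ has relevant mass $y \ge x$ and has $c = k+1$ children. Then each child that still has mass $\ge x$ satisfies $x \le 2y/c$. Hence $c \le 2y/x$, and the children's total mass is at most $2y/c$. Define the potential $\Phi(P') = \sqrt{\card{\relevant{G'}{P'}}}$. The key inequality to verify is that the number of children with mass $\ge x$, weighted by their potential, does not blow up:
\[
\sum_{i} \sqrt{\card{\relevant{G'_i}{P'_i}}} \le \sqrt{c}\,\sqrt{\sum_i \card{\relevant{G'_i}{P'_i}}} \le \sqrt{c}\sqrt{2y/c} = \sqrt{2}\,\sqrt{y}.
\]
This follows by Cauchy--Schwarz. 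So the sum of $\sqrt{\text{mass}}$ over a level grows by at most a factor $\sqrt 2$ per level, giving $\sum_{P'\text{ at level } r}\sqrt{\card{\relevant{G'}{P'}}} \le 2^{r/2}\sqrt{n}$.

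\textbf{Step 2 (count).} Every node with mass $\ge x$ contributes at least $\sqrt{x}$ to this sum. Hence $N_r(x) \le 2^{r/2}\sqrt{n}/\sqrt{x}$. The recursion depth of \JLS is chosen to be $o(\log n)$ levels (equivalently, the constant $2$ is replaced by a parameter-dependent factor with total product $n^{o(1)}$). Summing over all levels therefore gives at most $n^{1/2+o(1)}/\sqrt{x}$ nodes, as claimed.

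\textbf{Main obstacle.} The delicate point is the per-level blowup. Cauchy--Schwarz loses a constant factor per level, so I must confirm that the number of recursion levels is small enough, namely $o(\log n)$ or $O(\log n/\log\log n)$, that $2^{O(\text{depth})} = n^{o(1)}$. I will also check that restricting to children with mass $\ge x$ costs nothing: children below threshold are simply dropped from the count, which only helps. If the depth were $\Theta(\log n)$, I would instead exploit the fact that \Cref{prop:jls-2} gives $\frac{2}{k+1}$ rather than $\frac{1}{k+1}$. I would then note that the $\sqrt{2}$ loss occurs only when $k\ge 1$, that is, at genuinely split nodes. Bounding the number of such splitting levels along any root-to-leaf chain would then absorb the loss.
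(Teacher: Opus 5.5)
Your proposal is correct and follows the paper's proof essentially step for step: the potential $\sqrt{\card{\relevant{G'}{P'}}}$, Cauchy--Schwarz combined with Idealized \Cref{prop:jls-2} to get at most $\sqrt{2}$ growth per level, the base case $\phi(P)\le\sqrt{n}$, the $O(\log n/\log\log n)$ depth (which does hold since $k=\Theta(\log n)$), and the fact that each node of mass at least $x$ contributes at least $\sqrt{x}$. Your side remarks are never used and can be dropped: the linear bound $N_r(x)\le R_r/x$, the inference $c\le 2y/x$ (which only holds if some child has mass at least $x$), and the contingency for $\Theta(\log n)$ depth.
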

\begin{proof}[Proof Sketch]
Define the function $\phi(P') = \sqrt{|\relevant{G'}{P'}|}$.
We will show $\sum_{\text{tree nodes } P'} \phi(P') = n^{1/2 + o(1)}$ and from there the bound follows since each $P'$ with $\card{\relevant{G'}{P'}} \ge x$ contributes $\phi(P') \ge \sqrt{x}$ to the sum.

To see that $\sum_{P'} \phi(P') = n^{1/2 + o(1)}$, observe that $\phi(P) = O(\sqrt{n})$; the subproblem tree has $O(\log n / \log\log n)$ levels (we have not justified this in the overview, see \Cref{sec:seq-shortcut}); and
\begin{align*}
    \sum_{i}^{\card{S}+1} \phi(P'_i) 
    =
    \sum_{i}^{\card{S}+1} \sqrt{\card{\relevant{G'_i}{P'_i}}} 
    \le
    \sqrt{(\card{S}+1)\sum_{i}^{\card{S}+1}\card{\relevant{G'_i}{P'_i}}}
    \le
    \sqrt{2} \phi(P')
\end{align*}
where the first inequality is by Cauchy-Schwarz and the second is by Idealized \Cref{prop:jls-2}.
\end{proof}

By plugging in $x = 1$, we recover the $n^{1/2 + o(1)}$ diameter bound of \JLS.
Even more, we may plug in other values of $x$ to get more refined upper bounds.
Noting that the fanout of the subproblem tree is polylogarithmic (another detail omitted here), this suggests the following type of strategy: shortcut all nodes $P'$ with $\card{\relevant{G'}{P'}} < x$ to $O(1)$ hops.
We can then consider the \emph{pruned subproblem tree of $P$} with all nodes $P'$ where $\card{\relevant{G'}{P'}} < x$ removed.
By \Cref{claim:1}, the number of nodes in the pruned tree is at most $n^{1/2 + o(1)}/\sqrt{x}$ up to polylogarithmic factors, which upper bounds the number of hops to which $P$ is shortcut.

\paragraph*{A Simple Combinatorial Improvement (Warmup). }
Assume the input digraph is dense -- that is, it has $\Omega(n^2)$ edges.
Denote $\textrm{BFS}_{\sqrt{n}}(G',v)$ as the first $\sqrt{n}$ vertices found in a BFS in $G'$ starting from $v \in V(G')$.

Here is a simple modification to the \JLS algorithm.
For each recursive subproblem $G'$ and each $v \in V(G')$, add all the following edges to the shortcut set:
edges from $v$ to $\textrm{BFS}_{\sqrt{n}}(G',v)$.

Note that $\textrm{BFS}_{\sqrt{n}}(G',v)$ runs in $O(n)$ time, totaling up to $\Ot(n^2)$ time over all the calls made in all recursive subproblems.
This additional step thus gives a negligible overhead to the running time of \JLS.

Crucially, for any node $P' = \langle v_0,v_1,\ldots,v_h\rangle$ in the subproblem tree with $\card{\relevant{G'}{P'}} < \sqrt{n} = x$, the edge $v_0 v_h$ will be added to the shortcut set since $v_h \in \textrm{BFS}_{\sqrt{n}}(G',v_0)$.
Namely, $P'$ is shortcut to $O(1)$ hops and is removed from the pruned subproblem tree which, accordingly, has $O(n^{1/4 + o(1)})$ nodes by \Cref{claim:1} (which, recall, upper bounds the length $P$ is shortcut to).

In summary, this simple modification to \JLS gives a near-linear time construction of an $O(n^{1/4 + o(1)})$-shortcut set on dense digraphs $G$.

\paragraph*{A Stronger but Non-Combinatorial Improvement. }
Our main result for reachability does not use the combinatorial improvement above, but instead utilizes fast matrix multiplication.
Here we just briefly summarize the idea and leave the details to \Cref{sec:seq-shortcut}.
Instead of computing $\textrm{BFS}_{\sqrt{n}}(G',v)$ from each $v \in V(G')$ in every recursive subproblem $G'$ as in our warmup, we get even more of a speed up from loosely speaking computing shortcuts in an ``all-pairs'' fashion (i.e. transitive closures).
This is easier said than done since one call to a transitive closure algorithm is more expensive than one BFS call; we must thus be judicious of when and on what vertices to call the transitive closures (instead of from every vertex like we did with BFS).

When $\card{\relevant{G'}{P'}}$ is small (at the threshold where we wish to prune away the node $P'$ from the subproblem tree of $P$), one might try to call a transitive closure on a ball centered at $v$ for some $v \in P'$.
The flaw in this idea is that the choices made by the algorithm must be oblivious to $P'$.
A more oblivious approach would be to call a transitive closure on $\relevant{G'}{p}$ for some pivot $p \in \relevant{G'}{P'}$.
However, $\card{\relevant{G'}{p}}$ is not necessarily the same as $\card{\relevant{G'}{P'}}$ since $p$ may not be a vertex in $P'$.
In fact, $p$ may reach or be reached by many vertices that have no reachability relation to $P'$, and hence $\card{\relevant{G'}{p}}$ might be very large despite $\card{\relevant{G'}{P'}}$ being very small; calling a transitive closure on $\card{\relevant{G'}{p}}$ may then be too expensive!

Instead, we do the following.
For every pivot $p$, if $\card{\relevant{G'}{p}}$ is small enough, we add all edges in the transitive closure of $\relevant{G'}{p}$ to the shortcut set.
This can be done in $\Ot(\card{\relevant{G'}{p}}^{\omega})$ time by repeatedly squaring the adjacency matrix of $G'[\relevant{G'}{p}]$.
By virtue of $\card{\relevant{G'}{p}}$ being small, we have control over the running time of this improvement.
Moreover, since all edges in the transitive closure of $\relevant{G'}{p}$ are added to the shortcut set, $p$ would not contribute any children to the node $P'$.
But would it allow us to prune the node $P'$ completely when $\card{\relevant{G'}{P'}}$ is small?
Not quite.
We show in \Cref{sec:seq-shortcut} that it allows us to prune nodes in the subtree rooted at $P'$ so that the pruned subtree has $O(\log n)$ nodes;
loosely, the argument is as follows: for any node $P''$ in this subtree, if all $p \in \relevant{G''}{P''}$ have small $\card{\relevant{G''}{p}}$ then $P''$ has at most one child and, otherwise, there is some $p \in \relevant{G''}{P''}$ with large $\card{\relevant{G''}{p}}$, certifying that it is very unlikely for any vertex in $\relevant{G''}{P''}$ to be sampled as a pivot.
This yields the $n^{0.136}$-shortcut set in near-linear time for dense graphs.

\paragraph*{Parallelization. }
Finally, we parallelize the shortcut set construction by using a blackbox framework provided by \cite{blackbox}.
Morally, the framework says that (up to some fudging) we need only guarantee that a parallel algorithm for computing a $\beta$-shortcut set runs in $\Ot(\beta)$ span on DAGs with reachability hopbound $\Ot(\beta)$;
this is enough to show that a $\beta$-shortcut set runs in $\Ot(\beta)$ span on any digraph,
regardless of its diameter.
It is easy to show that our sequential construction for $\beta$-shortcut sets run in $\beta$ span on DAGs with reachability hopbound $\Ot(\beta)$, since BFS calls will run in span $\beta$.

%% file: shortcut-set.tex
\section{Sequential Shortcut Set Construction}
\label{sec:seq-shortcut}

In this section we prove \Cref{thm:seq-shortcut} which pertains to a sequential construction of shortcut sets.
\Cref{subsec:jls} gives an overview of the \JLS shortcut set and outlines a strategy for how to improve upon it.
We then go over the details in \Cref{subsec:prune-1} and finish things up in \Cref{subsec:shortcut-analysis}.

\input{jls}
\input{prune-1}
\input{shortcut-analysis}

%% file: jls.tex
\subsection{The JLS Shortcut Set, and a Strategy for Improved Bounds}
\label{subsec:jls}

The \JLS shortcut set, which is a refinement of the construction of Fineman~\cite{fineman2018nearly}, can be summarized at a high level as follows.
We first assume the input $G$ is a DAG (since it is easy to compute strongly connected components (SCCs) in a digraph in linear time and, adding $O(n)$ edges to the shortcut set, shortcut each SCC to two hops).
There is a global parameter $k$, which we can think of as $\log n$, that controls a sampling rate and recursion depth $\log_k(n)$.
The construction is found via a recursive algorithm where at recursion level $r$, around $k^{r+1}$ pivot vertices from the base graph $G$ are selected uniformly at random (these pivots are divided up possibly unequally among all level $r$ subproblems on graphs $G[V_1], G[V_2], \ldots$).
Let us focus on one level $r$ subproblem, say, $G[V_1]$.
Reachability within $G[V_1]$ is computed for the pivots belonging to $V_1$ and based on this the following actions are made:
\begin{itemize}
    \item For each pivot $p$, add the edges $vp$ (resp. $pv$) to the shortcut set if $v$ reaches (resp. is reached by) $p$.
    \item For each $v \in V_1$, give it a set of labels based on its reachability relation to the pivots.
        That is, for all pivots $p \in V_1$ and $v \in V_1$ (i) if $p$ reaches $v$, give $v$ the label ``$p$ reaches me''; (ii) if $v$ reaches $p$, give $v$ the label ``I reach $p$''; (iii) otherwise give $v$ the label ``I have no relation to $p$''.
    \item Recurse into $V_{1,1}, V_{1,2}, \ldots \subseteq V_1$ where the $V_{1,i}$'s are an equivalence class in $V_1$ using the labeling above.
        That is, $x, y \in V_{1,i}$ iff $x$ and $y$ have the exact same labels.
\end{itemize}
The aggregate of added shortcuts forms the \JLS shortcut set.
Below we give a formal description of the algorithm.

\begin{tbox}
    \algname{\JLS}\\
    \textbf{Global Parameters:} $k$ is a global parameter to be fixed later. $n$ is the number of vertices in the base input graph (it thus remains fixed through all recursive calls). \\
    \textbf{Input:} A DAG $G = (V, E)$ and a recursion level $r$.\\
    \textbf{Output:} A shortcut set $H \subseteq V^2$.
    \begin{enumerate}
        \item Randomly and independently sample, with probability $p_r \gets \frac{100k^{r+1}\log n}{n}$, each $v \in V$ as a pivot.
            Let $S$ be the set of pivots.
        \item For each $p \in S$, compute $\overset{\textcolor{gray}{\textrm{reach }p}}{\anc{G}{p}}$, $\overset{\textcolor{gray}{p \textrm{ reaches}}}{\desc{G}{p}}$
        and:
            \begin{itemize}
                \item Add $vp$ to $H$ for all $v \in \anc{G}{p}$.
                \item Add $pv$ to $H$ for all $v \in \desc{G}{p}$.
                \item Add $\vanc{p}$ label to all $v \in \anc{G}{p}$.
                \item Add $\vdesc{p}$ label to all $v \in \desc{G}{p}$.
            \end{itemize}
        \item $V_1, V_2, \ldots, V_t \gets$ Partition of all $v \in V$ such that $x, y \in V_i$ iff $x$ and $y$ have the exact same labels.
        \item Output $H \cup \paren{\bigcup_{i \in [t]} \JLS \textrm{ on } G[V_i], r+1}$.
    \end{enumerate}
\end{tbox}

\begin{proposition}[Paraphrasing Theorem 5 from \cite{jambulapati2019parallel}]
\label{prop:jls-main}
    \JLS runs in time $\Ot(mk)$ and produces an $n^{1/2 + O(1/\log k)}$-shortcut set of size $\Ot(nk)$ with probability at least $1 - O(n^{-10})$.
\end{proposition}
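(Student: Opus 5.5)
We prove the three claims of \Cref{prop:jls-main} separately: running time and size, which follow from counting pivots per level, and the diameter bound, which follows the top-down potential argument of the overview with \Cref{prop:jls-2} in expected form. Throughout, the recursion depth is $L=\log_k n$. At level $L$ we have $p_r\ge 1$, so every vertex is a pivot, and every surviving subproblem then becomes a singleton or has every path shortcut to two hops.

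\textbf{Size and time.} Fix a vertex $v$ and a level $r$. The edges added at $v$ at level $r$ correspond to pivots in $\relevant{G'}{v}$, where $G'$ is the level-$r$ subproblem containing $v$. The key claim is that $\card{\relevant{G'}{v}} \le \Ot(n/k^{r})$ with high probability. The reason is that a vertex $u$ stays in a common subproblem with $v$ and related to $v$ only if no earlier pivot separated them. I would argue this by ordering $\relevant{}{v}$ by reachability, using a standard ``the first sampled element in a random order cuts the set'' argument: if the relevant set at level $r-1$ had size more than $\Ot(n/k^{r})$, a pivot sampled at rate $p_{r-1}\approx k^{r}\log n/n$ would land in it with high probability and split it. Given this bound, the expected number of pivots in $\relevant{G'}{v}$ is $p_r\cdot \Ot(n/k^{r})=\Ot(k)$, and by Chernoff it is $\Ot(k)$ with high probability. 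Summing over $v$ and the $L$ levels gives $\card{H}=\Ot(nk)$. For time, each pivot's BFS in $G'$ costs $O(\card{E(G'[\relevant{G'}{p}])})$. Charging each edge $xy$ to the pivots related to $x$ gives $\Ot(k)$ charges per edge per level, so the total is $\Ot(mk)$.

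\textbf{Diameter.} Fix a path $P$ in the DAG and build its subproblem tree. By \Cref{obs:s-splitting}, a node $P'$ at level $r$ splits into $\card{S\cap \relevant{G'}{P'}}+1$ contiguous subpaths, and it becomes a leaf if a bridge is chosen. Define the potential $\Phi_r=\sum_{P'\text{ at level } r}\sqrt{\card{\relevant{G'}{P'}}}$. By Cauchy--Schwarz together with \Cref{prop:jls-2}, $\expect{\Phi_{r+1}\mid \text{level } r}\le \sum_{P'}\expect{\sqrt{(\card{S}+1)\cdot 2\card{\relevant{G'}{P'}}/(\card{S}+1)}}$, which is at most $\sqrt2\,\Phi_r$ after applying Jensen to the conditional expectation. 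Iterating over the levels gives $\expect{\sum_r\Phi_r}\le L\,2^{L/2}\sqrt n=n^{1/2+O(1/\log k)}$. Every node contributes at least $1$ to its level's potential, so this sum bounds the number of tree nodes. Each leaf is shortcut to two hops through its bridge pivot, and consecutive subpaths are joined by a single original edge. Hence $P$ is shortcut to $O(\#\text{nodes})$ hops.

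\textbf{Whp and the main obstacle.} To get the high-probability statement, I would apply Markov to get a constant-probability bound per path, then amplify. One option is to note that the expectation bound holds level by level with fresh randomness. The other is to run $O(\log n)$ independent copies and take the union of their shortcut sets, with a union bound over the $n^2$ endpoint pairs; this costs only $\log n$ factors in size and time, which are absorbed by $\Ot$. The main obstacle is proving \Cref{prop:jls-2} itself. I would attempt it by ordering the pivots and observing that a vertex $u\in\relevant{G'}{P'}$ remains relevant to the child $P'_i$ only if no pivot lies strictly between $u$'s position and the interval $P'_i$ in the induced linear structure. In that structure, each $u$ is related to an interval of $P'$. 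With $\card{S}$ uniform pivots, the expected number of children to which $u$ stays relevant is $O(1/(\card{S}+1))$ times its share, giving the factor $2/(\card{S}+1)$ by a random-permutation argument on the $\card{S}+1$ gaps.
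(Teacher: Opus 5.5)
Your argument for the proposition itself is correct and follows the paper's route. Size and time come from the $nk^{-r}$ ball-shrinking bound (\Cref{prop:jls-1}) plus Chernoff, giving $\Ot(k)$ related pivots per vertex and per edge per level. The diameter comes from the top-down potential $\sqrt{\card{\relevant{G'}{P'}}}$ with Cauchy--Schwarz, Jensen and \Cref{prop:jls-2}, as in \Cref{lem:topdown-jls}. High probability comes from Markov, a union of $\Theta(\log n)$ independent runs, and a union bound over the $n^2$ endpoint pairs, as in the proof of \Cref{thm:seq-shortcut-tradeoff-side1}. Two small corrections: only the repetition route amplifies, since ``fresh randomness per level'' does not by itself turn an expectation bound into a high-probability one. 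And in the local step, Jensen must be applied before invoking \Cref{prop:jls-2}, because that lemma bounds a conditional expectation, not the quantity inside the square root.

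The part that would not work is your suggested proof of \Cref{prop:jls-2}. The paper does not prove that lemma; it imports it from \cite{jambulapati2019parallel}, so using it as a black box is fine, but your sketch of it fails for two reasons.

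First, the $G'_i$ are vertex-disjoint, so every vertex is counted at most once in $\sum_i \card{\relevant{G'_i}{P'_i}}$. Confining a vertex to one ``gap'' therefore gains nothing. The factor $2/(\card{S}+1)$ must come from most relevant vertices being cut off from \emph{every} child.

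Second, which ancestor pivots cut off an ancestor $u$ of $P'$ is not a matter of position on a line. A pivot that $u$ reaches always cuts $u$ off, a pivot that reaches $u$ never does, and for unrelated pairs the one whose first reached vertex on $P'$ is weakly earlier cuts off the other. This relation can contain directed cycles. Take $P'=\langle x_1,x_2,x_3\rangle$ with extra edges $cx_1$, $cb$, $bx_3$, $ax_2$: then $a$ cuts off $b$, $b$ cuts off $c$, and $c$ cuts off $a$. So there is no induced linear order. Individual survival probabilities also need not be $O(1/(\card{S}+1))$: with only ancestor pivots, an ancestor reached by every other ancestor survives unless it is itself sampled.

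The correct argument is the aggregate one in \Cref{app:sketch-key-lemma}. A tournament on $N$ vertices has at most $2d+1$ vertices of in-degree at most $d$. Hence a uniformly random set of $\card{S}$ vertices leaves only $O(N/(\card{S}+1))$ vertices undominated.
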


The above is proved in \cite{jambulapati2019parallel}, with the running time and size bounds following quite easily from \Cref{prop:jls-1} (stated later) and Chernoff bounds.
Modulo a few key statements, we will give an alternate (and more extendable) proof of the $n^{1/2 + O(1/\log k)}$ diameter bound in \Cref{subsec:prune-1}.
For now, let us try to better understand the \JLS shortcut set at a high level and, from this, outline a strategy for how to get a better shortcut set.

\paragraph{The Subproblem Tree. }
The way the diameter of the shortcut set is upper bounded comes from \cite{fineman2018nearly}.
There, an arbitrary path $P$ is selected for the sake of analysis.
Notice first that $P$ is split into contiguous subpaths in the recursive calls and the same is true of its subpaths, and so on.
More specifically, observe the following (shown in \cite{jambulapati2019parallel}).
\begin{observation}
\label{obs:s-splitting}
    Let $P'$ be a subpath of $P$, contained in a level $r$ recursive instance $G'$.
    If $z - 1$ of the non-bridge vertices in $\relevant{G'}{P'}$ are chosen as pivots at level $r$, then $P'$ is split into at most $z$ disjoint subpaths $P'_1, \ldots, P'_z$ belonging to distinct level $r+1$ recursive instances $G'_1, \ldots, G'_z$.
\end{observation}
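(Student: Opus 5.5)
The plan is to show two things: each level-$(r+1)$ instance meets $P'$ in a contiguous subpath, and the number of distinct instances meeting $P'$ is at most $z$. Only pivots that are non-bridge vertices of $\relevant{G'}{P'}$ can separate vertices of $P'$. Since $G'$ is a DAG, bridges of $P'$ lie on a path between vertices of $P'$; in the setting where bridges sit on $P'$ itself they are essentially the path vertices. In any case, the observation as stated only counts non-bridge pivots, so I will argue that bridge pivots cause no splitting in the intended reading: if a bridge is sampled, $P'$ becomes a leaf.

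First, I analyze a single pivot $p$ and how its labels restrict to $P' = \langle v_0,\dots,v_h\rangle$.
\begin{itemize}
\item If $p \notin \relevant{G'}{P'}$, then no vertex of $P'$ receives a label from $p$. All of $P'$ gets ``no relation'', so $p$ does not distinguish any vertices of $P'$.
\item If $p$ is an ancestor (reaches $P'$ but is not reached by it), then the set of $v_i$ with label $\vdesc{p}$ is upward closed along the path. This is because if $p$ reaches $v_i$, it reaches $v_j$ for all $j\ge i$ via $P'$. Moreover, no $v_i$ gets $\vanc{p}$, since otherwise $p$ would be a bridge. So $p$ cuts $P'$ into a prefix and a suffix at a single index.
\item Symmetrically, a descendant pivot gives a downward-closed set of $\vanc{p}$ labels, hence again one cut point.
\end{itemize}

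Second, I combine the pivots. The label vector of $v_i$ is determined by the pattern of at most $z-1$ threshold cuts along the path; all other pivots contribute constant labels on $P'$. Two vertices $v_i, v_j$ with $i<j$ have identical labels iff no cut index lies in $(i,j]$. Hence the equivalence classes restricted to $P'$ are exactly the maximal intervals between consecutive cut points. There are at most $z$ such intervals, and they are contiguous. Distinct intervals have distinct label vectors, since they are separated by at least one differing threshold label and labels never revert along the path, by monotonicity. They therefore lie in distinct level-$(r+1)$ instances $G'_1,\dots,G'_z$.

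Finally, I check that each piece $P'_i$ is still a path in $G'_i = G'[V_i]$. All its vertices are in $V_i$, so its edges survive in the induced subgraph.

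The main subtlety is the bridge case. If a bridge pivot lies strictly between path vertices, its labels on $P'$ could be both $\vanc{p}$ (for a prefix) and $\vdesc{p}$ (for a suffix), with possibly a gap. That gives up to two cuts, but it is handled by the leaf convention (or can be counted separately), so the stated bound is over non-bridge pivots only.
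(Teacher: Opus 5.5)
Your proof is correct and follows essentially the paper's route. Reachability is monotone along $P'$, so each non-bridge pivot in $\relevant{G'}{P'}$ changes the label pattern at most once, and $z-1$ such pivots give at most $z$ contiguous pieces, which lie in distinct instances because labels never revert. You are more explicit than the paper about the per-pivot case split, the distinctness of the instances, and the fact that a sampled bridge could create two cuts (which is why the statement is read with no bridge sampled), but the core argument is the same.
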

\begin{proof}
    Let $P' = \langle v_0, v_1, \ldots, v_h \rangle$ and $S$ be the set of pivots selected from $\relevant{G'}{P'}$.
    If $p \in S$ reaches $v_i$, then it reaches $v_j$ for $j > i$.
    Similarly, if $p \in S$ is reached by $v_i$, then it is reached by $v_j$ for $j < i$.
    Thus, if $v_i$ and $v_j$ are related to $S$ in the same way, then $v_i, v_{i+1}, \ldots, v_j$ all have the same relation to $S$.
    Since there are at most $z - 1$ locations where the reachability relation can change, there are at most $z$ subpaths.
\end{proof}

We will examine how $P$ evolves (i.e. is split and shortcutted) through the execution of \JLS more carefully.
To do this, we think of the \emph{subproblem tree} of $P$.
Each node of the subproblem tree is associated with some subpath of $P$, and the tree can be described as:
\begin{itemize}
    \item Root node $P$.
        This is the $0$th level of the tree.
    \item For each node $P'$ in level $r$ of the tree (i.e. $P'$ is contained in a level $r$ recursive instance $G'$), if a bridge of $P'$ in $G'$ is sampled at level $r$, then $P'$ is a leaf.
        Note that in this case $P'$ has been shortcut to $2$ hops since, denoting $P'=\langle v_1,v_2,\ldots,v_h\rangle$ and the sampled bridge $b$, the edges $v_0b$ and $bv_h$ are added to the shortcut set.
    \item Otherwise, the path $P'$ is split in the instance $G'$ into subpaths $P'_1, \ldots, P'_z$ belonging to level $r+1$ recursive instances $G'_1, \ldots, G'_z$.
        The node $P'$ will have $z$ children $P'_1, \ldots, P'_z$ at level $r+1$ of the tree.
\end{itemize}
It is important to note that the algorithm is not aware of this subproblem tree and it is merely a tool for our analysis.
Crucially, the number of nodes in this subproblem tree is (up to a constant factor) an upper bound on the number of hops that $P$ is shortcutted to since we can traverse from the start of $P$ to the end along the leaves (which have been shortcutted to $2$ hops) and the edges joining the leaves.

\paragraph{Our Strategy. }
\cite{jambulapati2019parallel} shows that the subproblem tree of any path $P$ has at most $n^{1/2 + O(1/\log k)}$ nodes.
We employ algorithmic tactics to prune away nodes from this tree in our analysis, to the extent that the tree we analyze has a substantially smaller number of nodes.
For example, a node $P'$ in recursive instance $G'$ may have a large subtree in $P$'s subproblem tree.
If we are able to identify some structure in $G'$ so that $P'$ is immediately shortcutted to $O(1)$ hops, we are then permitted to ignore the subtree rooted at $P'$ in our analysis.

In view of this, we will use a pruning strategy (called \prunei) which prunes away all nodes $P'$ where $\card{\relevant{G'}{P'}}$ is small.
Using our new analysis of the diameter of the \JLS shortcut set, we will be able to say that the number of nodes $P'$ where $\card{\relevant{G'}{p'}}$ is large is much less than $n^{1/2 + O(1/\log k)}$, which yields our improvement.
See \Cref{subsec:prune-1} for details.

We close this subsection with the following crucial lemma from \cite{jambulapati2019parallel} which says that the size of balls around vertices are exponentially decreasing in $r$, the recursion level.

\begin{proposition}[Paraphrasing Lemma 4.1 from \cite{jambulapati2019parallel}]
\label{prop:jls-1}
    With probability at least $1 - O(n^{-10})$, the following event $\cE_{\ref{prop:jls-1}}$ holds.
    For all recursion levels $r$, for all level $r$ recursive instances $G'$,
    \begin{align*}
        &\card{\anc{G'}{v}} \le n k^{-r}
        \\
        &\card{\desc{G'}{v}} \le n k^{-r}
    \end{align*}
    for all $v \in V(G')$.
\end{proposition}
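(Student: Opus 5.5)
We prove \Cref{prop:jls-1} by induction on the recursion level $r$, with a union bound over all instances and vertices. The base case $r=0$ is trivial, since $\card{\anc{G}{v}},\card{\desc{G}{v}}\le n$. For the inductive step, fix a level-$r$ instance $G'$ in which every vertex has ancestor and descendant sets of size at most $nk^{-r}$. We must show that, after level-$r$ pivot sampling at rate $p_r = 100k^{r+1}\log n/n$, every vertex $v$ of every child instance $G'_i$ has $\card{\anc{G'_i}{v}} \le nk^{-(r+1)}$. By symmetry the same argument handles descendants.

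\textbf{Key deterministic fact.} Let $u \in \anc{G'_i}{v}$, with $u\neq v$, and suppose a pivot $p$ lies in $\anc{G'}{v}\setminus\{v\}$. We claim $p$ cannot lie on a $u$–$v$ path in $G'$. If it did, then $u$ reaches $p$ (so $u$ gets label $\vanc{p}$) and $p$ reaches $v$ (so $v$ gets label $\vdesc{p}$). Because $G'$ is a DAG, $p\neq u,v$ would force $v \not\preceq p$, so $v$ lacks $\vanc{p}$. Then $u$ and $v$ carry different labels, contradicting that they lie in the same $V_i$.

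A cleaner way to get the bound is to order the vertices of $\anc{G'}{v}$ in reverse topological order (closest to $v$ first). Consider the first pivot sampled in this order, say $p$. Any $u\in\anc{G'}{v}$ that is reached by $p$ has label $\vdesc{p}$ while $v$ does not, since $v$ is not reached by $p$ unless $p=v$, because $p$ is a strict ancestor in a DAG. So such $u$ is separated from $v$. Any $u$ that reaches $p$ gets label $\vanc{p}$ and is separated from $v$ only if $v\not\preceq p$, which holds. Hence every $u\in\anc{G'}{v}$ comparable to $p$ leaves $v$'s class.

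\textbf{The main obstacle.} Vertices incomparable to $p$ still remain, so a single-pivot argument only bounds the surviving set by those $u$ that are incomparable to every sampled pivot within $\anc{G'}{v}$. The bound I would prove is that the survivors $\anc{G'_i}{v}$ form a set $A$ such that no vertex of $A$ was sampled; indeed, a sampled $u \in A$ with $u \neq v$ would separate itself from $v$ by the argument above, since $u$ has label $\vanc{u}$ and $v$ has $\vdesc{u}$. The problem is that $A$ is not fixed in advance. To handle this, I would argue via a fixed-set-first-pivot technique: process candidates in a fixed order so that the event "$A$ has size more than $nk^{-(r+1)}$ and contains no pivot" is dominated by the event "some fixed prefix of $\anc{G'}{v}$ of size $nk^{-(r+1)}$ contains no pivot." Reverse topological order is the natural choice, since every surviving $u$ has all its $G'$-descendants on the way to $v$ also surviving, by convexity of classes (\Cref{obs:s-splitting}-style reasoning). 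Making this domination rigorous is the step I expect to require the most care.

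\textbf{Probability and union bound.} Assuming the domination, the probability that a fixed set of size $nk^{-(r+1)}$ contains no pivot is at most
\[
(1-p_r)^{nk^{-(r+1)}} \le e^{-100\log n} = n^{-100}.
\]
A union bound over all $v$, all $O(\log_k n)$ levels, and both directions gives total failure probability at most $O(n^{-10})$, which completes the proof.
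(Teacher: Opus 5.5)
Your treatment of an ancestor pivot has the labels backwards. This error creates the ``main obstacle'' that you then leave unresolved.

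If a pivot $p \neq v$ lies in $\anc{G'}{v}$, then $p$ reaches $v$, so $v$ \emph{does} receive $\vdesc{p}$. Since $G'$ is a DAG, $v$ does not receive $\vanc{p}$. Your sentence ``$v$ is not reached by $p$ unless $p=v$'' conflates $v \in \desc{G'}{p}$ with $p \in \desc{G'}{v}$. The consequences are the reverse of what you state:
\begin{itemize}
\item the ancestors $u$ with $p \preceq u \preceq v$ carry exactly the same $p$-labels as $v$, so they are \emph{not} separated by $p$;
\item the ancestors incomparable to $p$ lack $\vdesc{p}$, so they \emph{are} separated;
\item the ancestors that reach $p$ receive $\vanc{p}$, so they are separated too.
\end{itemize}
So your claim that every $u$ comparable to $p$ leaves $v$'s class is false, and the incomparable survivors you worry about do not exist.

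The facts you do establish cannot give the domination you defer. Convexity of classes and ``no sampled vertex other than $v$ survives'' are both consistent with a sampled $p$ in your prefix together with a large set of surviving ancestors that are incomparable to $p$ and lie outside the prefix. So the step you flag as needing the most care is not just unproven; it cannot be reached from what you have.

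The missing idea is one fact. For any sampled $p \in \anc{G'}{v}$, every $u \in \anc{G'_i}{v}$ must share $v$'s label $\vdesc{p}$, so $\anc{G'_i}{v} \subseteq \desc{G'}{p} \cap \anc{G'}{v}$. With this, the domination is immediate:
\begin{itemize}
\item Let $F$ be the first $nk^{-(r+1)}$ vertices of $\anc{G'}{v}$ in reverse topological order of $G'$.
\item $F$ depends only on the randomness of levels below $r$. So the fresh level-$r$ sampling misses $F$ with probability at most $(1-p_r)^{nk^{-(r+1)}} \le n^{-100}$.
\item If some $p \in F$ is sampled, every survivor $u$ satisfies $p \preceq u$. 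Hence $u$ appears no later than $p$ in the order, so $u \in F$ and $\card{\anc{G'_i}{v}} \le nk^{-(r+1)}$.
\end{itemize}
This is exactly the paper's sketch: a sampled $p$ among the nearest ancestors gives $v$ the label ``$p$ reaches me,'' which all farther ancestors lack. Your probability computation and union bound then go through unchanged. Incidentally, the level-$r$ hypothesis is never used; each level's sampling alone yields the bound at the next level.
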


We omit rewriting a formal proof of \Cref{prop:jls-1}, but this follows from a simple induction on the recursion level: if $\card{\anc{G'}{v}} \le n k^{-r}$ at level $r-1$, then it is true also for level $r$ and, otherwise, one of the first $n k^{-r}$ ancestors of $v$ (say $p$) will be sampled as a pivot with high probability, precluding any of the later ancestors from being retained in $\anc{G'}{v}$ at level $r$ since they receive the label ``I have no relation to $p$'' while $v$ receives the label ``$p$ reaches me''.

%% file: prune-1.tex
\subsection{Bounding the Diameter Achieved From Using \prunei on \JLS}
\label{subsec:prune-1}

In this subsection, we first focus only on the reachability hopbound guarantee for \Cref{thm:seq-shortcut}.
We will later analyze the size of the shortcut sets and construction time in \Cref{subsec:shortcut-analysis}.

\begin{restatable}{theorem}{seqshorttradesideone}
\label{thm:seq-shortcut-tradeoff-side1}
    Let $\rho \in \IN$.
    The union of $\Theta(\log n)$ independent calls to
    \JLS \textcolor{\cprunei}{with \prunei} outputs an $n^{1/2 + o(1)}/\rho$-shortcut set with high probability.
\end{restatable}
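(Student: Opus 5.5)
Fix an arbitrary $s$-to-$t$ path $P$ in the DAG $G$. I will show that in a single run of \JLS with \prunei, with constant probability the pruned subproblem tree of $P$ has at most $n^{1/2+o(1)}/\rho$ nodes up to polylogarithmic factors. Repeating $\Theta(\log n)$ independent times then makes the failure probability for $P$ polynomially small. Only $n^2$ pairs $(s,t)$ matter, and for each we can fix one canonical path, so a union bound finishes. I take \prunei to be the step described in the overview: for every pivot $p$ in a subproblem $G'$ with $\card{\relevant{G'}{p}} \le x$, add the transitive closure of $G'[\relevant{G'}{p}]$ to $H$. The threshold is set to $x = \rho^2$, which is the choice that makes the running-time term $n\rho^{2\omega-2}$ come out in \Cref{subsec:shortcut-analysis}.

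\textbf{Step 1 (potential argument).} I would prove the probabilistic analogue of \Cref{claim:1}: the expected number of nodes $P'$ with $\card{\relevant{G'}{P'}} \ge x$ is at most $n^{1/2+O(1/\log k)}/\sqrt{x}$. Use the potential $\phi(P')=\sqrt{\card{\relevant{G'}{P'}}}$. First, condition on $\card{S}$ and apply Jensen and Cauchy--Schwarz to \Cref{prop:jls-2}, giving $\expect{\sum_i \phi(P'_i)} \le \sqrt{2}\,\phi(P')$. Inducting over the $O(\log_k n)$ levels then gives $\expect{\sum_{P'}\phi(P')} \le 2^{O(\log n/\log k)}\sqrt{n} = n^{1/2+O(1/\log k)}$. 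This follows because \Cref{prop:jls-1} caps the depth at $\log_k n$, and the factor $\sqrt2$ compounds only once per level. Markov's inequality then gives the bound with constant probability. Nodes counted here have $\phi \ge \sqrt{x}=\rho$, so there are at most $n^{1/2+o(1)}/\rho$ of them.

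\textbf{Step 2 (small nodes).} For each maximal node $P'$ with $\card{\relevant{G'}{P'}}<x$ whose parent is large, I would show that its subtree, after pruning, contributes only $O(\log n)$ hops. Since the fanout is polylogarithmic, the number of such $P'$ is at most a polylog times the Step 1 count. Consider any descendant $P''$ in $G''$, and split into two cases.
\begin{itemize}
\item \emph{Some $q\in\relevant{G''}{P''}$ has $\card{\relevant{G''}{q}}>x$.} Since $\relevant{G''}{P''}\subseteq\relevant{G'}{P'}$ has size less than $x$, the relevant set is tiny compared with the sampling scale at which such a $q$ could survive. Using \Cref{prop:jls-1}-style reasoning, the probability that any vertex of $\relevant{G''}{P''}$ is sampled is small, so $P''$ is not split and has a single child.
\item \emph{Every $q\in\relevant{G''}{P''}$ has $\card{\relevant{G''}{q}}\le x$.} If any such $q$ is sampled, then $\relevant{G''}{q}$ contains all of $P''$ whenever $q$ is a bridge. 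If $q$ is a non-bridge, $P''$ still splits into at most two pieces, each contained in $\relevant{G''}{q}$. In that case the transitive closure shortcuts each piece to one hop, and the node becomes a pruned leaf.
\end{itemize}
Hence each small subtree is a path of length $O(\log_k n)$ plus $O(1)$ shortcut pieces.

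\textbf{Main obstacle.} The delicate part is the first case of Step 2. A large $\card{\relevant{G''}{q}}$ must certify that sampling within $\relevant{G''}{P''}$ is unlikely, and this requires relating $x$ to the sampling probability $p_r$ at that level. The natural route is to argue that a vertex with more than $x$ relevant vertices at level $r$ forces $nk^{-r}\gtrsim x$ by \Cref{prop:jls-1}. Then $p_r\cdot x \lesssim k\log n \cdot x/(nk^{-r})$, which is at most a polylog-factor constant, so the expected number of samples is $O(\mathrm{polylog})$. This gives polylogarithmic rather than unit fanout, which still suffices once the polylog factors are absorbed into $n^{o(1)}$. Keeping these events consistent across levels, while conditioning on the random choices that define the tree, is where the care is needed. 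I would handle it by revealing the randomness one level at a time and bounding expectations conditionally.
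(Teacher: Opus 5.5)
The gap is in the first case of your Step 2; Step 1 and charging small subtrees to their large parents are fine and match \Cref{lem:topdown-jls} and \Cref{obs:few-children}. You set the transitive-closure threshold equal to the small-node threshold $x=\rho^2$. So the certificate $q$ only yields $\card{\relevant{G''}{P''}} < x < \card{\relevant{G''}{q}} \le 2nk^{-r}$. The expected number of level-$r$ pivots in $\relevant{G''}{P''}$ is then at most $p_r\cdot 2nk^{-r}=200k\log n$, which is exactly the bound \Cref{obs:few-children} already gives for every node. The certificate therefore carries no information, and your claim that $P''$ ``is not split and has a single child'' does not follow.

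Your repair in the last paragraph, that polylogarithmic fanout is ``absorbed into $n^{o(1)}$'', is false. Per-node fanout compounds multiplicatively over the $\Theta(\log n/\log\log n)$ levels below a small node, and $(\log^2 n)^{\Theta(\log n/\log\log n)}=n^{\Theta(1)}$. Separately, in your second case a non-bridge pivot does not put both pieces of $P''$ inside its relevant set. The correct statement is that the prefix reaching descendant pivots and the suffix reached by ancestor pivots are each covered by one \TC call. This leaves at most one middle child, unrelated to all pivots, rather than a pruned leaf.

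The missing idea is the slack the actual algorithm builds in. \prunei fires at $(k^2\log^2 n)\rho^2$, while ``small'' means $\card{\relevant{G'}{P'}}\le\rho^2$; the extra polylogarithmic factor costs nothing in the time and size bounds. A bad small node at level $r$ then satisfies $(k^2\log^2 n)\card{\relevant{G''}{P''}}<\card{\relevant{G''}{q}}\le 2nk^{-r}$. So it receives $O(1/(k\log n))$ pivots in expectation and has $1+O(1/(k\log n))$ expected children. Iterating expectations level by level gives $(1+O(1/(k\log n)))^{\log_k n}=O(1)$ nodes per level, hence $O(\log n)$ per small subtree, as in \Cref{lem:small-subtree}.

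Alternatively, if you keep your single threshold, you need a global count rather than a per-node one. Same-level descendants of $P'$ lie in vertex-disjoint instances, so their relevant sets are disjoint subsets of $\relevant{G'}{P'}$. Let $Z_r$ be the number of level-$r$ nodes in the subtree and $N_r$ the number of level-$r$ pivots inside bad nodes' relevant sets; then $Z_{r+1}\le Z_r+N_r$. Bad nodes exist only when $k^r<2n/\rho^2$, so a geometric sum gives $\sum_r\expect{N_r}\le\sum_{r:\,k^r<2n/\rho^2}p_r\rho^2=O(k\log n)$. That yields polylogarithmic small subtrees. Your proposal contains neither argument.
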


To begin, \prunei uses the \emph{transitive closure} of subgraphs as a subroutine call.
We use $\TC(G)$ to denote all edges in the transitive closure of a digraph $G$; that is, $st \in \TC(G)$ if and only if $s$ can reach $t$ in $G$.
Note that $\TC(G)$ can be computed in $\Ot(\card{V(G)}^{\omega})$ time using repeated squaring of the adjacency matrix of $G$.

The very simple modification to \JLS is then described below (the text is mostly \JLS from the previous section, with the only substantial change being the addition of the \ding{47} line); we add edges from the transitive closure of the ball of each pivot if said balls are small.

\begin{tbox}
    \algname{\JLS \textcolor{\cprunei}{with \prunei}}\\
    \textbf{Global Parameters:} $k$ \textcolor{\cprunei}{and $\rho$} are global parameters to be fixed later. $n$ is the number of vertices in the base input graph (it thus remains fixed through all recursive calls).\\
    \textbf{Input:} A DAG $G = (V, E)$ and a recursion level $r$.\\
    \textbf{Output:} A shortcut set $H \subseteq V^2$.
    \begin{enumerate}
        \item Randomly and independently sample, with probability $p_r \gets \frac{100k^{r+1}\log n}{n}$, each $v \in V$ as a pivot.
            Let $S$ be the set of pivots.
        \item For each $p \in S$, compute $\overset{\textcolor{gray}{\textrm{reach }p}}{\anc{G}{p}}$, $\overset{\textcolor{gray}{p \textrm{ reaches}}}{\desc{G}{p}}$, \textcolor{\cprunei}{$\relevant{G}{p} = \anc{G}{p} \cup \desc{G}{p}$} and:
            \begin{itemize}
                \item Add $vp$ to $H$ for all $v \in \anc{G}{p}$.
                \item Add $pv$ to $H$ for all $v \in \desc{G}{p}$.
                \item Add $\vanc{p}$ label to all $v \in \anc{G}{p}$.
                \item Add $\vdesc{p}$ label to all $v \in \desc{G}{p}$.
                \item \label{alg:prune-1} \ding{47} \textcolor{\cprunei}{\prunei: If $\card{\relevant{G}{p}} \le \paren{k^2 \log^2 n} \rho^2$, add all edges in $\TC(G[\relevant{G}{p}])$ to $H$.}
            \end{itemize}
        \item $V_1, V_2, \ldots, V_t \gets$ Partition of all $v \in V$ such that $x, y \in V_i$ iff $x$ and $y$ have the exact same labels.
        \item Output $H \cup \paren{\bigcup_{i \in [t]} \JLS \textrm{ \textcolor{\cprunei}{with \prunei} on } G[V_i], r+1}$.
    \end{enumerate}
\end{tbox}

As before, we select an arbitrary path $P$ and count the nodes of its now pruned subproblem tree.
Below, we describe how \prunei allows us to prune the subproblem tree.

\begin{ttbox}
    \algname{\prunei on the Subproblem Tree}\\
    Consider a level $r$ node $P'$ in the original subproblem tree satisfying $\card{\relevant{G'}{P'}} \le \paren{k^2 \log^2 n} \rho^2$, and which also has children $P'_1, \ldots, P'_z$ arranged in the order so that $P' = P'_1 \ldots P'_z$.
    Let $P'_x$ be the last subpath that is touched by a call to $\TC(G'[\relevant{G'}{p}])$ for some level $r$ pivot $p$ that is a descendant of $P'$, and let $P'_y$ be the first subpath that is touched by a call to $\TC(G'[\relevant{G'}{q}])$ for some level $r$ pivot $q$ that is an ancestor of $P'$.
    We remove $P'_1, \ldots, P'_x$ and $P'_y, \ldots, P'_z$ (along with the subtrees rooted at them) from the subproblem tree of $P$, so that $P'$ now has children $P'_{x+1}, \ldots, P'_{y-1}$.
\end{ttbox}
In the above event, $P'_1, \ldots, P'_x$ (resp. $P'_y, \ldots, P'_z$) has been shortcut to $1$ hop from the call to $\TC(G'[\relevant{G'}{p}])$ (resp. $\TC(G'[\relevant{G'}{q}])$) from which a shortcut is added from the start of $P'_1$ to the end of $P'_x$ (resp. start of $P'_y$ to the end of $P'_z$).
The number of nodes in the pruned subproblem tree is thus, up to a constant factor, an upper bound on the number of hops $P$ is shortcut to.
Moving forward, we will call nodes $P'$ in subproblem $G'$ \emph{small} if $\card{\relevant{G'}{P'}} \le \rho^2$, and otherwise we call them \emph{large}.
There are three basic steps to count the number of nodes in the pruned subproblem tree:
\begin{itemize}
    \item We show that there is at most $n^{1/2 + O(1/\log k)}/\rho$ large nodes.
        See \Cref{subsubsec:p-1}.
    \item We show that the number of children each node has is $O(k \log n)$.
        We can use this to charge the maximal subtrees rooted at small nodes to their parent (a large node).
        See \Cref{subsubsec:p-2}.
    \item We show that for every small node, its subtree has at most $O(\log n)$ nodes.
        See \Cref{subsubsec:p-3}.
\end{itemize}
In all, letting $L$ be the number of large nodes, $T$ be an upper bound on the size of subtrees rooted at small nodes, and $\Delta$ upper bound the number of children each node has, then for $k = \Theta(\log n)$
\begin{align*}
    \paren{\textrm{\# of nodes}} \le L + L\Delta T = n^{1/2 + o(1)}/\rho.
\end{align*}
See \Cref{subsubsec:p-4}.

\subsubsection{There Can't be Many Large Nodes (Alternate Proof of the \JLS Diameter)}
\label{subsubsec:p-1}
We finally provide the proof of the $n^{1/2 + O(1/\log k)}$ diameter bound of \JLS.
To proceed, we will need the following key lemma as a blackbox (to keep this more self-contained, we will provide a brief proof sketch in \Cref{app:sketch-key-lemma}).
\begin{proposition}[Paraphrasing of Lemma 4.4 from \cite{jambulapati2019parallel}]
\label{prop:jls-2}
    Let $P'$ be an arbitrary path in $G'$, and $S$ be the set of pivots selected from $\relevant{G'}{P'}$.
    Suppose $S$ does not contain bridges and $S$ splits $P'$ into (possibly empty) subpaths $P'_1, \ldots P'_{\card{S}+1}$ belonging respectively to recursive instances $G'_1 \ldots G'_{\card{S}+1}$.
    Then
    \begin{align*}
        \expect{\sum_{i = 1}^{\card{S} + 1} \card{\relevant{G'_i}{P'_i}} \;\middle|\; \card{S}} \le \frac{2}{\card{S} + 1} \card{\relevant{G'}{P'}}.
    \end{align*}
\end{proposition}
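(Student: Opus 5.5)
The plan is a direct charging argument that mirrors the standard random-order analysis. First condition on $\card{S}=s$. Because the pivots are sampled independently with a common probability, conditioned on $\card{S}=s$ (and on $S$ containing no bridges), $S$ is a uniformly random $s$-subset of the non-bridge vertices of $\relevant{G'}{P'}$.

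\textbf{Step 1: each vertex is counted at most once.} The instances $G'_1,\ldots,G'_{s+1}$ are distinct equivalence classes, so each vertex $u$ lies in at most one $G'_i$. Hence $\sum_i \card{\relevant{G'_i}{P'_i}}$ is at most the number of \emph{surviving} vertices $u\in\relevant{G'}{P'}$, meaning those that remain related to the piece of $P'$ in their own class. By linearity of expectation, it then suffices to show that every vertex $u$ survives with probability at most $2/(s+1)$ on average over $u$.

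\textbf{Step 2: locate the unique candidate piece.} Write $P'=\langle v_0,\ldots,v_h\rangle$. For a non-bridge ancestor $u$ of $P'$, let $a(u)$ be the first index $j$ with $u$ reaching $v_j$. For a descendant $u$, let $b(u)$ be the last index $j$ with $v_j$ reaching $u$. Each ancestor pivot $q$ cuts $P'$ just before $v_{a(q)}$, and each descendant pivot cuts just after $v_{b(q)}$.

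I will first show that an ancestor $u$ can only end up in the piece containing $v_{a(u)}$. Suppose instead that $u$ lies in the class of a later piece $P'_i$. Then some cut lies strictly between $v_{a(u)}$ and $P'_i$.
\begin{itemize}
\item If the cut comes from an ancestor pivot $q$, then $q$ reaches $P'_i$, so $q$ must reach $u$ for the labels to agree. Since $u$ reaches $v_{a(u)}$, this forces $a(q)\le a(u)$, contradicting the position of the cut.
\item If the cut comes from a descendant pivot $q$, then $u$ reaches $q$ through $v_{a(u)}$, but $P'_i$ does not reach $q$, so the labels disagree.
\end{itemize}
The descendant case is symmetric, with $b(u)$ in place of $a(u)$.

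\textbf{Step 3: reduce survival to a random-order event.} Consequently, an ancestor $u$ survives only if no sampled pivot ``contradicts'' it. This means every ancestor pivot $q$ with $a(q)\le a(u)$ reaches $u$, and every descendant pivot $q$ with $b(q)<a(u)$ is not reached by $u$. I will define a linear order $\prec_A$ on the ancestors: sort by $a(\cdot)$, and break ties by a reverse topological order of the DAG. The goal is to show that a sampled vertex that is $\prec_A$-earlier than $u$ always kills $u$, apart from the pairs that are handled on the descendant side. If that holds, $u$ survives only when it precedes every pivot of $S$ in $\prec_A$, restricted to the set $S\cup\{u\}$. For a uniformly random $s$-subset this has probability about $1/(s+1)$, as in the classical ``first among $s+1$ random elements'' argument. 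Doing the same with a linear order $\prec_D$ on the descendants gives another $1/(s+1)$, and the two sides together produce the factor $2$.

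\textbf{Main obstacle.} The hard step is Step 3. It is not true that every ancestor $w$ with $a(w)\le a(u)$ kills $u$: if $w$ reaches $u$, it does not. So the ``first in a fixed order'' argument must instead be carried out pairwise. For each unordered pair $\{u,w\}$ of non-bridge relevant vertices, I would show that at least one kills the other whenever both are on the same side and incomparable. For comparable pairs (say $w$ reaches $u$), I would instead show that $u$ and $w$ cannot both survive in the same piece. This yields a bound of the form: the expected number of survivors is at most the number of vertices that are ``minimal'' in a random $(s+1)$-sample. I would then sum $\Pr[u \text{ is minimal among } S\cup\{u\}]$ separately over ancestors and descendants to obtain $\frac{2}{s+1}\card{\relevant{G'}{P'}}$. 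I expect that getting the tie-breaking right, so that the pairwise killing relation really contains a total order on each side, is where the proof in \cite{jambulapati2019parallel} does the real work. I would follow their ordering of ancestors by the first vertex of $P'$ that they reach.
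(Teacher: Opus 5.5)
Your Steps 1--2 are fine for non-bridge vertices. In particular, the fact that an ancestor $u$ can only be counted in the piece containing $v_{a(u)}$ is correct and useful. The overall shape (pairwise domination, a random-subset count, one factor per side) also matches the paper's appendix sketch.

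\textbf{The gap is Step 3, and both of your proposed ways of closing it fail.}

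First, comparable same-side pairs \emph{can} both survive in one piece, for example when $s=0$, or with a single descendant pivot reached by all of $P'$. What is true is that comparable pairs are also covered by killing. If ancestors $w,u$ have $w$ reaching $u$, then $u\in S$ kills $w$, since $w$ gets the label ``I reach $u$'' that no vertex of $P'$ carries. Now add the incomparable case: for unrelated ancestors with $a(w)\le a(u)$, $w\in S$ kills $u$. This holds because, by your Step 2, the only piece $u$ could be counted in lies at indices $\ge a(w)$ and carries ``$w$ reaches me''. So every pair of ancestors has a killing direction, and killing is monotone in $S$.

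Second, this relation is a genuine tournament with cycles, so no tie-breaking rule yields a total order inside it. Take $P'=\langle v_0,v_1,v_2\rangle$ and ancestors with edges $w_1v_0$, $w_1w_2$, $w_2v_2$, $w_3v_1$. Among the three ancestors, sampling $w_1$ kills only $w_3$, sampling $w_2$ kills only $w_1$, and sampling $w_3$ kills only $w_2$. Hence ``$u$ is first among $S\cup\{u\}$'' is the wrong event.

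The missing idea is the averaging argument for tournaments that the paper's sketch uses. Generate $(u,S)$ by drawing a uniform $(s+1)$-subset $T'$ and leaving out a uniform $u\in T'$. At most one ancestor of $T'$ is killed by no other ancestor of $T'$, and likewise for descendants. So at most two of the $s+1$ choices of $u$ survive, which gives $\sum_u \Pr[u\notin S,\ u \text{ survives}]\le\frac{2}{s+1}\card{\relevant{G'}{P'}}$. This also handles the fact that $S$ mixes ancestor and descendant pivots, which a per-side order $\prec_A$ cannot.

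\textbf{Separately, bridges are never treated.} Your Step 1 reduction must cover every $u\in\relevant{G'}{P'}$, yet bridges, including all interior vertices of $P'$, are never handled. Under your conditioning they are never sampled, and the vertices of $P'$ always survive in their own piece. Indeed, condition on $\card{S}=s\ge 2$ and on $S$ containing no bridge, and take a long path whose only non-bridge relevant vertices are $s$ ancestors, each with a single edge into $v_0$. Then $S$ is forced to be all of them, $P'$ is not split, and the inequality is violated.

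The formulation that works is also the one the potential argument needs, since a sampled bridge makes $P'$ a leaf. Take $S$ uniform among $s$-subsets of all of $\relevant{G'}{P'}$, and count zero when $S$ contains a bridge. Then a bridge kills everything, and the bound of two survivors per $(s+1)$-subset persists: with two bridges in $T'$ none survive, and with one bridge only that bridge can. The paper's sketch sidesteps this only by assuming $P'$ has no bridges.
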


While splitting a path $P'$ into $P'_1, P'_2, \ldots, P'_{\card{S}+1}$ precludes shortcutting $P'$ (hence $P$) to $o(\card{S})$ hops, \Cref{prop:jls-2} says that, on average, it becomes $\Omega(\card{S})$ times more likely to select pivots on $P'$ and thus shortcut its descendants in the subproblem tree to $O(1)$ hops each.
So even if the algorithm fails to resolve $P'$ by selecting a bridge, it makes progress towards resolving $P'_1, P'_2, \ldots, P'_{\card{S}+1}$ and it is this advantage that leads to the diameter bound of \cite{jambulapati2019parallel}.

Henceforth, we will use the notation $G_{P'}$ to refer to the recursive instance a subpath $P'$ belongs to.

\begin{lemma}
\label{lem:topdown-jls}
    For any $x$, let $X$ be the random variable counting the number of nodes $P'$ in the (unpruned) subproblem tree such that $\card{\relevant{G_{P'}}{P'}} \ge x$. Then:
    \begin{align*}
        \expect{X} \le \sqrt{\frac{n^{1 + O(1/\log k)}}{x}}.
    \end{align*}
\end{lemma}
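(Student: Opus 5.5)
We bound $\expect{X}$ via the potential $\phi(P') = \sqrt{\card{\relevant{G_{P'}}{P'}}}$, following the proof sketch of \Cref{claim:1} but replacing the idealized deterministic inequality by \Cref{prop:jls-2} together with Jensen's inequality. Every node $P'$ with $\card{\relevant{G_{P'}}{P'}} \ge x$ contributes at least $\sqrt{x}$ to $\Phi := \sum_{P'} \phi(P')$, where the sum ranges over all nodes of the (unpruned) subproblem tree. Hence $X \le \Phi/\sqrt{x}$, and it suffices to show
\begin{align*}
\expect{\Phi} \le n^{1/2 + O(1/\log k)}.
\end{align*}

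Let $\Phi_r$ denote the sum of $\phi$ over the level-$r$ nodes. We have $\Phi_0 = \phi(P) \le \sqrt{n}$. The main step is the one-level inequality
\begin{align*}
\expect{\Phi_{r+1} \mid \text{level-}r\text{ history}} \le c\,\Phi_r
\end{align*}
for a constant $c$ such as $\sqrt{2}$. Fix a level-$r$ node $P'$ in instance $G'$ and condition on everything up to level $r$. If a bridge of $P'$ is sampled, then $P'$ is a leaf and contributes nothing to the next level. Otherwise, let $S$ be the pivots sampled from $\relevant{G'}{P'}$ (all non-bridges). By \Cref{obs:s-splitting}, $P'$ has at most $\card{S}+1$ children. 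Cauchy--Schwarz gives
\begin{align*}
\sum_i \phi(P'_i) \le \sqrt{(\card{S}+1)\textstyle\sum_i \card{\relevant{G'_i}{P'_i}}}.
\end{align*}
Conditioning on $\card{S}$ and applying Jensen to the concave square root, then \Cref{prop:jls-2}, yields
\begin{align*}
\expect{\sum_i \phi(P'_i) \mid \card{S}} \le \sqrt{(\card{S}+1)\cdot \tfrac{2}{\card{S}+1}\card{\relevant{G'}{P'}}} = \sqrt{2}\,\phi(P').
\end{align*}
Averaging over $\card{S}$ and the bridge event, and summing over the level-$r$ nodes by linearity, gives the one-level inequality. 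Iterating with the tower rule yields $\expect{\Phi_r} \le 2^{r/2}\sqrt{n}$.

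It remains to bound the number of levels. By \Cref{prop:jls-1}, on the event $\cE_{\ref{prop:jls-1}}$ every instance at level $r$ has $\card{\relevant{G'}{v}} \le 2nk^{-r}$. Once $r > \log_k n + O(1)$, every nonempty node therefore has $\card{\relevant{}{}} < 1$, which is impossible, so there are only $R = \log_k n + O(1)$ levels. Summing $\expect{\Phi_r}$ over $r \le R$ gives
\begin{align*}
\expect{\Phi} \le (R+1)\, 2^{R/2}\sqrt{n} = \sqrt{n}\cdot n^{O(1/\log k)},
\end{align*}
since $2^{\log_k n /2} = n^{1/(2\log_2 k)}$ and the $(R+1)$ factor is absorbed.

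The main obstacle is the interaction between this conditioning and \Cref{prop:jls-2}. The proposition is stated conditionally on $\card{S}$ and on $S$ containing no bridges, so one must verify that conditioning on "no bridge sampled" preserves its hypotheses; the natural approach is to treat the non-bridge pivots as uniform given their count. One must also handle the complement of $\cE_{\ref{prop:jls-1}}$. I would try to absorb this failure probability into the expectation: on the failure event, the tree has at most $n$ levels with at most $n$ nodes per level, so it contributes at most $O(n^{-10})\cdot n^2$. Alternatively, one can truncate the recursion at depth $\log_k n + O(1)$, where the algorithm deterministically stops because pivot probabilities exceed $1$.
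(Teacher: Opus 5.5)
Your proposal is correct and follows essentially the same argument as the paper. Both use the potential $\phi(P')=\sqrt{\card{\relevant{G_{P'}}{P'}}}$, a local step bounding the expected potential of the children by $\sqrt{2}\,\phi(P')$ via Cauchy--Schwarz, Jensen, and \Cref{prop:jls-2}, and then a sum of the resulting $\sqrt{2}^{\,r}\sqrt{n}$ bounds over $O(\log n/\log k)$ levels; for the level count the paper uses the deterministic truncation you list as your alternative. If anything, conditioning on $\card{S}$ rather than on the number of children, and explicitly separating out the event that a bridge is sampled, makes your local step slightly cleaner than the paper's write-up.
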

\begin{proof}
    Let $T$ be the (unpruned) subproblem tree.
    \ \\
    \underline{Potential Function}:
    We use the potential function $\phi(P') = \sqrt{\card{\relevant{G_{P'}}{P'}}}$, and will show that $\expect{\sum_{\text{tree nodes } P'} \phi(P')} \le n^{1/2 + O(1/\log k)}$.
    If we can show this, we are done since nodes $P'$ with $\card{\relevant{G_{P'}}{P'}} \ge x$ contribute at least $\sqrt{x}$ each to the aforementioned sum; there can thus be at most $X \le n^{1/2 + O(1/\log k)}/\sqrt{x}$ such nodes in expectation.
    \\
    \underline{Local Step}: We first show that for any $P'$
    \begin{align*}
         \expect{\sum_{P'' \in \children{P'}} \phi(P'')} \le \sqrt{2}\expect{\phi(P')}.
         \tag{\ding{96}}
    \end{align*}
    The expectations in the following chain of inequalities are conditioned on the value of $\phi(P')$.
    \begin{align*}
        &
        \expect{\sum_{P'' \in \children{P'}} \phi(P'')}
        =
        \sum_{C \in [n]} \prob{\card{\children{P'}} = C
        }\expect{\sum_{P'' \in \children{P'}} \phi(P'') \;\middle|\; \card{\children{P'}} = C}
        \\
        &
        \le
        \sum_{C \in [n]} \prob{\card{\children{P'}} = C} \sqrt{C} \cdot \expect{\sqrt{\sum_{P'' \in \children{P'}} \card{\relevant{G_{P''}}{P''}}} \;\middle|\; \card{\children{P'}} = C}
        \tag{Cauchy-Schwarz}
        \\
        &
        \le
        \sum_{C \in [n]} \prob{\card{\children{P'}} = C} \sqrt{C \cdot \expect{\sum_{P'' \in \children{P'}} \card{\relevant{G_{P''}}{P''}} \;\middle|\; \card{\children{P'}} = C}}
        \tag{Jensen's Inequality}
        \\
        &
        \le
        \sum_{C \in [n]} \prob{\card{\children{P'}} = C} \sqrt{C \cdot \frac{2}{C} \card{\relevant{G'}{P'}}}
        \tag{Key \JLS Lemma: \Cref{prop:jls-2}}
        \\
        &
        =
        \sqrt{2} \phi(P').
    \end{align*}
    Using $\expect{\expect{X \;\middle|\; \phi(P')}} = \expect{X}$ on both sides gives $\expect{\sum_{P'' \in \children{P'}} \phi(P'')} \le \sqrt{2}\expect{\phi(P')}$, establishing \ding{96}.
    \\
    \underline{Summing the Pieces Up}:
    The proof of \Cref{lem:topdown-jls} is then easily completed by summing up over the nodes of $T$ by levels, and using induction on the level to compute $\paren{\# \textrm{ expected nodes in level }r} \le \sqrt{2}^r \expect{\phi(P)}$.
    Note that $T$ (deterministically) has $O(\log n / \log k)$ levels, since the sampling probability of being a pivot at recursion level $\Omega(\log n / \log k)$ is $1$ after which the algorithm halts.
    \begin{align*}
        \expect{\sum_{P' \in T} \phi(P')}
        &
        =
        \sum_{r \in [O(\log n / \log k)]} \expect{\sum_{P' \textrm{ in level } r \textrm{ of } T} \phi(P')}
        \tag{$T$ has $O(\log n / \log k)$ levels}
        \\
        &
        \le
        \sum_{r \in [O(\log n / \log k)]} \sqrt{2}^r \expect{\phi(P)}
        \tag{\ding{96}}
        \\
        &
        \le
        \sqrt{n} \sum_{r \in [O(\log n / \log k)]} \sqrt{2}^r 
        \tag{$\phi(P) \le \sqrt{n}$}
        \\
        &
        =
        n^{1/2 + O(1/\log k)}.
    \end{align*}
\end{proof}
As a special case when $x = 1$, we recover the diameter bound of \JLS from \Cref{lem:topdown-jls} since every node $P'$ must have $\card{\relevant{G_{P'}}{P'}} \ge 1$.
More importantly, for our proof, we will use $x = \rho^2$, the threshold which separates small nodes from large.
By \Cref{lem:topdown-jls}, there are no more than $n^{1/2 + o(1)}/\rho$ large nodes in expectation when $k = \Theta(\log n)$.

\subsubsection{Nodes Have Few Children}
\label{subsubsec:p-2}
\begin{observation}
\label{obs:few-children}
    With probability at least $1 - O(n^{-10})$, the event $\cE_{\ref{obs:few-children}}$ where every node in the subproblem tree has $O(k \log n)$ children holds.
\end{observation}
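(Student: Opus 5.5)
By \Cref{obs:s-splitting}, the number of children of a node $P'$ (in a level $r$ instance $G'$) is at most one more than the number of non-bridge pivots sampled from $\relevant{G'}{P'}$ at level $r$. It therefore suffices to show that, with high probability, every node of the subproblem tree has only $O(k\log n)$ pivots sampled from its relevant set. The plan is to bound the size of $\relevant{G'}{P'}$ using \Cref{prop:jls-1}, and then apply the Chernoff bound from \Cref{sec:preliminaries} together with a union bound.

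\textbf{Bounding the relevant set.} Fix a level $r$ node $P'$ in $G'$ and write $P' = \langle v_0,\ldots,v_h\rangle$. Every vertex reaching some $v_i$ also reaches $v_h$, and every vertex reached by some $v_i$ is also reached by $v_0$. Hence
\begin{align*}
\relevant{G'}{P'} = \anc{G'}{v_h} \cup \desc{G'}{v_0}.
\end{align*}
Conditioned on $\cE_{\ref{prop:jls-1}}$, this gives $\card{\relevant{G'}{P'}} \le 2nk^{-r}$. Since each vertex is sampled independently with probability $p_r = 100k^{r+1}\log n/n$, the expected number of pivots sampled from this set is at most $200k\log n$.

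\textbf{Concentration.} The main subtlety is dependence. Both the set $\relevant{G'}{P'}$ and the event $\cE_{\ref{prop:jls-1}}$ depend on randomness from levels $<r$, while the pivots at level $r$ are sampled fresh. I would therefore condition on all choices made at levels $<r$. This fixes $G'$ and $P'$, and the level-$r$ samples remain independent Bernoullis. If $\card{\relevant{G'}{P'}} \le 2nk^{-r}$, then the one-sided Chernoff bound with $\mu \le 200k\log n$ and a constant $\delta$ (padding $\mu$ up to exactly $200k\log n$ by dummy variables if needed) gives probability at most $n^{-\Omega(k)} \le n^{-20}$ that more than $O(k\log n)$ pivots land in the set.

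\textbf{Union bound.} The number of nodes is at most $n$ per level, since the nodes at a level are disjoint subpaths of $P$, and there are $O(\log n/\log k)$ levels. A union bound over all nodes, together with the failure probability $O(n^{-10})$ of $\cE_{\ref{prop:jls-1}}$, then yields total failure probability $O(n^{-10})$.

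If the statement is meant uniformly over all paths $P$ rather than for one fixed path, the same argument applies to every pair $(v_0,v_h)$ in every instance, of which there are $O(n^2)$ per level. The union bound still gives $O(n^{-10})$. The only real care needed is the conditioning argument above; everything else is routine.
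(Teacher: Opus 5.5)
Your proposal is correct and follows the paper's proof. Both arguments bound $\card{\relevant{G'}{P'}} \le 2nk^{-r}$ via $\relevant{G'}{P'} = \desc{G'}{v_0} \cup \anc{G'}{v_h}$ and \Cref{prop:jls-1}, apply a Chernoff bound to the level-$r$ pivot count, take a union bound over nodes, and convert pivots to children with \Cref{obs:s-splitting}. You also handle the conditioning on levels $<r$ more carefully than the paper, which conditions directly on the global event $\cE_{\ref{prop:jls-1}}$, and your mean of $200k\log n$ is the right value where the paper writes $100k\log n$, a harmless slip.
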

\begin{proof}
    We will condition on $\cE_{\ref{prop:jls-1}}$ holding, which occurs with probability $1 - O(n^{-10})$ by \Cref{prop:jls-1}.
    Let $P' = \langle v_0, v_1, \ldots, v_h \rangle$ be any level $r$ node in the subproblem tree, contained in some level $r$ recursive graph $G'$.
    Since $\relevant{G'}{P'} = \desc{G'}{v_0} \cup \anc{G'}{v_h}$ and by $\cE_{\ref{prop:jls-1}}$, it follows that $\card{\relevant{G'}{P'}} \le 2nk^{-r}$.

    Let $X_{P'}$ be the number of pivots chosen from $\relevant{G'}{P'}$ at level $r$.
    Since the sampling rate at level $r$ is $100k^{r+1}\log n / n$, we have $\expect{X_{P'}} \le 100k\log n$.
    By a Chernoff bound, $\prob{X_{P'} > 200k\log n} \le O(n^{-12})$.
    Therefore, $\prob{\bigcup_{\textrm{nodes }P'} \set{X_{P'} > 200k\log n}} \le O(n^{-10})$.
    This implies, using \Cref{obs:s-splitting} which bounds the number of pieces $P'$ is split to by the number of pivots, that $\prob{\cE_{\ref{obs:few-children}}} \ge 1 - O(n^{-10})$.
\end{proof}

\subsubsection{Subtrees Rooted at Small Nodes are Heavily Pruned}
\label{subsubsec:p-3}
\begin{lemma}
\label{lem:small-subtree}
    Let $P'$ be a small node in the (pruned) subproblem tree, and let $T_{P'}$ be the subtree rooted at it.
    \begin{align*}
        \expect{\card{V(T_{P'})}} = O(\log n).
    \end{align*}
\end{lemma}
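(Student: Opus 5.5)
The plan is to show that inside $T_{P'}$ the pruned tree is essentially a path: every node has exactly one child unless a pivot lands in its (tiny) relevant set, and that event is rare at exactly those levels where \prunei fails to act. Write $\tau = (k^2\log^2 n)\rho^2$ for the \prunei threshold.

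\textbf{Step 1: small stays small.} First I would note monotonicity. If $P''$ is a child of $Q$, then $G_{P''}$ is an induced subgraph of $G_Q$ and $P''\subseteq Q$, so $\relevant{G_{P''}}{P''} \subseteq \relevant{G_Q}{Q}$. Hence every node $P''$ of $T_{P'}$ satisfies $\card{\relevant{G_{P''}}{P''}} \le \rho^2$.

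\textbf{Step 2: if every relevant pivot triggers \prunei, there is at most one child.} Fix a level-$r$ node $P''$ of $T_{P'}$, and assume every pivot $p \in \relevant{G''}{P''}$ has $\card{\relevant{G''}{p}} \le \tau$.
\begin{itemize}
\item If $p$ is a descendant of $P''$, the vertices of $P''$ reaching $p$ form a prefix of $P''$. This prefix lies in $\relevant{G''}{p}$, and $\TC(G''[\relevant{G''}{p}])$ is added. So the prefix is shortcut, and all children inside it are removed by the pruning rule.
\item Symmetrically, an ancestor pivot $q$ removes the suffix of $P''$ reached by $q$.
\item The children that survive therefore consist of vertices that reach no descendant pivot and are reached by no ancestor pivot. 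These vertices have identical labels, and by \Cref{obs:s-splitting} they are contiguous. Hence at most one child survives.
\end{itemize}

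\textbf{Step 3: at levels where \prunei can fail, sampling is rare.} Condition on $\cE_{\ref{prop:jls-1}}$. Then every $v$ at level $r$ has $\card{\relevant{G''}{v}} \le 2nk^{-r}$. A pivot with $\card{\relevant{G''}{p}} > \tau$ can therefore exist only when $k^{r} < 2n/\tau$. At such levels the sampling rate is
\[
p_r = \frac{100k^{r+1}\log n}{n} \le \frac{200}{k\log n\,\rho^2}.
\]
Using Step 1, the expected number of pivots sampled from $\relevant{G''}{P''}$ is at most $\rho^2 p_r = O(1/(k\log n))$.

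\textbf{Step 4: combine the cases.} By \Cref{obs:s-splitting}, the number of children is at most the number of pivots plus one. So, conditioned on the history up to level $r$:
\begin{itemize}
\item at levels with $k^r < 2n/\tau$, the expected number of children of a node is at most $1 + O(1/(k\log n))$;
\item at all other levels, Step 2 applies deterministically and gives at most $1$ child.
\end{itemize}
By the tower property and induction over levels, the expected number of nodes at depth $j$ below $P'$ is at most $(1+O(1/(k\log n)))^{j}$. The tree has $O(\log n/\log k)$ levels, so this is $O(1)$ at every level. Summing over the levels gives $O(\log n)$.

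\textbf{Step 5: the bad event.} The failure of $\cE_{\ref{prop:jls-1}}$ contributes at most $O(n^{-10})\cdot \mathrm{poly}(n)$, since the tree trivially has polynomially many nodes. This is negligible.

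\textbf{Main obstacle.} The delicate part is Step 2: checking that the pruning rule, which uses only the \emph{last} child touched by a descendant-pivot closure and the \emph{first} child touched by an ancestor-pivot closure, really leaves at most one surviving child. This requires that the TC ball of a descendant pivot contains the whole prefix it labels, which holds because that prefix is exactly the set of path vertices in $\anc{G''}{p}$. A secondary subtlety is conditioning: the bound in Step 3 depends only on the level and on Step 1, not on the particular pivot, so it holds conditional on the entire history, and the inductive expectation argument goes through.
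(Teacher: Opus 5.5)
Your proof is correct and follows essentially the same route as the paper's. The paper splits small nodes into \emph{good} ones, where every vertex of $\relevant{G_{P''}}{P''}$ has a ball of size at most $(k^2\log^2 n)\rho^2$, so \prunei leaves at most one child, and \emph{bad} ones, where under $\cE_{\ref{prop:jls-1}}$ the chain $(k^2\log^2 n)\rho^2 < nk^{-r}$ gives $\card{\relevant{G_{P''}}{P''}} < nk^{-r-2}\log^{-2} n$ and hence only $O(1/(k\log n))$ expected pivots; it then uses the same tower-property bound over $O(\log n/\log k)$ levels. Your level-based dichotomy ($k^r$ versus $2n/\tau$) is a reparametrization of that same inequality, and your Steps 1 and 2 spell out details the paper leaves implicit: relevant sets only shrink down the tree, and the prefix/suffix pruning leaves at most one child.
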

\begin{proof}
    Assume $\cE_{\ref{prop:jls-1}}$ holds.\footnote{The contribution from $\cE_{\ref{prop:jls-1}}$ not holding is negligible since $\prob{\textrm{not }\cE_{\ref{prop:jls-1}}}\expect{\card{V(T_{P'})} \;\middle|\; \textrm{not }\cE_{\ref{prop:jls-1}}} < 1$. This follows from \Cref{prop:jls-1} and $\card{V(T_{P'})} \le n$.}
    For any small node $P''$ at level $r$, we will call it \emph{bad} if there is some $v \in \relevant{G_{P''}}{P''}$ such that $\card{\relevant{G_{P''}}{v}} > (k^2 \log^2 n) \rho^2$; otherwise $P''$ is \emph{good}.
    \\
    \underline{Property of good nodes}:
    Notice that if $P''$ is good, then it has at most $1$ child in the pruned subproblem tree since we run $\TC(G[\relevant{G_{P''}}{p}])$ for any pivot $p \in \relevant{G_{P''}}{P''}$.
    \\
    \underline{Property of bad nodes}:
    If $P''$ is bad, there is some $v \in \relevant{G_{P''}}{P''}$ such that $\card{\relevant{G_{P''}}{v}} > (k^2 \log^2 n) \rho^2$.
    Using
    \begin{align*}
        (k^2\log^2 n) \card{\relevant{G_{P''}}{P''}} \underset{P''\textrm{ small}}{\le} (k^2 \log^2 n) \rho^2 \underset{P''\textrm{ bad}}{<} \card{\relevant{G_{P''}}{v}} \underset{\cE_{\ref{prop:jls-1}}}{\le} nk^{-r},
    \end{align*}
    we get $\card{\relevant{G_{P''}}{P''}} < nk^{-r-2}\log^{-2}n$.
    Consequently, the expected number of pivots selected from $\relevant{G_{P''}}{P''}$ is at most $O(k^{-1}\log^{-1}n)$.
    Letting $X_{P''}$ be the number of children $P''$ has, and using \Cref{obs:s-splitting} with the expected number of pivots, $\expect{X_{P''}} = 1 + O(k^{-1}\log^{-1}n)$.
    \\
    \underline{Total number of nodes}:
    Let $Z_r$ for $r \le \log n / \log k$ be the number of nodes in level $r$ of $T_P$.
    Conditioning on $Z_r$, the above bound $\expect{X_{P''} \;\middle|\; Z_r} = 1 + O(k^{-1}\log^{-1}n)$ still holds and hence it follows that
    \begin{align*}
        \expect{Z_{r+1}}
        &
        =
        \expectt{Z_r}{\expectt{Z_{r+1}}{Z_{r+1} \;\middle|\; Z_r}}
        \tag{Law of iterated expectations}
        \\
        &
        =
        \expectt{Z_r}{\expectt{Z_{r+1}}{\sum_{i \in [Z_r]} X_{P''_i} \;\middle|\; Z_r}}
        \tag{where $P''_i$ is the $i$th node in level $r$}
        \\
        &
        =
        \expect{\paren{1 + O(k^{-1}\log^{-1}n)} \cdot Z_r}
        \tag{$\expect{X_{P''} \;\middle|\; Z_r} = 1 + O(k^{-1}\log^{-1}n)$}
        \\
        &
        =
        \paren{1 + O(k^{-1}\log^{-1}n)}^{\log n / \log k}
        \tag{$Z_0 = 1$ and $r \le \log n / \log k$}
        \\
        &
        =
        O(1).
    \end{align*}
    We conclude that $\expect{\card{V(T_{P'})}} = \expect{\sum_{r \in [\log n / \log k]} Z_r} \le \frac{\log n}{\log k} \cdot O(1) = O(\log n)$.
\end{proof}

\subsubsection{Putting the Pieces Together}
\label{subsubsec:p-4}
We now have all the components to prove \Cref{thm:seq-shortcut-tradeoff-side1}.
\begin{proof}[Proof of \Cref{thm:seq-shortcut-tradeoff-side1}]
    Let $P$ be any path in $G$, and $T_P$ be its subproblem tree, and $k = \Theta(\log n)$.
    We will show later that $\expect{\card{V(T_P)}} \le n^{1/2 + o(1)} / \rho$.
    Then, by Markov's inequality $\prob{\card{V(T_P)} \le n^{1/2 + o(1)} / \rho} > 1/2$ so that repeating \JLS \textcolor{\cprunei}{with \prunei} $\Theta(\log n)$ times shortcuts $P$ with probability $\Omega(n^{-3})$.
    Union bounding over $O(n^2)$ paths (one chosen for each pair $s,t \in V$), the algorithm shortcuts all paths with probability $\Omega(n^{-1})$.
    Let us hence return to showing that $\expect{\card{V(T_P)}} \le n^{1/2 + o(1)} / \rho$.

    Assume $\cE_{\ref{obs:few-children}}$ holds\footnote{The contribution from $\cE_{\ref{obs:few-children}}$ not holding is negligible since $\prob{\textrm{not }\cE_{\ref{obs:few-children}}}\expect{\card{V(T_P)} \;\middle|\; \textrm{not }\cE_{\ref{obs:few-children}}} < 1$. This follows from \Cref{obs:few-children} and $\card{V(T_P)} \le n$.}.
    Let $L$ be the number of large nodes in $T_P$.
    We may upper bound $\card{V(T_P)}$ by
    \begin{align*}
        \expect{\card{V(T_P)}}
        &
        \le
        \expect{L + \sum_{\textrm{large } P'} \sum_{\substack{P'' \in \children{P'}\\: P'' \textrm{ small}}} \card{V(T_{P''})}}
        \\
        &
        \le
        \expect{L} + \expect{L \cdot \underset{\ref{obs:few-children}}{O(\log^2 n)} \underset{\ref{lem:small-subtree}}{O(\log n)}}
        \tag{\Cref{obs:few-children} and \Cref{lem:small-subtree}}
        \\
        &
        =
        \paren{1 + O(\log^3 n)}\expect{L} = n^{1/2 + o(1)}/\rho.
        \tag{\Cref{lem:topdown-jls}}
    \end{align*}
\end{proof}

%% file: shortcut-analysis.tex
\subsection{Remaining Analysis of the Shortcut Set}
\label{subsec:shortcut-analysis}

Here we finally prove \Cref{thm:seq-shortcut} in full.

\seqshort*
\begin{proof}
    With \Cref{thm:seq-shortcut-tradeoff-side1}, it only remains to bound the running time of \JLS \textcolor{\cprunei}{with \prunei} and the size of the shortcut set it produces.
    We bound the contribution from \prunei since the contribution from \JLS is taken care of by \Cref{prop:jls-main}.
    Set $k = \Theta(\log n)$.
    We will condition on the following event:
    \begin{align*}
        \set{\forall r \textrm{ there are } \Ot(k^{r+1}) \textrm{ pivots sampled at level }r} \cap \cE_{\ref{prop:jls-1}},
    \end{align*}
    which occurs with probability at least $1 - O(n^{-10})$ by a Chernoff bound, \Cref{prop:jls-1}, and a union bound.
    \\
    \underline{Time}:
    \JLS takes $\Ot(m)$ time by \Cref{prop:jls-main}.
    We next show that the transitive closure calls, from \prunei, takes $\Ot(n\rho^{2\omega - 2})$ time.
    Recall that each level $r$ transitive closure call is made in a subgraph induced on $\relevant{G}{p}$ for each level $r$ pivot $p$, so long as $\card{\relevant{G}{p}} = \Ot(\rho^2)$.
    This takes $\Ot(\card{\relevant{G}{p}}^\omega)$ time per call, using matrix multiplication and repeated squaring of the adjacency matrix.
    We will break these calls into two cases, based on the level of recursion $r$.
    \begin{itemize}
        \item Case 1: $k^r < n/\rho^2$.
            Since there are $\Ot(k^r)$ pivots sampled at recursion level $r$ and below, the time at that level is, up to polylogarithmic factors, $k^r (\rho^2)^{\omega} < n\rho^{2\omega - 2}$.
            Since there are $O(\log n / \log\log n)$ levels, the bound follows.
        \item Case 2: $k^r \ge n/\rho^2$.
            Since there are $\Ot(k^r)$ pivots sampled at recursion level $r$ and $\cE_{\ref{prop:jls-1}}$ holds, the time at that level is, up to polylogarithmic factors,
            \begin{align*}
                k^r \paren{\frac{n}{k^r}}^\omega = \frac{n^{\omega}}{k^{r(\omega - 1)}} \le \frac{n^{\omega}}{(n/\rho^2)^{\omega - 1}} = n\rho^{2\omega -2}.
            \end{align*}
            Since there are $O(\log n / \log\log n)$ levels, the bound follows.
    \end{itemize}
    \underline{Size}:
    The argument for this is similar to the time bound.
    \begin{itemize}
        \item Case 1: $k^r < n/\rho^2$.
            Since there are $\Ot(k^r)$ pivots sampled at recursion level $r$, the number of shortcuts added at that level is, up to polylogarithmic factors, $k^r (\rho^2)^2 < n\rho^2$.
            Since there are $O(\log n / \log\log n)$ levels, the bound follows.
        \item Case 2: $k^r \ge n/\rho^2$.
            Since there are $\Ot(k^r)$ pivots sampled at recursion level $r$ and $\cE_{\ref{prop:jls-1}}$ holds, the number of shortcuts added at that level is, up to polylogarithmic factors,
            \begin{align*}
                k^r \paren{\frac{n}{k^r}}^2 = \frac{n^2}{k^r} \le \frac{n^2}{n/\rho^2} = n\rho^2.
            \end{align*}
            Since there are $O(\log n / \log\log n)$ levels, the bound follows.
    \end{itemize}
    \underline{Main case ($\Ot(m)$ size shortcut set)}:
    Set $\rho = \paren{m/n}^{1/(2\omega - 2)}$.
\end{proof}

%% file: reach.tex
\section{Parallel Reachability}
\label{sec:reach}

In this short section, we parallelize \JLS \textcolor{\cprunei}{with \prunei}, the shortcut set construction shown in \Cref{sec:seq-shortcut}, proving \Cref{thm:par-shortcut}.
We then use the shortcut set to prove \Cref{thm:par-reach}, our main result for parallel reachability.

We use the following result from \cite{blackbox} to reduce the construction of shortcut sets on digraphs to that on so-called shallow digraphs.

\begin{proposition}[Paraphrasing Corollary 3.2 from \cite{blackbox}]
\label{prop:shallow-shortcut}
    Suppose $\lambda > c_0 \log^3 n$ and $\alpha < 1/(c_0\log^2_\lambda n)$ where $c_0$ is a sufficiently large constant.
    
    Suppose there is a parallel algorithm  $\cA_0$ that, given a digraph $G_{0}$ with $n$ vertices and $m_0$ edges and reachability hopbound $h_0 = \lambda \beta$, returns a $\beta$-shortcut set of size $\alpha m_0 + f(n)$.
    
    Then there is a randomized parallel algorithm $\cA$ that, given a digraph $G$ with $n$ vertices and $m$ edges, returns a $\beta$-shortcut set of size $S(m) = \Ot(\alpha m + f(n))$.
    $\cA$ makes a polylogarithmic number of sequential calls to $\cA_0$ on digraphs with at most $S(m) + m$ edges, and takes an additional $\Ot(m)$ work and $\Ot(h_0)$ span.
\end{proposition}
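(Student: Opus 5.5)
The plan is to reduce the hopbound of $G \cup H$ geometrically over rounds. In each round, every subpath with at most $h_0 = \lambda\beta$ hops is compressed to at most $\beta$ hops by a call to $\cA_0$ on an auxiliary digraph whose reachability hopbound is at most $h_0$. Along the way I maintain a shortcut set $H$, initially empty, and the invariant that all edges of $H$ are valid, i.e.\ are reachability pairs of $G$.

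First, the round structure. Let $D_j$ denote the reachability hopbound of $G \cup H$ after round $j$, so $D_0 \le n$. In round $j$, split any fixed $st$-path of hop-length $D_j$ into consecutive blocks of $h_0$ hops. If every block is shortcut to $\beta$ hops, the path ends up with $D_j \beta / h_0 = D_j/\lambda$ hops. After $O(\log_\lambda n)$ rounds we reach $D = O(h_0)$, and a final call on $G\cup H$ itself, which is now shallow, gives a $\beta$-shortcut set. Shortcuts from different rounds are all valid, so the union $H$ is a valid shortcut set.

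Second, building the shallow input for a round. I would build a graph $G_0$ out of $G \cup H$ using hop-bounded clustering with random shifts. Sample centers and random exponential offsets, and run hop-limited forward/backward BFS, limited to $O(h_0)$ hops, from the centers in $G\cup H$. This costs $\Ot(m)$ work and $\Ot(h_0)$ span. I then keep, for each cluster, the induced subgraph of vertices within $h_0$ hops of the center, possibly with one extra star edge through the center; each such added edge is a valid reachability pair. I take $G_0$ to be the disjoint union of these pieces. Each piece has reachability hopbound $O(h_0)$, and by rescaling $\lambda$ by a constant, at most $h_0$. Any edges I add are genuine reachability pairs, so any shortcut $\cA_0$ returns on $G_0$ is valid for $G$. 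The standard random-shift argument says a fixed block of $h_0$ hops lies entirely inside one piece with constant probability. Repeating the clustering $O(\log n)$ times independently within the round makes every block of every one of the $O(n^2)$ fixed paths covered with high probability. That costs $O(\log n)$ sequential calls to $\cA_0$ per round, so polylogarithmically many in total.

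Third, accounting. Each call to $\cA_0$ adds $\alpha m_0 + f(n)$ edges, where $m_0 \le |E(G)| + |H|$ up to a constant for overlapping pieces. Over the $O(\log n \cdot \log_\lambda n)$ calls, the current edge count is multiplied by at most $(1 + O(\alpha \log n))$ per round, for $O(\log_\lambda n)$ rounds. Because $\alpha < 1/(c_0 \log_\lambda^2 n)$, the total blow-up is $O(1)$ up to the polylog slack permitted in $S(m)$. Hence every call is made on a digraph with at most $S(m)+m$ edges, and $|H| = \Ot(\alpha m + f(n))$. The extra work per round is $\Ot(m)$ and the extra span is $\Ot(h_0)$, as claimed.

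I expect the main obstacle to be the second step: producing a shallow $G_0$ that both has hopbound at most $h_0$ and, with constant probability, contains each fixed $h_0$-hop block intact, while using only valid edges in a directed graph. Directed clusters are not closed under reachability, so naive induced subgraphs can create long paths. My first attempt would be to restrict each piece to vertices that are simultaneously within $h_0$ forward hops from the center and within $h_0$ backward hops to it. This forces any path inside the piece to be short, because everything is routed near the center. I would then verify the coverage probability via the memorylessness of the exponential shifts, and use $\lambda > c_0 \log^3 n$ to absorb the $O(\log n)$ shift overshoot.
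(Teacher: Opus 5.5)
The paper does not prove this proposition; it imports it as Corollary 3.2 of \cite{blackbox}. So the question is whether your argument stands on its own, and it does not. Your outer loop is sound. Any shortcut found in a graph whose reachability pairs are reachability pairs of $G$ is valid. And if every block of $h_0/\mathrm{polylog}(n)$ hops of a fixed path in $G\cup H$ is compressed to $\beta$ hops, the hopbound drops by a factor $\lambda/\mathrm{polylog}(n)$ per round. Everything, however, rests on your second step: turning $G\cup H$, whose hopbound is arbitrary, into inputs of true hopbound $O(h_0)$ that contain each fixed short block with constant probability. That step is exactly the nontrivial content of the cited black box, and your sketch does not supply it.

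Concretely, each part of the second step fails.
(i) Your proposed fix is degenerate. A vertex within $h_0$ forward hops of the center $c$ and within $h_0$ backward hops of $c$ both reaches and is reached by $c$, so every such piece lies inside the SCC of $c$. On a DAG, which is the case that matters after contracting SCCs, every piece is $\{c\}$, and no block with even one hop is ever covered.
(ii) Dropping the intersection does not help, because a forward ball need not be shallow, even with star edges $cw$ added. If $c$ has an edge to every vertex of a path $v_1\to\cdots\to v_N$, its forward $1$-ball contains the whole path. Inside it, $v_1$ reaches $v_N$ only with $N-1$ hops, and the star edges are useless because $v_1$ does not reach $c$.
(iii) The random-shift coverage argument you invoke is an undirected argument. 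It uses $B(w_h,t)\subseteq B(w_0,t+h)$ to conclude that the center winning at one end of a block still wins at the other end. For directed balls only $B^-(w_0,t)\subseteq B^-(w_h,t+h)$ holds. An in-neighbour of $w_h$ off the block may not reach $w_0$ at all, so centers winning at $w_h$ need not even reach $w_0$, and there is no constant lower bound on the probability that a block stays in one cluster.

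So you would need a genuinely directed mechanism that destroys long-range reachability inside each input while cutting any fixed short block only with small probability. Without one, you have to invoke \cite{blackbox} as the paper does.

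A secondary issue concerns the size accounting. With $O(\log n)$ calls per round over $O(\log_\lambda n)$ rounds, the size blow-up is $\exp(O(\alpha\log n\log_\lambda n))=\lambda^{O(1/c_0)}$ under $\alpha<1/(c_0\log^2_\lambda n)$. This is not $O(1)$, and it is not even polylogarithmic when $\lambda$ is large. The hypothesis on $\alpha$ is calibrated to $O(\log^2_\lambda n)$ calls in total, so your repetition scheme would also have to change.
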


We are now ready to parallelize \JLS \textcolor{\cprunei}{with \prunei}.

\begin{restatable}[Parallel Near Linear Work Shortcut Set Construction]{theorem}{parshort}
\label{thm:par-shortcut}
    Let $\omeganc$ be the fast matrix multiplication exponent in the work for parallel algorithms using polylogarithmic (in $n$) span.
    There is an $\Ot(m)$ work
    and $\sqrt[2\omeganc-2]{n^{\omeganc+o(1)}/m}$ span
    randomized parallel algorithm that, given an unweighted digraph $G$, outputs with high probability a
    $\sqrt[2\omeganc-2]{n^{\omeganc+o(1)}/m}$-shortcut set $H$ with size 
    $\Ot\paren{m}$.
\end{restatable}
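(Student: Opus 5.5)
We plan to apply \Cref{prop:shallow-shortcut} with $\cA_0$ a parallel implementation of the union of $\Theta(\log n)$ independent calls to \JLS \textcolor{\cprunei}{with \prunei}. We set $\beta = n^{1/2+o(1)}/\rho$ with $\rho = (m/n)^{1/(2\omeganc-2)}$, so that $\beta = \sqrt[2\omeganc-2]{n^{\omeganc+o(1)}/m}$. We take $\lambda = \Theta(\log^3 n)$, $k = \Theta(\log n)$, and $h_0 = \lambda\beta$.

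\textbf{Size and correctness of $\cA_0$.} By \Cref{thm:seq-shortcut-tradeoff-side1}, $\cA_0$ outputs a $\beta$-shortcut set w.h.p.\ on any input, in particular on one with hopbound $h_0$. Its size is $\Ot(nk + n\rho^2) = \Ot(m)$ by the size analysis in \Cref{subsec:shortcut-analysis}. The proposition wants this in the form $\alpha m_0 + f(n)$ with $\alpha < 1/(c_0\log_\lambda^2 n)$; since the size is independent of $m_0$, we take $\alpha$ tiny and $f(n) = \Ot(n\rho^2) = \Ot(m)$. The output is then $\Ot(m)$ edges, and each call to $\cA_0$ is on graphs with $\Ot(m)$ edges.

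\textbf{Work and span of $\cA_0$ on shallow inputs.} First compute SCCs, or rather reduce to the DAG case; we should check whether the blackbox already handles this or whether we contract SCCs with a hopbound-$h_0$ parallel routine, e.g.\ forward/backward reachability via BFS from sampled vertices. The recursion has $O(\log n/\log k)$ levels, and subproblems at one level run in parallel. At each level:
\begin{itemize}
\item The pivots perform forward and backward BFS inside their induced subproblem. Subgraphs of a hopbound-$h_0$ DAG induced by the label classes retain hopbound $h_0$, since by \Cref{obs:s-splitting} every path between two vertices of a class stays inside the class. Each BFS therefore has span $\Ot(h_0)$. The total work matches the sequential $\Ot(mk)$ bound of \Cref{prop:jls-main}, run in parallel with standard primitives.
\item The partition into label classes is done by sorting or hashing label sets, in polylog span and work proportional to the total label count $\Ot(nk)$ per level.
\item \prunei calls $\TC$ by $O(\log n)$ repeated squarings of an adjacency matrix. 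Each squaring uses \Cref{prop:par-mm}, giving $O(\log^2 n)$ span and $\Ot(|\relevant{G}{p}|^{\omeganc})$ work. Summed over all calls, the case analysis of \Cref{subsec:shortcut-analysis} gives $\Ot(n\rho^{2\omeganc-2}) = \Ot(m)$ work.
\end{itemize}
Total work is $\Ot(m)$ and span is $\Ot(h_0) = \Ot(\beta)$.

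\textbf{Concluding.} \Cref{prop:shallow-shortcut} then yields an algorithm $\cA$ with $\Ot(m)$ work and $\Ot(\beta)$ span, since it makes polylogarithmically many sequential calls to $\cA_0$, each on $\Ot(m)$ edges. Its output is a $\beta$-shortcut set of size $\Ot(m)$. The polylog factors are absorbed into $n^{o(1)}$.

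\textbf{Main obstacle.} The delicate point is the span claim for the BFS-based steps. We must argue that every subproblem graph on which we BFS, together with any already-added edges, keeps reachability hopbound $O(h_0)$, so that truncated parallel BFS computes exact $\anc{G}{p}$ and $\desc{G}{p}$. We would argue this from the path-closure of label classes, and check that the high-probability guarantees survive the union over the polylogarithmically many invocations.
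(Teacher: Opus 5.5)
Your proposal is correct and follows the paper's proof. It uses the same reduction via \Cref{prop:shallow-shortcut} and the same $\cA_0$: parallel \JLS with \prunei on shallow DAGs, where label classes are path-closed so they keep hopbound $h_0$ (every BFS has span $\Ot(h_0)$), and \TC calls are done by repeated parallel squaring in polylogarithmic span. The only bookkeeping difference is in the parameters. The paper sets $\rho=(m_0/n)^{1/(2\omega-2)}/\mathrm{polylog}(n)$, so the output falls into the $\alpha m_0$ term and $f(n)\le n$ covers only the SCC stars. You instead fix $\rho$ from the global $m$ and put the $\Ot(m)$ into $f(n)$, which also works and keeps $\beta$ fixed across calls.

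The step you left open is not handled by the blackbox, since $\cA_0$ receives a shallow digraph rather than a DAG. The paper closes it essentially as you suggest. It runs the SCC algorithm of \cite{schudy2008finding} with a parallel BFS oracle; its recursive instances are induced on intervals of the SCC topological order, so they keep hopbound $h_0$, giving $\Ot(m)$ work and $\Ot(h_0)$ span. It then adds a bidirected star in each SCC and builds a $\beta/2$-shortcut set on the condensation.
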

\begin{proof}
    \ \\
    \underline{Implementing $\cA_0$ (Part 1) --- Parallel reduction to shallow DAGs\footnote{This part may be used for any shortcut set algorithm, hence one may assume that $\cA_0$ is given a shallow DAG (as opposed to a shallow digraph).}}:
    Observe that we can use the algorithm of \cite{schudy2008finding} with a parallel BFS oracle to find the SCCs of $G_0$ in $\Ot(m)$ work and $\Ot(h_0)$ span.
    To see this, note that the algorithm of \cite{schudy2008finding} recurses into graphs induced on intervals of the topological order of the SCCs of $G_0$, hence each recursive instance maintains a $h_0$ reachability hopbound.
    
    By adding a bidirected star in each SCC of $G_0$ to our shortcut set, it then suffices to construct a $\beta/2$-shortcut set for the DAG where the SCCs of $G_0$ are contracted.
    This incurs at most a factor of two dilation to give a $\beta$-shortcut set for $G_0$.
    We henceforth assume $G_0$ is a DAG with reachability hopbound $h_0$.
    \\
    \underline{Implementing $\cA_0$ (Part 2) --- Parallel \JLS \textcolor{\cprunei}{with \prunei} on shallow DAGs}:
    Observe that for any $h_0 \gg n^{1/2 + o(1)}/\rho$, \JLS \textcolor{\cprunei}{with \prunei} runs in work $\Ot\paren{m + n\rho^{2\omeganc-2}}$ and span $\Ot(h_0)$ for DAGs with reachability hopbound $h_0$.
    
    To see this, first note that the $h_0$ reachability hopbound is preserved in all recursive subgraphs: if an arbitrary pair $u \preceq v$ are sent to the same recursive instance $G'$, then every vertex between $u$ and $v$ have the same reachability relations (to the pivots) as $u,v$ and are thus also sent to $G'$.
    In particular, the $h_0$ hop path from $u$ to $v$ is sent to $G'$.
    The reachability relations, which are computed by parallel BFS calls, are therefore done in $\Ot(\card{E(G')})$ work and $\Ot(h_0)$ span in each recursive instance $G'$.
    Next, note that each transitive closure call is done in $\Ot(\rho^{2\omeganc})$ work and polylogarithmic in $n$ span.
    \\
    \underline{Setting parameters}:
    Fix $\rho = \paren{m_0/n}^{1/(2\omeganc - 2)}/\textrm{polylog}(n)$ for a sufficiently large power in the polylogarithm term.
    Let the output of \JLS \textcolor{\cprunei}{with \prunei} on shallow DAGs be a $\beta$-shortcut set $H$ with $\beta = n^{1/2 + o(1)}/\rho = \sqrt[2\omeganc-2]{n^{\omeganc+o(1)}/m_0}$.
    Note that for this setting of $\rho$, we have $\card{H} = \Ot(n\rho^2) = \alpha m_0$ where $\alpha < 1/(c_0\log^2_\lambda n)$.
    Then, pulling back to shallow digraphs, note that $f(n)$ accounts for the edges added by the bidirected stars and hence $f(n) \le n$ and $S(m) = \Ot(m)$.
    \\
    \underline{Putting things together}:
    Invoking \Cref{prop:shallow-shortcut}, $\cA$ outputs with high probability a
    $\sqrt[2\omeganc-2]{n^{\omeganc+o(1)}/m}$
    -shortcut set with size $\Ot(m)$
    in $\Ot(m)$ work and $\sqrt[2\omeganc-2]{n^{\omeganc+o(1)}/m}$ span.
\end{proof}

\parreach*
\begin{proof}
    This follows from using \Cref{thm:par-shortcut} to extract a $\sqrt[2\omeganc-2]{n^{\omeganc+o(1)}/m}$
    -shortcut set $H$ of $G$ with size $\Ot(m)$
    in $\Ot(m)$ work and $\sqrt[2\omeganc-2]{n^{\omeganc+o(1)}/m}$ span.
    Then, run a parallel BFS on $G \cup H$ from $s$ in $\Ot(m)$ work and $\sqrt[2\omeganc-2]{n^{\omeganc+o(1)}/m}$ span.
\end{proof}

%% file: hopset.tex
\section{Sequential Hopset Construction}
\label{sec:seq-hopset}

In this section we prove \Cref{thm:seq-hopset}.
\Cref{subsec:cfr} gives a high level overview of the \CFR hopset.
\Cref{subsec:logical} goes into further detail, summarizing aspects of the analysis in \cite{cao2020efficient} that are relevant to us (for full detail, refer to \cite{cao2020efficient}); there are also some general lemmas related to the \CFR analysis proved in this section.
We then show how to improve the hopset bounds via a small change to \CFR in \Cref{subsec:prune-2} and finish things up.

\input{cfr}
\input{logical}
\input{prune-2}

%% file: cfr.tex
\subsection{The \CFR Hopset}
\label{subsec:cfr}

In this section, we describe the hopset construction (\CFR) of \cite{cao2020efficient}.

\paragraph{High Level Idea of \CFR. }
It is instructive to first note that \JLS with weights given to shortcut edges does not work for computing a $(\beta, \eps)$-hopset since bridges are not necessarily good pivots.
For example, consider our analysis for shortcut sets where we pick an aribtrary path $P$ and examine its subproblem tree.
Consider a node in this tree --- say, a subpath $P'$ from $u$ to $v$ --- and what happens if we select a bridge $p$.
In \JLS, $P'$ would then be a leaf of this tree and we would shortcut $P'$ by taking the edges $up$ and $pv$.
But this may accumulate too high an error since it is possible that $\dist(u,p) + \dist(p,v) \gg \dist(u,v)$.

The main idea of \cite{cao2020efficient}, then, is to make sure that we can indeed shortcut through bridges without paying too much in the error.
Consider the following idealized scenario for a fixed path $P$ which we wish to shortcut:
\begin{itemize}
    \item Every level $r$ subpath in the subproblem tree has length $D_r$, which is decreasing in $r$.
    \item Instead of running SSSP to/from each pivot, we run a $10 D_r$ distance-limited SSSP.
\end{itemize}
Suppose $P''$ is a level $r$ path from $u''$ to $v''$ for which a bridge $p$ is selected as a pivot (i.e. in a $10D_r$ distance limited to / from $p$, there are some vertices in $P''$ that reach / are reached by $p$).
Consider the level $r-L$ ancestor $P'$ of $P''$ from $u'$ to $v'$, with $L$ chosen to satisfy $10D_r \approx\eps D_{r-L}/2$.
Instead of using the shortcuts $u''p$ and $pv''$, which do not approximate $P''$ well, we can use the shortcuts $u'p$ and $pv'$ to approximate $P'$ since $\dist(u',p) + \dist(p,v') \le D_{r-L} + 20 D_r \le (1+\eps)D_{r-L}$.
Put otherwise, $p$ being a bridge for the level $r$ subpath $P''$ certifies that $p$ can be used to shortcut the level $r-L$ ancestor subpath $P'$ without too much error.

At any rate, the \CFR construction can be described at a high level as follows.
Let $W$ be the largest edge weight, which we assume to be polynomially bounded.
Firstly, $\log(nW)$ many path distances are guessed and, for a guess $D$, paths with length roughly $D$ will be shortcutted by the hopset to $n^{1/2 + o(1)}$ hops.
The hopset is constructed by an algorithm described recursively.
At recursion level $r$, roughly $k^r$ pivots are selected at random, like \JLS.
While there is a minor but technical issue where we are no longer able to assume that the input is a DAG, there are two key differences here with \JLS to highlight:
\begin{itemize}
    \item Instead of reachability relations like \JLS, a distance-limited search is computed and therefrom are the relations derived.
        The distance-limit at level $r$ is roughly $D_r$ multiplied with lower order terms
        where the distance-limits are defined to be shrinking: $\sqrt{k} D_r \le D_{r-1}$ and $D_0 = D$.
        Also, in order to get similar guarantees to \JLS, the searches will really be done up to distance $D_r$ multiplied by a random scalar between $[\eta_{\textrm{min}}, 2(\eta_{\textrm{min}}+1)]$.
        Here, the terms $k$ and $\eta_{\textrm{min}}$ are to be determined later.
    \item In \JLS, randomly sampled vertices actually play two conceptually different roles at the same time: \emph{shortcutters} and \emph{pivots}.
        A vertex is a shortcutter when edges are added to/from it.
        A vertex is a pivot when it is used to determine recursive subproblems.
        \CFR decouples these roles, where level $r$ pivots are shortcutters at level $r - L$, with $L$ to be determined later.
        That is, distance-limited searches to/from shortcutters are computed at the earlier recursion level $r-L$ with distance-limit roughly $D_{r-L} \gg D_r$ and it is from here that edges are added to the hopset.
        These same vertices will send labels out at level $r$, where the distance-limit is $D_r$, to determine the level $r+1$ subproblems.
        When a pivot is sampled at level $r$, it being a bridge would certify that we can shortcut through it at the earlier level $r-L$.
\end{itemize}

Since the searches are distance-limited, a problem arises when considering a path $P$ in the analysis.
When some pivots $S$ are selected at level $r$ so that for all $s \in S$ either $s \preceq_{D_r} P$ or $P \preceq_{D_r} s$, it does not mean that $P$ will be split into at most $O(\card{S})$ subpaths (i.e. the analogue of \Cref{obs:s-splitting} does not hold) and, moreover, a recursive subproblem could contain non-contiguous subpaths (see \Cref{fig:broken-p}).

\begin{figure}[h]
    \centering
    \includegraphics[scale=0.6]{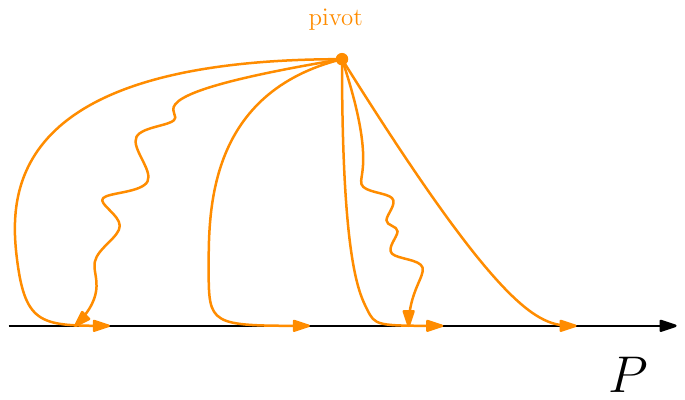}
    \caption{
        An example where a single (ancestor) pivot chosen.
        $P$ is split across two subproblems, each containing a collection of disjoint subpaths of $P$.
        In \JLS, on the other hand, $P$ would have just been split into a left half (before the first point reached by the pivot) and a right half.
        This issue occurs because searches are distance-limited in \CFR.
    }
    \label{fig:broken-p}
\end{figure}
To circumvent this, \CFR also recurses into \emph{fringe} subproblems (with the usual subproblem being called a \emph{core} subproblem) which enable us, with some care taken, to get an analogue of \Cref{obs:s-splitting}.
These fringe subproblems are comprised of vertices at the boundaries of distance-limited searches from pivots.
We have omitted some detail (unnecessary to understand our proofs) from the above description, but the algorithm is described completely below (with helpful comments in blue text).

To fully realize the main insight above (where bridges certify good shortcuts) we are left with the question of how to ensure level $r$ subproblems have length no more than $D_r$.
We will defer this to \Cref{subsec:logical}, as that property is maintained in the analysis and not algorithmically.
The key points to take away, before we move on to the formal description of \CFR, are: (i) Searches are now distance-limited; (ii) Shortcutting and pivoting have been decoupled; (iii) There are two types of subproblems, namely, core and fringe.

\paragraph{The Hopset Construction. }
There are global parameters floating around in their algorithm, which will eventually be finalized to:
\begin{itemize}
    \item $k = \Theta(\log n)$.
    \item $L = 15 - 2 \log_k \eps$.
    \item $\lambda = 100$.
    \item $c$ satisfies $k^c = \Theta\paren{\frac{\lambda^L k^{(L-1)/2}}{\log^3 n}}$.
    \item $\eta_{\textrm{min}} = 16\lambda^2 k^2 \log^2 n - 1$ and $\eta_{\textrm{max}} = 2(\eta_{\textrm{min}}+1)$.
\end{itemize}
We call this the \emph{final parameter setting}.
The details of these are not so important except that they are the settings in \cite{cao2020efficient} that yield a $(n^{1/2 + o(1)},\eps)$-hopset with size $\Ot(n)$ in $\Ot(m/\eps^2)$ time and this is the hopset we build off of.

\begin{tbox}
    \algname{Outer Shell of \CFR}\\
    \textbf{Global Parameters:} $k$, $L$, $\lambda$, and $c$ are global parameters to be fixed later.\\
    \textbf{Input:} A weighted directed graph $G = (V, E)$.\\
    \textbf{Output:} A weighted directed approximate hopset $H \subseteq V^2$.
    \begin{enumerate}
        \item Repeat $\Theta\paren{\log n}$ times: \qquad\qquad \textcolor{blue}{\texttt{(boosting to succeed whp)}}
            \begin{enumerate}
                \item For $j \in [-1, \log(nW)]$: \qquad\qquad \textcolor{blue}{\texttt{(guessing a distance $\approx 2^j$)}}
                    \begin{enumerate}
                        \item For each $v \in V$, set $\ell(v) \gets$ the smallest $r \in [0,\log_k n]$ such that a fresh $\textrm{Bernoulli}\paren{\frac{\lambda k^{r+1}\log n}{n}}$ succeeds.
                        \item For each $v \in V$ with $\ell(v) \le L$:
                            \begin{itemize}
                                \item Add $uv$ with weight $\dist_G(u,v)$ to $H$ for all $u \in \danc{G}{v}{2^{j+1}}$.\footnote{Recall the definition of $\danc{G}{v}{d}$ from \Cref{sec:preliminaries}.}
                                \item Add $vu$ with weight $\dist_G(v,u)$ to $H$ for all $u \in \ddesc{G}{v}{2^{j+1}}$.
                            \end{itemize}
                        \item $H \gets H\ \cup$ \CFR on $G, r = 0, D=2^j/k^c$.
                    \end{enumerate}
            \end{enumerate}
        \item Output $H$.
    \end{enumerate}
\end{tbox}

\begin{tbox}
    \algname{\CFR}\\
    \textbf{Global Parameters:} $k$, $L$, and $\lambda$ are global parameters. $\ell(v)$ is the recursion level at which $v$ becomes a pivot. $n$ is the number of vertices in the base input graph (it thus remains fixed through all recursive calls)\\
    \textbf{Input:} A weighted directed graph $G = (V, E)$, a recursion level $r$, and a guessed distance $D$.\\
    \textbf{Output:} A weighted directed approximate hopset $H \subseteq V^2$.
    \begin{enumerate}
        \item $D_r \gets \frac{D}{\lambda^r k^{r/2}}$.
        \item For each $v \in V$ with $\ell(v) = r + L$:
            \begin{itemize}
                \item Add $uv$ with weight $\dist_G(u,v)$ to $H$ for all $u \in \danc{G}{v}{32 \lambda^2 k^2 D_r \log^2 n}$.
                \item Add $vu$ with weight $\dist_G(v,u)$ to $H$ for all $u \in \ddesc{G}{v}{32 \lambda^2 k^2 D_r \log^2 n}$.
            \end{itemize}
        \item For each $v \in V$ with $\ell(v) = r$:
            \begin{enumerate}
                \item Sample $\sigma_v \in [1, 4\lambda^2 k \log^2 n]$ uniformly at random.
                \item \textcolor{blue}{\texttt{(minimizing fringe set size)}}\\
                    Choose $\eta_v \in 4k[(\sigma_v - 1), \sigma_v] + 16 \lambda^2 k^2 \log^2 n$ such that $\card{\drelevant{G}{v}{(\eta_v+1)} \setminus \drelevant{G}{v}{(\eta_v-1)}}$ is minimized.
                \item \textcolor{blue}{\texttt{(fringe set)}}\\
                    $V_v^{\textrm{fringe}} \gets \drelevant{G}{v}{(\eta_v+1)D_r} \setminus \drelevant{G}{v}{(\eta_v-1)D_r}$.
                \item \textcolor{blue}{\texttt{(solve fringe problem)}}\\
                    $H \gets H \cup \CFR \textrm{ on } G[V_v^{\textrm{fringe}}], r+1, D$.
                \item Add $\vanc{v}$ label to all $u \in \danc{G}{v}{\eta_v D_r} \setminus \ddesc{G}{v}{\eta_v D_r}$.
                \item Add $\vdesc{v}$ label to all $u \in \ddesc{G}{v}{\eta_v D_r}  \setminus \danc{G}{v}{\eta_v D_r}$.
                \item Add \ding{55} label to all $u \in \danc{G}{v}{\eta_v D_r} \cap \ddesc{G}{v}{\eta_v D_r}$.
            \end{enumerate}
        \item $V_1, V_2, \ldots, V_t \gets$ Partition of all $v \in V$ with no \ding{55} label, such that $x, y \in V_i$ iff $x$ and $y$ have the exact same labels.
            \qquad\qquad \textcolor{blue}{\texttt{(core subproblems)}}
        \item Output $H \cup \paren{\bigcup_{i \in [t]} \CFR \textrm{ on } G[V_i], r+1, D}$.
            \qquad\qquad \textcolor{blue}{\texttt{(solve core subproblems)}}
    \end{enumerate}
\end{tbox}

\begin{proposition}[Paraphrasing of Theorem 14 from \cite{cao2020efficient}]
\label{prop:cfr-main}
    Under the final parameter setting,
    the \CFR construction runs in time $\Ot(m/\eps^2)$ and produces a $(n^{1/2+o(1)}, \eps)$-hopset of size $\Ot(n/\eps^2)$ with high probability.
\end{proposition}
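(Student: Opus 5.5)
Since \Cref{prop:cfr-main} is imported from \cite{cao2020efficient}, the plan is to reconstruct its three components (running time, size, hopbound) in the same style as the \JLS analysis of \Cref{sec:seq-shortcut}, using the final parameter setting throughout. Fix one of the $\Theta(\log n)$ boosting repetitions and one guess $j$, so $D = 2^j/k^c$. The first structural step is the distance-limited analogue of \Cref{prop:jls-1}. I would show by induction on $r$ that, with high probability, every vertex $v$ in a level-$r$ (core or fringe) instance $G'$ has $\card{\drelevant{G'}{v}{\eta_{\max} D_r}} \le n k^{-r}$ up to polylogarithmic factors. The induction step is the one sketched after \Cref{prop:jls-1}: if the ball around $v$ were larger, one of its first $nk^{-r}$ members would be a level-$r$ pivot with high probability. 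That pivot labels $v$ differently from the far part of the ball, or, at the boundary, places it in a fringe.

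Size and time then follow by charging. The choice of $\eta_v$ minimizing the fringe size, over $\Theta(k)$ consecutive integer shells inside a window of width $4k$, means the fringe has size at most $O(1/k)$ times the relevant ball of $v$. Hence the total size of all fringe instances at a level is a $1/k$-fraction of the pivot balls. With $\Ot(k^{r+1})$ pivots each holding $\Ot(n k^{-r})$ vertices, the fringes add only a constant factor over the $O(\log n/\log k)$ levels. The same accounting applies to edges, so each level costs $\Ot(m)$ for the distance-limited Dijkstra searches. Shortcutters at level $r+L$, acting at level $r$, number $\Ot(k^{r+L+1})$, and each has a ball of size $\Ot(n k^{-r})$. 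This gives $\Ot(n k^{L+1})$ hopset edges, which is $\Ot(n/\eps^2)$ since $k^{L} = k^{15}\eps^{-2}$. The outer shell's edges for $\ell(v)\le L$ are bounded the same way. The time bound $\Ot(m/\eps^2)$ comes from running each shortcutter search within its instance.

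For the hopbound, I would fix a shortest path $P$ of length roughly $2^j$ and build a subproblem tree as in \Cref{subsec:jls}. The tree uses core and fringe children, and the split is now driven by distance-limited labels. Two invariants need to be proved. The first is a splitting analogue of \Cref{obs:s-splitting}: $z$ pivots split $P'$ into $O(z)$ core and fringe pieces. The random $\eta_v$ and the fringe definition make each label region an interval of $P'$ up to its fringe boundary. The second is that every level-$r$ node has length $O(D_r)$, to be enforced by cutting long subpaths artificially in the analysis. The long pieces are few because $D_{r-1}\ge\sqrt{k}\,D_r$ and pieces have length at least $D_r$. With these invariants, a bridge pivot at level $r$ certifies that the level-$(r-L)$ ancestor is shortcut with stretch $1+\eps$, by the computation in \Cref{subsec:cfr}. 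The count of tree nodes then follows from the analogue of \Cref{prop:jls-2} together with the potential argument of \Cref{lem:topdown-jls}, using $\phi = \sqrt{\card{\mathrm{R}_{d}}}$. The extra factors from fringes and artificial cuts, $\lambda^{L}$ and $\sqrt{k}^{\,L}$, are absorbed into $n^{o(1)}$.

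I expect the main obstacle to be the distance-limited analogue of \Cref{prop:jls-2}, because pieces of $P'$ now live in fringe instances with different distance thresholds, and non-contiguous pieces (\Cref{fig:broken-p}) must be handled. My first attempt would follow the random-ordering argument sketched in \Cref{app:sketch-key-lemma}, applied to the random scalars $\sigma_v$. The uniformly random choice of $\sigma_v$ should make the probability that a given vertex of $P'$ lands in a fringe at most $O(1/(\lambda^2 k\log^2 n))$. The core pieces then satisfy the usual $2/(\card{S}+1)$ bound, and the fringe pieces contribute only a lower-order correction.
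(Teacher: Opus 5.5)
The paper does not reprove this proposition. It imports it as Theorem~14 of \cite{cao2020efficient}, together with \Cref{prop:cfr-1,prop:cfr-2,prop:cfr-3}, so your proposal is a reconstruction of CFR's argument rather than of anything in the paper.

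Your shortcutter count is fine once duplication of vertices across instances is controlled. The shortcutter radius $32\lambda^2k^2\log^2 n\cdot D_r$ equals $\eta_{\textrm{max}}D_r$, so \Cref{prop:cfr-1} gives $\Ot(nk^{L+1})=\Ot(n/\eps^2)$ edges. Your hop-bound outline also matches how \Cref{subsec:logical} organizes the analysis.

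The genuine gap is the fringe accounting for size and time. Minimizing $\eta_v$ over $4k+1$ shells only gives $\card{V_v^{\textrm{fringe}}}\lesssim nk^{-(r+1)}$ per pivot. Your $\Ot(k^{r+1})$ counts distinct pivots, but a pivot searches and spawns a fringe in every instance that contains a copy of it. With total instance volume $N_r$ at level $r$, there are about $N_r\lambda k^{r+1}\log n/n$ pivot copies. Your bound therefore yields only $N_{r+1}\le N_r(1+O(\lambda\log n))$, which already permits $\Theta(\lambda n\log n)$ extra volume at level $0$ rather than a constant factor. Over the levels this compounds to $(\lambda\log n)^{\Theta(\log n/\log\log n)}=n^{\Theta(1)}$. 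Per vertex, nothing in your argument stops $u$ from lying in the fringes of all $O(\lambda k\log n)$ pivots related to it.

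The missing ingredient is the randomness of $\sigma_v$, which is the role of \Cref{prop:cfr-2}. The event $u\in V_v^{\textrm{fringe}}$ pins $\eta_v$ to two integers determined by $u$'s distance to $v$. That pins $\sigma_v$ to $O(1)$ of its $4\lambda^2k\log^2 n$ values, so $\prob{u\in V_v^{\textrm{fringe}}}=O(1/(\lambda^2k\log^2 n))$. Hence $u$ gains $O(1/(\lambda\log n))$ expected fringe copies per level, and $(1+O(1/(\lambda\log n)))^{O(\log n/\log\log n)}=1+o(1)$. You do state this per-vertex bound, but only for vertices of $P'$ in the hop analysis. The deterministic minimization has a different job: it caps fringe instances at about $nk^{-(r+1)}$ vertices, which is exactly what gives \Cref{prop:cfr-1} inside fringe instances.

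A smaller gap is your induction step for \Cref{prop:cfr-1}. A pivot $p$ among the $nk^{-r}$ members of $v$'s ball closest to $v$ need not separate the far part of the ball. Suppose $p\to v$ and $p\to u$ are short edges, $u\to v$ has length close to $\eta_{\textrm{max}}D_r$, and neither $u$ nor $v$ reaches $p$. Then $u$ is far in $v$'s in-ball, yet it gets the same label $\vdesc{p}$ as $v$. In \JLS, ``first'' refers to a topological order of a DAG, which is unavailable here. Also, landing in a fringe does not remove a vertex from its core instance; only bridge-labelled vertices are dropped.

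What works is pairwise domination, as in \Cref{app:sketch-key-lemma}. Take distinct $x,y$ in $v$'s in-ball and suppose neither, as a level-$(r-1)$ pivot, separates the other from $v$. Then $\eta_xD_{r-1}<\dist(y,x)\le\eta_yD_{r-1}$ and $\eta_yD_{r-1}<\dist(x,y)\le\eta_xD_{r-1}$, forcing $\eta_x<\eta_y<\eta_x$. The tournament bound then leaves only $O(nk^{-r}/\lambda)$ unseparated vertices with high probability.
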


The next two statements from \cite{cao2020efficient} will also be useful in our analysis.
The first is an analogue \Cref{prop:jls-1} but in a distance-limited sense, and the second says that fringe subproblems are small in expectation.

\begin{proposition}[Paraphrasing of Lemma 3 from \cite{cao2020efficient}]
\label{prop:cfr-1}
    With probability at least $1 - O(n^{-10})$, the following event $\cE_{\ref{prop:cfr-1}}$ holds.
    For all recursion levels $r$, for all level $r$ recursive instances $G'$,
    \begin{align*}
        &\card{\danc{G'}{v}{\eta_{\textrm{max}}D_r}} \le n k^{-r}
        \\
        &\card{\ddesc{G'}{v}{\eta_{\textrm{max}}D_r}} \le n k^{-r}
    \end{align*}
    for all $v \in V(G')$.
\end{proposition}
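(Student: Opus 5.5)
We prove the statement by induction on the recursion level $r$, mirroring the argument sketched for \Cref{prop:jls-1}, and union bounding over all (polynomially many) recursive instances and vertices at the end. Note first that the instance structure is fully determined by the sampled levels $\ell(\cdot)$, the $\sigma_v$'s and the derived $\eta_v$'s. Also, every level $r$ instance $G'$ is an induced subgraph of its parent, so distances can only increase and balls can only shrink when we descend.

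\textbf{Base case.} At $r=0$ the bound $nk^{0}=n$ is trivial.

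\textbf{Inductive step, core children.} Fix a level $r$ instance $G'$ with $r \ge 1$, its parent $G''$ at level $r-1$, and a vertex $v \in V(G')$. Suppose, for contradiction, that $\card{\danc{G'}{v}{\eta_{\textrm{max}}D_r}} > nk^{-r}$. Since $G'$ is an induced subgraph of $G''$, the same set of more than $nk^{-r}$ vertices lies in $\danc{G''}{v}{\eta_{\textrm{max}}D_r}$. Order these vertices by distance to $v$ in $G''$ and let $A$ be the closest $nk^{-r}$ of them. Each vertex of $A$ becomes a level $r-1$ pivot with probability about $\lambda k^{r}\log n/n$, and this is independent of the recursion above level $r-1$ that determined $G''$. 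Hence with probability $1-n^{-\Omega(\lambda)}$ some $p \in A$ has $\ell(p)=r-1$.

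We then use the shrinking radii. Since $\sqrt{k}\lambda D_r = D_{r-1}$, we get $\eta_{\textrm{max}}D_r \le \eta_{\textrm{min}}D_{r-1} \le \eta_p D_{r-1}$, because $\eta_{\textrm{max}} = 2(\eta_{\textrm{min}}+1) \le \lambda\sqrt{k}\,\eta_{\textrm{min}}$. So $v \in \ddesc{G''}{p}{\eta_p D_{r-1}}$, and $v$ receives the label $\vdesc{p}$ or \ding{55}. In the \ding{55} case $v$ is not in any core child, so consider a vertex $u$ that is far from $v$ and not within $\eta_pD_{r-1}$ of $p$. Such a $u$ receives no $p$-label, so it cannot share the core child of $v$; this gives the contradiction for the core ball.

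The remaining vertices $u$ in the ball must be handled via the triangle inequality. The plan is to take $p$ among the \emph{nearest} ancestors of $v$, at distance $d(p,v)\le d(u,v)$, so that $d(p,u)$ is not obviously controlled. Because of this, I expect the argument will instead follow JLS/CFR: a vertex $u$ staying in $v$'s child must have exactly the same labels as $v$. Then $\danc{}{v}{}$ at level $r$ is contained in the set of $u$ with $u \preceq_{\eta_{\textrm{max}}D_r} v$ that are not $\eta_pD_{r-1}$-reached by $p$. Since $p \preceq v$ within small distance and $u \preceq v$, this does not immediately bound the ball. I would therefore adapt the JLS argument, which uses the rank ordering: consider the $nk^{-r}$ vertices $u$ of the level $r$ ball, which were already present in the parent. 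With high probability one of them is a pivot $p$ at level $r-1$, and $p$ itself then lies in $v$'s ball and has $p \preceq v$ at distance $\le \eta_{\textrm{max}}D_r \le \eta_pD_{r-1}$. Thus $v$ gets a $p$-label, while $p$ gets the \ding{55} label (it trivially reaches itself) and is removed. Iterating this over the random order of vertices shows that the surviving ball loses every vertex ranked after the first sampled one, giving the bound. This follows exactly the JLS style, treating labels as ``$u$ and $v$ differ on $p$ unless $u \preceq p$''.

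\textbf{Fringe children.} A fringe instance $V_p^{\textrm{fringe}}$ lies inside $\drelevant{G''}{p}{(\eta_p+1)D_{r-1}}$. The same sampling argument applies, since the fringe is an induced subgraph of $G''$ and the ball of $v$ in the parent bounds the ball in the child.

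\textbf{Main obstacle.} The hardest step is making the ordering argument rigorous with distance-limited (rather than full) reachability. Specifically, one must show that every vertex $u$ ranked after the sampled pivot $p$ disagrees with $v$ on some label. To handle this I would first try the standard trick of CFR Lemma 3: consider the random permutation induced by levels and pick $p$ to be the earliest-level vertex of the ball. This should be combined with the inequality $\eta_{\textrm{max}}D_r \le \eta_{\textrm{min}}D_{r-1}$ to absorb the triangle-inequality slack.
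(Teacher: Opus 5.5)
There are genuine gaps in both cases of your inductive step. The paper imports this statement from \cite{cao2020efficient} without proof, so I judge your argument on its own.

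\textbf{Core children.} Put $B=\danc{G''}{v}{\eta_{\textrm{max}}D_r}$. The whole difficulty is to show that few vertices of $B$ end up in $v$'s core child, and your argument never establishes this.

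You correctly note that ordering $B$ by distance to $v$ fails. Your fallback is that ``the surviving ball loses every vertex ranked after the first sampled one'', for some rank, random or level order. That is exactly the claim to be proved, and you exhibit no order with that property. For \Cref{prop:jls-1} a topological order works, because a surviving ancestor $u$ must satisfy $p\preceq u\preceq v$. Here, when $v$ carries $\vdesc{p}$, a vertex $u\in B$ stays with $v$ iff $\dist_{G''}(p,u)\le\eta_pD_{r-1}<\dist_{G''}(u,p)$. It is not iff $u$ is \emph{not} reached by $p$, nor iff $u\preceq p$, as you write. A single pivot can leave almost all of $B$ in $v$'s child, so one pivot plus an ordering cannot suffice.

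Your sampling step is also circular. The level-$r$ ball is determined only after the level-$(r-1)$ coins are flipped. Moreover, a core child never contains a level-$(r-1)$ pivot, since each pivot labels itself \ding{55}. So the event that one of its $nk^{-r}$ vertices is a level-$(r-1)$ pivot has probability $0$, not $1-o(1)$. You must sample in $B$, which is fixed beforehand, and use all sampled pivots of $B$ jointly.

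One way to do this:
\begin{itemize}
\item Write $w\mapsto u$ if $\dist_{G''}(w,u)\le\eta_wD_{r-1}<\dist_{G''}(u,w)$.
\item Since $\eta_{\textrm{max}}D_r\le\eta_{\textrm{min}}D_{r-1}$, every sampled $w\in B$ gives $v$ the label $\vdesc{w}$ or \ding{55}; in the latter case $v$ has no core child. So a vertex $u$ of $v$'s core-child ball must be unsampled, with $w\mapsto u$ for every sampled $w\in B$.
\item Having both $w\mapsto u$ and $u\mapsto w$ would force $\eta_u<\eta_w<\eta_u$. So $\mapsto$ has at most one arc per pair, and it is fixed by $G''$ and the $\sigma$'s before the level-$(r-1)$ coins are flipped.
\item A vertex with at least $T$ non-in-neighbours in $B$ survives with probability at most $(1-q)^T$, where $q=\lambda k^r\log n/n$.
\item Counting arcs inside the set of vertices with fewer than $T$ non-in-neighbours shows that this set has at most $2T$ elements. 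Taking $T=nk^{-r}/2$ finishes.
\end{itemize}
This is the distance-limited version of the tournament argument sketched for \Cref{prop:jls-2} in the appendix. Vertices inherited through fringe instances whose level $\ell$ is already below $r-1$ can never be sampled, so they need separate accounting.

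\textbf{Fringe children.} Here ``the same sampling argument'' has nothing to act on. The instance $G''[V_p^{\textrm{fringe}}]$ is never refined by the labels of other pivots, so inheriting the parent ball only gives $nk^{-(r-1)}$. What you need is the minimizing choice of $\eta_p$.

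A vertex $u$ lies in $\drelevant{G''}{p}{(\eta+1)D_{r-1}}\setminus\drelevant{G''}{p}{(\eta-1)D_{r-1}}$ for at most two integers $\eta$, determined by $\min\{\dist(p,u),\dist(u,p)\}$. The window for $\eta_p$ contains $4k$ candidates with $\eta+1\le\eta_{\textrm{max}}$. Hence, by the inductive hypothesis, the chosen fringe has at most $\frac{2}{4k}\card{\drelevant{G''}{p}{\eta_{\textrm{max}}D_{r-1}}}\le\frac{2\cdot 2nk^{-(r-1)}}{4k}=nk^{-r}$ vertices. So every ball in the fringe instance obeys the bound deterministically.
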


\begin{proposition}[Paraphrasing of Lemma 4 from \cite{cao2020efficient}]
\label{prop:cfr-2}
    For any $v$ at any recursion level $r$,
    \begin{align*}
        \expect{\card{V_v^{\textrm{fringe}}}} \le \frac{1}{4 \lambda k \log n}.
    \end{align*}
\end{proposition}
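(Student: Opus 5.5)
I will read the statement as the per-vertex membership bound of Lemma 4 in \cite{cao2020efficient}. That is, I will show that for every fixed vertex $u$,
\[
\expect{\mathbf{1}[u \in V_v^{\textrm{fringe}}]} \le \frac{1}{4\lambda k \log n},
\]
where the randomness is only over the choice of $\sigma_v$. This is the form in which the bound is used when charging fringe subproblems. I am not establishing a bound on the total fringe size: as an absolute bound on $\card{V_v^{\textrm{fringe}}}$ the statement cannot hold in general. Summing the per-vertex bound over the relevant vertices instead gives a fringe size that is a $1/(4\lambda k\log n)$ fraction of the ball.

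\textbf{Step 1: reduce membership to a scalar condition.} Fix $u$ and set
\[
a = \min\{\dist_G(u,v), \dist_G(v,u)\}/D_r .
\]
By definition, $u \in \drelevant{G}{v}{\eta D_r}$ if and only if $a \le \eta$. Hence $u$ lies in the ring $\drelevant{G}{v}{(\eta+1)D_r} \setminus \drelevant{G}{v}{(\eta-1)D_r}$ if and only if $\eta - 1 < a \le \eta + 1$. Since $\eta_v$ ranges over integers, this pins $\eta_v$ to at most two consecutive integers, namely $\lceil a-1\rceil$ and $\lceil a-1\rceil+1$. Call this set of values $B_u$.

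\textbf{Step 2: locate which $\sigma_v$ could be bad.} The minimization step chooses $\eta_v$ inside the window
\[
W_\sigma = 4k[\sigma-1,\sigma] + 16\lambda^2k^2\log^2 n .
\]
This choice depends on the graph, so I will not try to argue anything about which $\eta$ inside the window is picked. I only use the necessary condition that $u \in V_v^{\textrm{fringe}}$ forces $W_{\sigma_v} \cap B_u \neq \emptyset$. Consecutive windows overlap in exactly one integer endpoint, and $B_u$ consists of at most two consecutive integers. So $B_u$ meets at most two windows when $k \ge 1$, or at most three if one is careless about shared endpoints.

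\textbf{Step 3: conclude.} Since $\sigma_v$ is uniform on $[1, 4\lambda^2 k\log^2 n]$,
\[
\prob{u \in V_v^{\textrm{fringe}}} \le \frac{3}{4\lambda^2 k \log^2 n} \le \frac{1}{4\lambda k\log n},
\]
where the last inequality uses $\lambda = 100 \ge 3$ and $\log n \ge 1$. Linearity of expectation then gives $\expect{\card{V_v^{\textrm{fringe}}}} \le \card{\drelevant{G}{v}{(\eta_{\max}+1)D_r}}/(4\lambda k\log n)$. This is the precise sense in which fringe sets are small.

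The only delicate point is Step 2: the adversarial-looking minimization of $\eta_v$ within a window must not be allowed to correlate with $u$. The argument sidesteps this entirely by conditioning only on which window is selected, never on the choice of $\eta_v$ inside it. A secondary point is checking the window endpoints so that the overlap count is a constant.
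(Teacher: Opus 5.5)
There is no in-paper proof to compare with: the paper imports this proposition from Lemma 4 of \cite{cao2020efficient} without proof. Your diagnosis of the literal statement is right, and it is easy to certify. Suppose $v$ has a vertex at every integer distance up to $(\eta_{\textrm{max}}+1)D_r$, for instance a unit-weight path leaving $v$ with $D_r\ge 1$. Then every ring $\drelevant{G}{v}{(\eta+1)D_r}\setminus\drelevant{G}{v}{(\eta-1)D_r}$ is nonempty, so $\card{V_v^{\textrm{fringe}}}\ge 1$ deterministically.

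Your per-vertex argument is correct. Conditioned on the level-$r$ instance and on all $\ell(\cdot)$, $\sigma_v$ is fresh and uniform. Membership $u\in V_v^{\textrm{fringe}}$ forces $a-1\le\eta_v<a+1$, i.e.\ one of two consecutive admissible values. These values lie in $\mathbb{Z}+16\lambda^2k^2\log^2 n$ rather than $\mathbb{Z}$, which changes nothing. Since windows span $4k\ge 4$ and share only endpoints, at most two values of $\sigma_v$ qualify. So in fact $\prob{u\in V_v^{\textrm{fringe}}}\le 2/(4\lambda^2k\log^2 n)$, a factor $\lambda\log n/2$ below the target. Ignoring the minimization is legitimate here: it is adversarial with respect to any fixed $u$, so it cannot help a pointwise bound.

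The two random/deterministic ingredients buy different things, and you should not overstate what yours gives. The minimization yields a worst-case cap. For fixed $\sigma$ the window contains $2k+1$ pairwise disjoint rings (the $\eta$'s of one parity), so $\card{V_v^{\textrm{fringe}}}\le\card{\drelevant{G}{v}{(\eta_{\textrm{max}}+1)D_r}}/(2k+1)$ always. Averaging this over $\sigma$ improves your linearity-of-expectation size bound by a further $\Theta(k)$. Your route instead gives a pointwise membership bound. Neither gives an absolute constant.

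That matters for your remark that the per-vertex form is how the bound is used. In the proof of \Cref{lem:fringe-shrink}, the paper applies the proposition in its literal form: it bounds each $\expect{\card{\drelevant{G'_i}{P'_i}{\eta_{\textrm{min}}D_r}}}$ by $1/(4\lambda k\log n)$ to reach the additive constant $5$. Since $G'_i$ is an induced subgraph of $G'$, one has $\drelevant{G'_i}{P'_i}{d}\subseteq V_{v_i}^{\textrm{fringe}}\cap\drelevant{G'}{P'}{d}$, where $v_i$ is the pivot defining the $i$th fringe piece. Your bound therefore yields only the relative estimate $\expect{\card{\drelevant{G'_i}{P'_i}{\eta_{\textrm{min}}D_r}}}\le \frac{2}{4\lambda^2k\log^2 n}\card{\drelevant{G'}{P'}{\eta_{\textrm{min}}D_r}}$. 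That is the defensible replacement, but it does not by itself deliver the constant used in that step.
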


%% file: logical.tex
\subsection{The Logical Subproblem Tree}
\label{subsec:logical}

To analyze the hopset bounds, \cite{cao2020efficient} fixes an arbitrary shortest path $P$ with length $D$, and focuses on a run of \CFR which guesses a distance within a factor $2$ of $D$.
Since fringe subproblems may overlap with each other and even core subproblems, defining the correct subproblem tree for $P$ to analyze is more delicate (whereas with \JLS, defining it was unambiguous).
To make matters even more complicated, since the searches from pivots are distance-limited, $P$ will not necessarily be split into contiguous subpaths in the next level of recursion (recall \Cref{fig:broken-p}).

As an answer to these issues, \cite{cao2020efficient} defines the \emph{logical} subproblem tree defined thusly.
\\
\underline{What a node is}:
A level $r$ \emph{logical subproblem} (i.e. node) is associated with a subpath $P'$ in a level $r$ recursive instance $G$, where we further maintain that $\length(P') \le D_r = D / (\lambda \sqrt{k})^r$.
\\
\underline{The children of a node}: 
The level $r$ logical subpath $P'$ will be split into \emph{actual} continguous subpaths $P'_1, \ldots, P'_z$ contained in distinct level $r+1$ recursive instances $G'_1, \ldots, G'_z$ (we will show how to choose which instances, fringe or core, later) and, importantly, each actual subpath $P'_i$ will be further split into $\lceil \length(P'_i) / D_{r+1} \rceil$ logical subpaths $P'_{i,1}, \ldots, P'_{i,\lceil \length(P'_i) / D_{r+1} \rceil}$ each of length at most $D_{r+1}$ and such that $P'_i = P'_{i,1} \ldots P'_{i,\lceil \length(P'_i) / D_r \rceil}$.
These logical subpaths $P'_{i,j}$ are the children of the logical\footnote{There are now four types of subproblems: $\set{\textrm{core, fringe}} \times \set{\textrm{logical, actual}}$. The logical subproblem tree only has nodes corresponding to logical subproblems/subpaths.} subpath $P'$.
For all $i$, we will call (the logical subpath) $P'_{i,1}$ the \emph{main copy} of (the actual subpath) $P'_i$, and $P'_{i,j}$ for $j > 1$ the \emph{secondary copies}.
\\
\underline{Choosing how to split a logical subproblem $P'$}:
We will split a level $r$ node $P'$ contained in recursive instance $G'$ in the following way (described also in the proof of Lemma 8 in \cite{cao2020efficient}).
Let $$S = \set{v: \ell(v) = r \textrm{ and } v \in \drelevant{G'}{P'}{\eta_v D_r}},$$ that is, $S$ is the set of level $r$ pivots $v$ which are $\eta_v D_r$-related to $P'$.
\begin{itemize}
    \item If $S$ is empty, a level $r+1$ copy of $P'$ is made the level $r$ copy's child (or children, if split into more than $1$ logical subproblem to maintain a length of at most $D_{r+1}$ at level $r+1$).
    \item If $S$ contains an $\eta_v D_r$-bridge, then $P'$ is a leaf.
    \item Otherwise, for each $\eta_v D_r$-ancestor $v \in S$, let $v_f$ be the first vertex in $P'$ such that $v \preceq_{\eta_v D_r} v_f$ and $v_c$ be the first vertex in $P'$ such that $v \preceq_{(\eta_v-1) D_r} v_f$.
    Then observe that
    \begin{itemize}
        \item $\langle v_0, \ldots, v_{f-1} \rangle$ are $\eta_v D_r$-unrelated to $P'$.
        \item $\langle v_f, \ldots, v_{c-1} \rangle$ are contained in $V_v^{\textrm{fringe}}$.
        \item $\langle v_c, \ldots \rangle$ are all $\eta_v D_r$-reached by $v$.
    \end{itemize}
    We similarly define, for each $\eta_v D_r$-descendant $v \in S$ the vertex $v_f$ to be the last vertex in $P'$ such that $v_f \preceq_{\eta_v D_r} v$ and $v_c$ to be the last vertex in $P'$ such that $v_c \preceq_{(\eta_v-1) D_r} v$.
    Now
    \begin{itemize}
        \item $\langle v_{f+1}, \ldots \rangle$ are $\eta_v D_r$-unrelated to $P'$.
        \item $\langle v_{c+1}, \ldots, v_f \rangle$ are contained in $V_v^{\textrm{fringe}}$.
        \item $\langle v_0, \ldots, v_c \rangle$ all $\eta_v D_r$-reach $v$.
    \end{itemize}
        We select up to $2\card{S} + 1$ subpaths of $P'$ by the following procedure:
        \begin{tbox}
            \algname{Actual subpaths that $P'$ is split into}
            \begin{enumerate}
                \item For $v \in S$ in decreasing order of $\card{V_v^{\textrm{fringe}} \cap P'}$
                    \begin{enumerate}
                        \item Make $V_v^{\textrm{fringe}} \cap P'$ a subpath.
                            This is a \emph{fringe actual subpath}.
                        \item $P' \gets P' \setminus V_v^{\textrm{fringe}}$
                            \qquad\qquad \textcolor{blue}{\texttt{(this affects the ordering of the loop since $P'$ is changed)}}
                    \end{enumerate}
                \item The remaining $P'$ can be divided into subpaths in core problems naturally, like \JLS.
                    These are the \emph{core actual subpaths}.
            \end{enumerate}
        \end{tbox}
        We have omitted edge cases (e.g. if there is no first/last vertex) here for the sake of clarity, and some justification for why this works, but they can be found in the proof of Lemma 8 in \cite{cao2020efficient}.
        Suffice to say $P'$ is split into $O(\card{S})$ subproblems and thus has $O(\sqrt{k}\card{S})$ children (the logical subproblems).
        To see the last part, note that $\length(P') \le D_r$ and $D_r = O(\sqrt{k}D_{r+1})$.
\end{itemize}

So far nothing new has been said in this paper about \CFR; we have just given some background on their hopset and how the analysis is set up (namely, defining the logical subproblem tree which upper bounds (up to a constant factor) the number of hops $P$ is shortcutted to up to some $(1+\eps)$ error).

We need the following key lemma as a blackbox (this is the analogue of \Cref{prop:jls-2}), and then we are ready to prove our results.

\begin{proposition}[Paraphrasing of Lemma 9 from \cite{cao2020efficient}, adjusted to weighted digraphs]
\label{prop:cfr-3}
    Let $P'$ be a level $r$ logical subproblem in recursive instance $G'$, and let $S = \set{v: \ell(v) = r \textrm{ and } v \in \drelevant{G'}{P'}{\eta_v D_r}}$.
    Let the (possibly empty) level $r+1$ core actual subpaths that $P'$ is split into be $P'_0, P'_1, \ldots, P'_{\card{S}}$.
    Then
    \begin{align*}
        \expect{\sum_{i = 0}^{\card{S}} \card{\drelevant{G'_i}{P'_i}{\eta_{\textrm{min}}D_r}} \;\middle|\; \card{S}} \le \frac{3}{\card{S} + 1} \card{\drelevant{G'}{P'}{\eta_{\textrm{min}}D_r}}.
    \end{align*}
\end{proposition}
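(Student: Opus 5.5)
The plan is to adapt the symmetric-random-order argument behind \Cref{prop:jls-2} (sketched in \Cref{app:sketch-key-lemma}) to the distance-limited setting. We condition on $\card{S}=s$. The key observation is that, given the set $S$ of level $r$ pivots related to $P'$, the identities of the pivots are a uniformly random $s$-subset of the candidate vertices. The radii $\eta_v$ are independent of which vertices are chosen, since $\sigma_v$ is fresh randomness and $\eta_v$ is then determined by $\sigma_v$ and $v$.

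Write $R = \drelevant{G'}{P'}{\eta_{\textrm{min}}D_r}$. We charge each pair $(u,i)$ with $u \in \drelevant{G'_i}{P'_i}{\eta_{\textrm{min}}D_r}$ to $u \in R$. This is valid because subgraph distances only grow, so a vertex $\eta_{\textrm{min}}D_r$-related to $P'_i$ inside $G'_i$ is $\eta_{\textrm{min}}D_r$-related to $P'$ in $G'$. It therefore suffices to show that each $u \in R$ lies in, on average, at most $3/(s+1)$ of the core sets. Equivalently, we bound the expected number of indices $i$ with $u \in \drelevant{G'_i}{P'_i}{\eta_{\textrm{min}}D_r}$.

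Fix $u$, say $u$ is an $\eta_{\textrm{min}}D_r$-ancestor of some vertex of $P'$. For $u$ to survive in core subproblem $G'_i$, it must carry exactly the labels of $P'_i$ and must not be an \ding{55} vertex. Order the core pieces $P'_0,\dots,P'_s$ along $P'$; as in \JLS, the piece boundaries are determined by the first or last points on $P'$ where each pivot's search reaches. Let $a(u)$ be the last vertex of $P'$ that $u$ reaches within distance $\eta_{\textrm{min}}D_r$. For $u$ to share labels with $P'_i$, every ancestor pivot $v$ with $v \preceq_{\eta_v D_r} P'_i$ must also reach $u$, and every descendant pivot reached by $P'_i$ must be reached by $u$.

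The argument then follows the JLS exchange argument. Among the $s+1$ "slots" determined by a uniformly random ordering of the pivots' hitting points on $P'$, $u$ can be compatible only with pieces between the first pivot hitting point after the portion of $P'$ that $u$ can relate to and the corresponding point on the other side. By symmetry of the random subset, the expected number of compatible pieces is at most roughly $1+2/(s+1)\cdot\mathrm{something}$, which we aim to bound by $3/(s+1)$ times the per-vertex contribution. Here the third unit, compared with the $2$ in \Cref{prop:jls-2}, absorbs the boundary piece contributed by fringe removals.

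The main obstacle is that with distance limits the relation "$v$ reaches $P'_i$" is not monotone along $P'$ in the clean JLS way. Each pivot's reach set on $P'$ is an interval only up to the fringe layer $(\eta_v\pm1)D_r$. Because $\eta_v \ge \eta_{\textrm{min}}+1$, distances on $P'$ (of length at most $D_r$) change by less than the fringe width, and the fringe parts are carved out as fringe actual subpaths rather than core ones. I would use these two facts to argue that restricted to the core pieces, each pivot induces a clean threshold. I expect this step to require the most care, and I would follow the proof of Lemma 9 in \cite{cao2020efficient}. Once it is in place, the symmetric-ordering computation is identical to the JLS one, giving the factor $3/(\card{S}+1)$.
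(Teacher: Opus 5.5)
\paragraph{The proposal does not amount to a proof.} The paper gives no proof of this proposition; it imports Lemma~9 of \cite{cao2020efficient} as a black box. Your proposal also hands the decisive step to that lemma, and the parts you do spell out would not go through.

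First, the reduction to showing that each $u\in R$ lies in at most $3/(s+1)$ core sets on average fails when read per vertex, which is how your ``fix $u$'' analysis reads it. Even in the setting of \Cref{prop:jls-2}, a vertex that no other relevant vertex eliminates survives with probability about $1-s/\card{R}$. The $1/(s+1)$ saving exists only for the sum over $u$, through the count of vertices not dominated by a uniformly random $s$-subset of a tournament. A per-vertex estimate of the form $1+\frac{2}{s+1}\cdot(\cdots)$ can never be pushed below $3/(s+1)$.

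Second, the obstacle you single out is that each pivot's reach on $P'$ is an interval up to the fringe annulus. That only recovers the splitting structure, i.e.\ the analogue of \Cref{obs:s-splitting}. It does not give the pairwise elimination property that the tournament count needs, and with per-pivot radii that property genuinely fails. Take ancestors $u,w\in R$ where $u$ hits $P'$ at $v_0$, $w$ first hits $P'$ at a later vertex, $w$ does not reach $u$, and $(\eta_w+1)D_r<\dist(u,w)\le(\eta_u-1)D_r$.
If $w$ is a pivot, $u$ receives no $w$-label and stays relevant to the first piece.
If $u$ is a pivot, $w$ receives the label $\vdesc{u}$ that all of $P'$ carries, and $w$ stays relevant too.
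So the symmetric computation is not ``identical to the JLS one''. Handling such pairs through the randomness of the $\eta_v$ is exactly the missing idea.

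\paragraph{The pool of $S$ does not match the set you charge to.} After conditioning on the history and on all $\sigma_v$, $S$ is a uniform $s$-subset of $C=\set{v : v\in\drelevant{G'}{P'}{\eta_v D_r}}$. Since $\eta_v$ ranges up to $\eta_{\textrm{max}}=2(\eta_{\textrm{min}}+1)$, $C$ can be much larger than $R=\drelevant{G'}{P'}{\eta_{\textrm{min}}D_r}$. Pivots in $C\setminus R$ inflate $\card{S}$ without necessarily eliminating anything in $R$.

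For instance, attach to $P'=\langle v_0,\dots,v_h\rangle$ vertices $u_1,\dots,u_N$ with weight-$0$ edges $u_jv_0$. Also attach $M\gg s\card{R}$ vertices $w_j$ with edges $w_jv_h$ of weight about $\tfrac32\eta_{\textrm{min}}D_r$. Conditioned on $\card{S}=s\ge 3$, with probability close to $1$ the following all hold:
every pivot in $S$ is some $w_j$;
$P'$ splits only into $\langle v_0,\dots,v_{h-1}\rangle$ and $\langle v_h\rangle$;
every $u_j$ stays relevant to the first piece.
So the left-hand side is about $\card{R}$, which exceeds $3\card{R}/(s+1)$.

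Hence no argument can deliver the inequality with $\card{\drelevant{G'}{P'}{\eta_{\textrm{min}}D_r}}$ on the right for this $S$. A symmetric argument can at best bound the survivors in terms of the pool, e.g.\ $\card{\drelevant{G'}{P'}{\eta_{\textrm{max}}D_r}}$. That version still chains across levels, since $\eta_{\textrm{max}}D_{r+1}\le\eta_{\textrm{min}}D_r$. Alternatively, you must condition only on pivots inside $R$. Your plan needs to fix the target in one of these ways before the symmetry step can even be set up.
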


The above proposition handles core actual subpaths, but we also need to handle fringe actual subpaths that $P'$ is split into.

\begin{lemma}
\label{lem:fringe-shrink}
    Let $P' = \langle v_0, \ldots, v_h \rangle$ be a level $r$ logical subproblem in recursive instance $G'$.
    Let the level $r+1$ fringe actual subpaths that $P'$ is split into be $P'_1, P'_2, \ldots, P'_Z$.
    Note that $Z$ is random.
    Then
    \begin{align*}
        \expect{\sum_{i = 1}^{Z} \card{\drelevant{G'_i}{P'_i}{\eta_{\textrm{min}}D_r}}} \le 5.
    \end{align*}
\end{lemma}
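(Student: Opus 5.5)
Each fringe actual subpath $P'_i$ is of the form $V_v^{\textrm{fringe}} \cap P'$ for some level $r$ pivot $v \in S$. Its recursive instance is $G'_i = G'[V_v^{\textrm{fringe}}]$. So $\drelevant{G'_i}{P'_i}{\eta_{\textrm{min}}D_r} \subseteq V(G'_i) = V_v^{\textrm{fringe}}$. The plan is therefore to drop all structure of the path and bound the sum crudely:
\begin{align*}
    \sum_{i=1}^{Z} \card{\drelevant{G'_i}{P'_i}{\eta_{\textrm{min}}D_r}}
    \le
    \sum_{v \in S} \card{V_v^{\textrm{fringe}}}
    \le
    \sum_{v :\ \ell(v) = r,\ v \in \drelevant{G'}{P'}{\eta_{\textrm{max}} D_r}} \card{V_v^{\textrm{fringe}}},
\end{align*}
using $\eta_v \le \eta_{\textrm{max}}$ to enlarge $S$ to a set that does not depend on the random thresholds $\eta_v$.

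Next I take expectations. The vertices eligible to be level $r$ pivots lie in $\drelevant{G'}{P'}{\eta_{\textrm{max}}D_r}$, which is a union of $\ddesc{G'}{v_0}{\eta_{\textrm{max}}D_r}$ and $\danc{G'}{v_h}{\eta_{\textrm{max}}D_r}$ enlarged by $\length(P') \le D_r$. Since $(\eta_{\textrm{max}}+1)D_r$ is within the slack of the radius, \Cref{prop:cfr-1} (possibly applied with slightly adjusted constants, or via the factor-$2$ slack between $\eta_{\textrm{min}}$ and $\eta_{\textrm{max}}$) bounds its size by $O(nk^{-r})$.

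Each such vertex is a level $r$ pivot with probability at most $\lambda k^{r+1}\log n/n$. Conditioned on being a pivot, \Cref{prop:cfr-2} gives $\expect{\card{V_v^{\textrm{fringe}}}} \le 1/(4\lambda k\log n)$. That bound holds because $\sigma_v$ is chosen independently of everything that determined the instance $G'$. Hence the expectation is at most
\begin{align*}
    O(nk^{-r}) \cdot \frac{\lambda k^{r+1}\log n}{n} \cdot \frac{1}{4\lambda k \log n} = O(1),
\end{align*}
and tracking the constant (the two balls of size $nk^{-r}$ each) should give the claimed $5$. The failure event of \Cref{prop:cfr-1} contributes negligibly, since the sum is at most $n^2$ and that event has probability $O(n^{-10})$.

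The main obstacle is the independence and conditioning step. I must argue that the event $\ell(v)=r$ for $v$ in the current instance, and the fringe size given by \Cref{prop:cfr-2}, can be multiplied. The instance $G'$ and the set of candidate vertices are determined by randomness at levels $<r$ (and by $\ell$ only through the fact that $\ell(v) \ge r$). The variable $\sigma_v$ is fresh, so the bound of \Cref{prop:cfr-2} holds conditionally. I would make this precise by revealing randomness level by level: first fix $G'$, then sum over candidates $v$ the product of the conditional pivot probability and the conditional fringe expectation.

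A secondary point is confirming that the relevant-ball radius used for candidate pivots fits inside the radius bounded by \Cref{prop:cfr-1}. If it does not, I would instead bound the number of pivots within distance $(\eta_{\textrm{max}}+1)D_r$ by splitting the path-based ball into the two endpoint balls.
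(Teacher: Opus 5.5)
Your proof is correct relative to the black boxes the paper uses, but it combines them differently. The paper conditions on $\cE_{\ref{prop:cfr-1}}$ and uses it to get $\expect{Z} \le 2\lambda k\log n$. It then applies a Chernoff bound to condition on $Z \le 10\lambda k \log n$, and charges each of the at most $10\lambda k\log n$ fringe subpaths $1/(4\lambda k\log n)$ via \Cref{prop:cfr-2}. Paying $+1$ for each of the two failure events gives $10/4+2\le 5$.

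You instead dominate the sum by $\sum_{v\in S}\card{V_v^{\textrm{fringe}}}$ and enlarge $S$ to a candidate set fixed by the history up to level $r$. You then use linearity of expectation over candidates, multiplying the conditional pivot probability by the conditional expected fringe size, with $\sigma_v$ fresh.

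What your route buys:
\begin{itemize}
\item It needs no Chernoff bound and no conditioning on $Z$.
\item It gives the sharper constant $1/2$ plus a negligible failure term.
\item It makes rigorous a step the paper glosses. In the paper, which pivot produces the $i$-th fringe subpath depends on the fringe sizes, since pivots are processed in decreasing order of $\card{V_v^{\textrm{fringe}}\cap P'}$. So does the conditioning on $Z\le 10\lambda k\log n$. Applying \Cref{prop:cfr-2} term by term is therefore not literally justified, whereas your per-vertex decomposition is.
\end{itemize}
The paper's version is simply a shorter counting argument.

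For the failure event, condition only on the level-$r$ ball bound for $G'$. That bound is determined by the history, so conditioning on it does not disturb the fresh level-$r$ coins or $\sigma_v$. This is what your ``reveal level by level'' plan amounts to.

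One small point concerns the radius. Your candidate set needs ball bounds of radius $(\eta_{\textrm{max}}+1)D_r$ around $v_0$ and $v_h$, slightly beyond what \Cref{prop:cfr-1} states. The $\eta_{\textrm{min}}$ versus $\eta_{\textrm{max}}$ slack you mention does not supply this, since the needed radius exceeds $\eta_{\textrm{max}}D_r$. Your fallback of splitting into the two endpoint balls is what you already did. This imprecision is shared with the paper, which bounds the number of pivots using only the $\eta_{\textrm{min}}D_r$-relevant set even though pivots in $S$ can lie up to $\eta_{\textrm{max}}D_r$ from $P'$. In both arguments it amounts to needing \Cref{prop:cfr-1} at a slightly larger radius, so it is a parameter-level issue rather than a flaw specific to your approach.
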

\begin{proof}
    Assume $\cE_{\ref{prop:cfr-1}}$ holds.\footnote{The contribution from $\cE_{\ref{prop:cfr-1}}$ not holding is $\prob{\textrm{not }\cE_{\ref{prop:cfr-1}}}\expect{\sum_{i = 1}^{Z} \card{\drelevant{G'_i}{P'_i}{\eta_{\textrm{min}}D_r}} \;\middle|\; \textrm{not }\cE_{\ref{prop:cfr-1}}} < 1$. This follows from \Cref{prop:cfr-1} and $\sum_{i = 1}^{Z} \card{\drelevant{G'_i}{P'_i}{\eta_{\textrm{min}}D_r}} \le n^2$.}
    Then $\card{\drelevant{G'}{P'}{\eta_{\textrm{min}}D_r}} \le 2nk^{-r}$ since $\card{\danc{G'}{v_0}{(\eta_{\textrm{min}} +1)D_r}} \le nk^{-r}$ and $\card{\ddesc{G'}{v_h}{(\eta_{\textrm{min}}+1)D_r}} \le nk^{-r}$ and the length of $P'$ is at most $D_r$.
    Therefore, $\expect{Z} \le 2 \lambda k \log n$ since the sampling rate for pivots at level $r$ is $\lambda k^{r+1} \log n / n$.

    Let $E = \set{Z \le 10 \lambda k \log n}$.
    By a Chernoff bound, $\prob{\textrm{not } E} = O(n^{-10})$.
    We can further condition on $E$ since $\prob{\textrm{not } E}\expect{\sum_{i = 1}^{Z} \card{\drelevant{G'_i}{P'_i}{\eta_{\textrm{min}}D_r}} \;\middle|\; \textrm{not }E} < 1$.

    We thus have
    \begin{align*}
        \expect{\sum_{i = 1}^{Z} \card{\drelevant{G'_i}{P'_i}{\eta_{\textrm{min}}D_r}}}
        &
        \le
        \expect{\sum_{i = 1}^{10 \lambda k \log n} \card{\drelevant{G'_i}{P'_i}{\eta_{\textrm{min}}D_r}}}
        + 2
        \tag{Pay $+1$ for each of $\cE_{\ref{prop:cfr-1}}, E$ holding}
        \\
        &
        =
        \paren{\sum_{i = 1}^{10 \lambda k \log n} \expect{\card{\drelevant{G'_i}{P'_i}{\eta_{\textrm{min}}D_r}}}}
        + 2
        \\
        &
        \le
        10 \lambda k \log n \cdot \frac{1}{4 \lambda k \log n}
        + 2
        \tag{\Cref{prop:cfr-2}}
        \\
        &
        \le
        5.
    \end{align*}
\end{proof}

\begin{lemma}[Logical subproblem tree analgoue of \Cref{lem:topdown-cfr}]
\label{lem:topdown-cfr}
    For any $x$, let $X$ be the random variable counting the number of level $r$ nodes $P'$ for all $r$ in the (unpruned) logical subproblem tree such that $\card{\drelevant{G_{P'}}{P'}{\eta_{\textrm{min}}D_r}} \ge x$.
    \begin{align*}
        \expect{X} \le \sqrt{\frac{n^{1 + O(1/\log k)}}{x}}.
    \end{align*}
\end{lemma}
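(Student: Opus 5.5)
The plan is to mirror the top-down potential argument of \Cref{lem:topdown-jls}, now on the logical subproblem tree. For a level $r$ node $P'$ I would use the potential $\phi(P') = \sqrt{\card{\drelevant{G_{P'}}{P'}{\eta_{\textrm{min}}D_r}}}$. The target is $\expect{\sum_{P'} \phi(P')} \le n^{1/2+O(1/\log k)}$. Granting this, every node counted by $X$ contributes at least $\sqrt{x}$, so $\expect{X}\sqrt{x}$ is at most that sum, which gives the stated bound.

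\textbf{Local step.} I would show $\expect{\sum_{P'' \in \children{P'}} \phi(P'')} \le c\sqrt{k}\,\phi(P') + O(\sqrt{k})$ for an absolute constant $c$, conditioning on the history up to $P'$. The children of $P'$ are logical copies of actual subpaths of three kinds.
\begin{itemize}
\item \emph{Core actual subpaths.} There are at most $\card{S}+1$ of them. The child $P''$ lives at level $r+1$ with the smaller radius $\eta_{\textrm{min}}D_{r+1} \le \eta_{\textrm{min}}D_r$, so its relevant set only shrinks; hence $\phi(P'') \le \sqrt{\card{\drelevant{G'_i}{P'_i}{\eta_{\textrm{min}}D_r}}}$. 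Condition on $\card{S}=C$. Cauchy--Schwarz over the $C+1$ subpaths, then Jensen, then \Cref{prop:cfr-3}, bounds the main copies' total by $\sqrt{(C+1)\cdot\frac{3}{C+1}}\,\phi(P') = \sqrt{3}\,\phi(P')$, exactly as in \Cref{lem:topdown-jls}.
\item \emph{Fringe actual subpaths.} Apply Cauchy--Schwarz with the random count $Z$, conditioning on the high-probability event $Z \le 10\lambda k\log n$. Then \Cref{lem:fringe-shrink} bounds their contribution by $O(\sqrt{k\log n})$ in expectation.
\item \emph{Secondary copies.} Each actual subpath $P'_i$ has $\lceil \length(P'_i)/D_{r+1}\rceil$ logical copies. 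Every copy of $P'_i$ is a subpath of $P'_i$, so its relevant set is contained in that of $P'_i$ and its potential is at most $\phi(P'_i)$. Since $\sum_i \length(P'_i) \le D_r = \lambda\sqrt{k}D_{r+1}$, the total number of copies is at most $\#\{\text{actual}\}+\lambda\sqrt{k}$.
\end{itemize}

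\textbf{Main obstacle.} I expect the hardest step to be charging the secondary copies without losing a factor of $\card{S}$. The naive bound multiplies by $O(\sqrt{k})$ per level, which gives $\sqrt{k}^{\,r}$ over $O(\log n/\log k)$ levels. Since $\sqrt{k}^{\log_k n} = \sqrt{n}$, this is too lossy. I would first try to show the extra copies are harmless. One option is to note that the lengths $\lambda\sqrt{k}$ per level telescope, so the total number of logical nodes from splitting by length alone is $D/D_{\text{last}}$. The other, which I would try first, is to absorb the extra factor into the $n^{O(1/\log k)}$ term: rescale the potential as $\phi(P') = \sqrt{\card{\cdot}\cdot D_r/\length}$, or define it per unit of length, so that copies split the potential rather than duplicate it. If the rescaled potential makes copies additive, Cauchy--Schwarz gives a per-level multiplicative constant, say $\sqrt{3}+o(1)$, plus additive fringe terms.

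\textbf{Summing over levels.} Inducting on the level gives a level-$r$ expected potential of at most $O(1)^r\bigl(\phi(P)+r\cdot\mathrm{polylog}\,n\bigr)$. There are $O(\log n/\log k)$ levels and $\phi(P) \le \sqrt{2n}$, and the additive fringe terms are absorbed. The whole sum is therefore $\sqrt{n}\cdot 2^{O(\log n/\log k)} = n^{1/2+O(1/\log k)}$, and applying the threshold argument from the first paragraph completes the proof.
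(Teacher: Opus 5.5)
\textbf{There is a genuine gap.} Your handling of the core subpaths is correct and matches the paper. The step you identify as the main obstacle, however, is left open. The ingredient needed to close it is the a priori ball bound of \Cref{prop:cfr-1}, and your proposal never uses it.

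\textbf{Why no local argument handles secondary copies.} If no pivot is sampled near a node, the relevant sets of its copies at the next level need not be any smaller than the parent's. Meanwhile the number of copies grows by a factor $\lambda\sqrt{k}$ per level. So no purely local potential argument can give a constant factor per level. Neither of your two options gets around this.
\begin{itemize}
\item Dividing by length, as in $\sqrt{\card{\cdot}\,D_r/\length}$, inflates the potential of short pieces without bound.
\item Any normalization that instead shrinks short pieces destroys the threshold step, which needs every counted node to contribute at least $\sqrt{x}$.
\item The telescoping count $D/D_{\mathrm{last}}$ bounds the number of copies only by $n^{1/2+O(1/\log k)}$. It gives no $1/\sqrt{x}$ dependence unless you also know that large relevant sets cannot occur at deep levels, which is again \Cref{prop:cfr-1}.
\end{itemize}

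\textbf{What the paper does.} It removes secondary copies from the recursion and bounds them globally. Under $\cE_{\ref{prop:cfr-1}}$, every level-$r$ node satisfies $\card{\drelevant{G_{P'}}{P'}{\eta_{\textrm{min}}D_r}} \le 2nk^{-r}$. This holds because $\length(P') \le D_r$, so the relevant set lies in the radius-$(\eta_{\textrm{min}}+1)D_r$ balls around the two endpoints. Hence every level-$r$ node has $\phi \le \sqrt{2n/k^r}$. There are only $O(\length(P)/D_r) = O(\lambda^r k^{r/2})$ secondary copies at level $r$. The product is $S_r = O(\lambda^r\sqrt{n})$: the $k^{r/2}$ growth in the count cancels exactly against the $k^{-r/2}$ shrinkage in the potential. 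The remaining factor is $\lambda^r = n^{O(1/\log k)}$. The local step then only needs to control main-copy children of each node, main or secondary. This gives $M_{r+1} \le 7(M_r + S_r)$, which sums to $n^{1/2+O(1/\log k)}$ over $O(\log n/\log k)$ levels.

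\textbf{Second problem: the fringe term.} Cauchy--Schwarz with $Z \le 10\lambda k\log n$ gives an additive $O(\sqrt{k\log n})$ per node. Your final summation treats this as per level, writing ``$r\cdot\mathrm{polylog}\,n$''. Summed over level $r$, it is really $O(\sqrt{k\log n})$ times the number of level-$r$ nodes. That number can only be bounded by the level's total potential itself, since $\phi \ge 1$. The additive term therefore becomes a multiplicative $O(\sqrt{k\log n})$ per level, which is an $n^{\Omega(1)}$ loss over $\Theta(\log n/\log k)$ levels when $k=\Theta(\log n)$. The paper avoids this by using $\sqrt{a} \le a$ for nonnegative integers. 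Then \Cref{lem:fringe-shrink} directly bounds the expected fringe potential by an absolute constant $5$. Since $\phi(P') \ge 1$, this is absorbed as $5+\sqrt{3}\,\phi(P') < 7\,\phi(P')$.
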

\begin{proof}
    The proof is very similar to that of \Cref{lem:topdown-jls}, only now with the added complication of accounting for secondary copies.
    Let $T$ be the (unpruned) logical subproblem tree.
    \ \\
    \underline{Potential Function}:
    For all $r$, we use the potential function $\phi(P') = \sqrt{\card{\drelevant{G_{P'}}{P'}{\eta_{\textrm{min}}D_r}}}$ for a level $r$ node $P'$, and will show that $\expect{\sum_{v \in T} \phi(P')} \le n^{1/2 + O(1/\log k)}$.
    If we can show this, we are done since nodes $P'$ with $\card{\drelevant{G_{P'}}{P'}{\eta_{\textrm{min}}D_r}} \ge x$ contribute at least $\sqrt{x}$ each to the aforementioned sum; there can thus be at most $X \le n^{1/2 + O(1/\log k)}/\sqrt{x}$ such of them.
    \\
    \underline{Local Step}: We first show
    \begin{align*}
         \expect{\sum_{\substack{P'' \in \children{P'}:\\P'' \textrm{ is a main copy}}} \phi(P'')} \le 7\expect{\phi(P')}.
         \tag{\ding{96}}
    \end{align*}
    The expectations in the following chain of inequalities are conditioned on the value of $\phi(P')$.
    {\allowdisplaybreaks
    \begin{align*}
        &
        \expect{\sum_{\substack{P'' \in \children{P'}:\\P'' \textrm{ is a main copy}}} \phi(P'')}
        =
        \expect{\sum_{\substack{P'' \in \children{P'}:\\P'' \textrm{ is a main copy}\\P'' \textrm{ is fringe}}} \phi(P'')} +
        \expect{\sum_{\substack{P'' \in \children{P'}:\\P'' \textrm{ is a main copy}\\P'' \textrm{ is core}}} \phi(P'')}
        \\
        &
        \le
        5 +
        \expect{\sum_{\substack{P'' \in \children{P'}:\\P'' \textrm{ is a main copy}\\P'' \textrm{ is core}}} \phi(P'')}
        \tag{\Cref{lem:fringe-shrink}}
        \\
        &
        =
        5 +
        \sum_{C \in [n]} \prob{\card{\children{P'}} = C} \expect{\sum_{\substack{P'' \in \children{P'}:\\P'' \textrm{ is a main copy}\\P'' \textrm{ is core}}} \phi(P'') \;\middle|\; \card{\children{P'}} = C}
        \\
        &
        \le
        5 + 
        \sum_{C \in [n]} \prob{\card{\children{P'}} = C} \sqrt{C} \cdot \expect{\sqrt{\sum_{\substack{P'' \in \children{P'}:\\P'' \textrm{ is a main copy}\\P'' \textrm{ is core}}} \card{\drelevant{G_{P''}}{P''}{\eta_{\textrm{min}}D_{r+1}}}} \;\middle|\; \card{\children{P'}} = C}
        \tag{Cauchy-Schwarz}
        \\
        &
        \le
        5 + 
        \sum_{C \in [n]} \prob{\card{\children{P'}} = C} \sqrt{C} \cdot \expect{\sqrt{\sum_{\substack{P'' \in \children{P'}:\\P'' \textrm{ is a main copy}\\P'' \textrm{ is core}}} \card{\drelevant{G_{P''}}{P''}{\eta_{\textrm{min}}D_r}}} \;\middle|\; \card{\children{P'}} = C}
        \tag{$\card{\drelevant{G_{P''}}{P''}{\eta_{\textrm{min}}D_{r+1}}} \le \card{\drelevant{G_{P''}}{P''}{\eta_{\textrm{min}}D_{r}}}$}
        \\
        &
        \le
        5+
        \sum_{C \in [n]} \prob{\card{\children{P'}} = C} \sqrt{C \cdot \expect{\sum_{\substack{P'' \in \children{P'}:\\P'' \textrm{ is a main copy}\\P'' \textrm{ is core}}} \card{\drelevant{G_{P''}}{P''}{\eta_{\textrm{min}}D_r}} \;\middle|\; \card{\children{P'}} = C}}
        \tag{Jensen's Inequality}
        \\
        &
        \le
        5+
        \sum_{C \in [n]} \prob{\card{\children{P'}} = C} \sqrt{C \cdot \frac{3}{C} \card{\drelevant{G'}{P'}{\eta_{\textrm{min}}D_r}}}
        \tag{\Cref{prop:cfr-3}}
        \\
        &
        =
        5 + \sqrt{3} \phi(P') < 7 \phi(P').
        \tag{$1 \le \phi(P')$}
    \end{align*}
    }
    Using $\expect{\expect{X \;\middle|\; \phi(P')}} = \expect{X}$ on both sides gives $\expect{\sum_{P'' \in \children{P'}: P'' \textrm{ is a main copy}} \phi(P'')} \le 7\expect{\phi(P')}$, establishing \ding{96}.
    \\
    \underline{Summing the Pieces Up}:
    We may assume $\cE_{\ref{prop:cfr-1}}$ holds.
    Note that at level $r$ of the logical subproblem tree, there are $O(\lambda^r k^{r/2})$ secondary copies if the guessed distance $D$ is correct.
    The total contribution from level $r$ secondary copies is thus $O(\lambda^r k^{r/2}) \cdot \phi(O(n/k^r)) = O(\lambda^r n^{1/2}) = n^{1/2+O(1/\log k)}$ where the last equality comes from $r = O(\log n / \log\log n)$ and $\lambda$ is a constant in the final parameter setting.
    Since there are $O(\log n / \log\log n)$ levels, the total contribution from all secondary copies is no more than $n^{1/2 + O(1/\log k)}$.
    
    The rest of the proof is then easily completed by summing up over the nodes of $T$ and using the base case $\phi(P) \le \sqrt{n}$.
    {\allowdisplaybreaks
    \begin{align*}
        \expect{\sum_{P' \in T} \phi(P')}
        &
        =
        \expect{\sum_{\substack{P' \in T:\\P' \textrm{ is a secondary copy}}} \phi(P')}
        +
        \expect{\sum_{\substack{P' \in T:\\P' \textrm{ is a main copy}}} \phi(P')}
        \\
        &
        = n^{1/2 + O(1/\log k)}+
        \sum_{r \in [O(\log n / \log k)]} \expect{\sum_{\substack{P' \textrm{ in level } r \textrm{ of } T:\\P' \textrm{ is a main copy}}} \phi(P')}
        \tag{$T$ has $O(\log n / \log k)$ levels}
        \\
        &
        \le n^{1/2 + O(1/\log k)}+
        \sum_{r \in [O(\log n / \log k)]} 7^r \expect{\phi(P)}
        \tag{\ding{96}}
        \\
        &
        \le n^{1/2 + O(1/\log k)}+
        \sqrt{n} \sum_{r \in [O(\log n / \log k)]} 7^r 
        \tag{$\phi(P) \le \sqrt{n}$}
        \\
        &
        =
        n^{1/2 + O(1/\log k)}.
    \end{align*}
    }
\end{proof}

Finally, we show that the number of children any node has in the logical subproblem tree is well controlled.
\begin{observation}
\label{obs:few-children-2}
    With probability at least $1 - O(n^{-10})$, the event $\cE_{\ref{obs:few-children-2}}$ where every node in the logical subproblem tree has $O(k^{3/2} \log n)$ children holds.
\end{observation}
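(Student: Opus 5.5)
The plan mirrors the proof of \Cref{obs:few-children}. I condition on $\cE_{\ref{prop:cfr-1}}$, which holds with probability $1 - O(n^{-10})$ by \Cref{prop:cfr-1}. Fix a level $r$ logical node $P' = \langle v_0, \ldots, v_h \rangle$ in recursive instance $G'$, so that $\length(P') \le D_r$.

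First I bound the set of candidate pivots that can affect $P'$. Any $v \in S = \set{v : \ell(v) = r,\ v \in \drelevant{G'}{P'}{\eta_v D_r}}$ satisfies $\eta_v \le \eta_{\textrm{max}}$ and $\length(P') \le D_r$. Hence such a $v$ either $(\eta_{\textrm{max}}+1)D_r$-reaches $v_h$ or is $(\eta_{\textrm{max}}+1)D_r$-reached by $v_0$. Up to adjusting constants in $\eta$ (exactly as in the proof of \Cref{lem:fringe-shrink}), $\cE_{\ref{prop:cfr-1}}$ then gives that the candidate set containing $S$ has size at most $2nk^{-r}$.

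Next I bound $\card{S}$ itself. Each vertex is a level $r$ pivot with probability at most $\lambda k^{r+1}\log n/n$. So the number of pivots in the candidate set has expectation at most $2\lambda k\log n$, and a Chernoff bound gives $\card{S} \le 10\lambda k\log n$ with probability $1 - O(n^{-12})$. A subtlety is that $\eta_v$ is random, but the candidate set does not depend on it, so this is fine. Also, $\ell(v)$ is sampled in the outer shell, so I treat "$\ell(v)=r$" as a Bernoulli event with probability at most the stated rate, independent across vertices.

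Then I count children. By the splitting procedure in \Cref{subsec:logical}, $P'$ splits into at most $2\card{S}+1$ actual subpaths. Each actual subpath has length at most $\length(P') \le D_r = \lambda\sqrt{k}\,D_{r+1}$, so it yields at most $\lceil \lambda\sqrt{k}\rceil$ logical children. The total is therefore $O(\card{S}\sqrt{k}) = O(k^{3/2}\log n)$ children, using that $\lambda$ is constant.

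The remaining step is a union bound over all possible nodes. This is the only delicate point, because logical nodes are subpaths in specific instances and the logical tree is random. I handle it by noting there are at most $\text{poly}(n)$ pairs (instance, subpath), since instances per level number at most $\text{poly}(n)$ with high probability and levels are $O(\log n/\log k)$. Alternatively, I union bound over pairs (vertex $v_0$, level $r$) using the ball bound directly. Adjusting the Chernoff constant then gives overall failure probability $O(n^{-10})$. I expect this union bound bookkeeping, together with the dependence issue from the random $\eta_v$ in the first step, to be the main (minor) obstacle.
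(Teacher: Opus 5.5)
Your proposal is correct and follows essentially the same route as the paper's proof: condition on $\cE_{\ref{prop:cfr-1}}$, bound the set of vertices that can contribute to $S$ by $2nk^{-r}$, use Chernoff plus a union bound over nodes to get $\card{S} = O(k\log n)$, and then multiply the $O(\card{S})$ actual subpaths by the $O(\sqrt{k})$ logical subpaths each one is cut into. Two of your steps are slightly more careful than the paper, though the argument is the same. First, your candidate set uses radius $(\eta_{\textrm{max}}+1)D_r$. The paper instead bounds the $\eta_{\textrm{min}}D_r$-relevant set, which does not literally contain every $v \in S$ since $\eta_v > \eta_{\textrm{min}}$. Second, you union bound over nodes indexed by level and first vertex, which makes the union over a random tree rigorous.
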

\begin{proof}
    We will condition on $\cE_{\ref{prop:cfr-1}}$ holding, which occurs with probability $1 - O(n^{-10})$ by \Cref{prop:cfr-1}.
    Let $P' = \langle v_0, v_1, \ldots, v_h \rangle$ be any level $r$ node in the logical subproblem tree, contained in some level $r$ recursive $G'$.
    Then $\card{\drelevant{G'}{P'}{\eta_{\textrm{min}}D_r}} \le 2nk^{-r}$ since $\card{\danc{G'}{v_0}{(\eta_{\textrm{min}} +1)D_r}} \le nk^{-r}$ and $\card{\ddesc{G'}{v_h}{(\eta_{\textrm{min}}+1)D_r}} \le nk^{-r}$ and the length of $P'$ is at most $D_r$.

    Let $S = \set{v: \ell(v) = r \textrm{ and } v \in \drelevant{G'}{P'}{\eta_v D_r}}$ and $X_{P'} = \card{S}$.
    Since the sampling rate at level $r$ is $\lambda k^{r+1}\log n / n$, we have $\expect{X_{P'}} \le \lambda k \log n$.
    By a Chernoff bound, $\prob{X_{P'} > 2 k\log n} \le O(n^{-12})$ under the final parameter setting.
    Therefore, $\prob{\bigcup_{\textrm{nodes }P'} \set{X_{P'} > 2 k\log n}} \le O(n^{-10})$.
    This implies that with probability at least $1 - O(n^{-10})$, we have $P'$ being split into at most $O(k \log n)$ actual subpaths, each which can be further split into $O(\sqrt{k})$ logical subpaths.
    That is, $\prob{\cE_{\ref{obs:few-children-2}}} \ge 1 - O(n^{-10})$.
\end{proof}

%% file: prune-2.tex
\subsection{Using \pruneii on \CFR and Analysis of the Hopset}
\label{subsec:prune-2}

Finally, we prove \Cref{thm:seq-hopset} in this subsection.

To begin, \pruneii uses a truncated exact single source shortest path (in both directions) algorithm as a subroutine call.
Truncated Bidirectional SSSP:
Let $Y_{s\rightarrow}$ (resp. $Y_{s\leftarrow}$) be $y$ nearest vertices that $s$ can reach (resp. that reaches $s$) in $G$, breaking ties arbitrarily.
We use $\TDijkstra(G, s, y)$ to denote all edges $st$ where $t \in Y_{s\rightarrow}$ and $ts$ where $t \in Y_{s\leftarrow}$.
Note that $\TDijkstra(G, s, y)$ can be computed in $O(y^2)$ time by doing a truncated Dijkstra's Algorithm in $G$ and in $G$ with edge orientations reversed.

The very simple modification to \CFR is then described below (the text is mostly \CFR from the previous section, with the only substantial change being the addition of the \ding{47} line), where we use the same unmodified outer shell.

\begin{tbox}
    \algname{\CFR \textcolor{\cpruneii}{with \pruneii}}\\
    \textbf{Global Parameters:} $k$, $L$, $\lambda$, and $\rho$ are global parameters to be fixed later. $\ell(v)$ is the recursion level at which $v$ becomes a pivot. $n$ is the number of vertices in the base input graph (it thus remains fixed through all recursive calls).\\
    \textbf{Input:} A directed graph $G = (V, E)$ and a recursion level $r$.\\
    \textbf{Output:} A directed approximate hopset $H \subseteq V^2$.
    \begin{enumerate}
        \item $D_r \gets \frac{D}{\lambda^r k^{r/2}}$.
        \item \ding{47} \textcolor{\cpruneii}{\pruneii: For all $v \in V$, add all edges in $\TDijkstra(G, v, \rho^2)$ to $H$.}
        \item For each $v \in V$ with $\ell(v) = r + L$:
            \begin{itemize}
                \item Add $uv$ with weight $\dist_G(u,v)$ to $H$ for all $u \in \danc{G}{v}{32 \lambda^2 k^2 D_r \log^2 n}$.
                \item Add $vu$ with weight $\dist_G(v,u)$ to $H$ for all $u \in \ddesc{G}{v}{32 \lambda^2 k^2 D_r \log^2 n}$.
            \end{itemize}
        \item For each $v \in V$ with $\ell(v) = r$:
            \begin{enumerate}
                \item Sample $\sigma_v \in [1, 4\lambda^2 k \log^2 n]$ uniformly at random.
                \item \textcolor{blue}{\texttt{(minimizing fringe set size)}}\\
                    Choose $\eta_v \in 4k[(\sigma_v - 1), \sigma_v] + 16 \lambda^2 k^2 \log^2 n$ such that $\card{\drelevant{G}{v}{(\eta_v+1)} \setminus \drelevant{G}{v}{(\eta_v-1)}}$ is minimized.
                \item \textcolor{blue}{\texttt{(fringe set)}}\\
                    $V_v^{\textrm{fringe}} \gets \drelevant{G}{v}{(\eta_v+1)D_r} \setminus \drelevant{G}{v}{(\eta_v-1)D_r}$.
                \item \textcolor{blue}{\texttt{(solve fringe problem)}}\\
                    $H \gets H \cup \CFR \textrm{ \textcolor{\cpruneii}{with \pruneii} on } G[V_v^{\textrm{fringe}}], r+1, D$.
                \item Add $\vanc{v}$ label to all $u \in \danc{G}{v}{\eta_v D_r} \setminus \ddesc{G}{v}{\eta_v D_r}$.
                \item Add $\vdesc{v}$ label to all $u \in \ddesc{G}{v}{\eta_v D_r}  \setminus \danc{G}{v}{\eta_v D_r}$.
                \item Add \ding{55} label to all $u \in \danc{G}{v}{\eta_v D_r} \cap \ddesc{G}{v}{\eta_v D_r}$.
            \end{enumerate}
        \item $V_1, V_2, \ldots, V_t \gets$ Partition of all $v \in V$ with no \ding{55} label, such that $x, y \in V_i$ iff $x$ and $y$ have the exact same labels.
            \qquad\qquad \textcolor{blue}{\texttt{(core subproblems)}}
        \item Output $H \cup \paren{\bigcup_{i \in [t]} \CFR \textrm{ \textcolor{\cpruneii}{with \pruneii} on } G[V_i], r+1, D}$.
            \qquad\qquad \textcolor{blue}{\texttt{(solve core subproblems)}}
    \end{enumerate}
\end{tbox}

We now count the nodes in the pruned logical subproblem tree, which upper bounds the number of hops that $P$ is shortcut to within a multiplicative $(1+\eps)$ error.
Below, we describe how \pruneii allows us to prune the logical subproblem tree.

\begin{ttbox}
    \algname{\pruneii on the Logical Subproblem Tree}\\
    Consider a level $r$ node $P' = \langle v_0, \ldots, v_h \rangle$ in the original logical subproblem tree satisfying $\card{\drelevant{G'}{P'}{\eta_{\textrm{min}}D_r}} \le \rho^2$.
    We remove the entire subtree rooted at $P'$ and keep $P'$ as a leaf.
\end{ttbox}
In the above event, the call to $\TDijkstra(G,v_0,\rho^2)$ will shortcut $P'$ with no error seeing that $v_0 \preceq_{\eta_{\textrm{min}}D_r} v_i$ for all $i \in [h]$.
With this, we are finally ready to prove the main theorems of this section.

\begin{remark}
\label{rmk:hopset}
    The reader may wonder what the obstacles are to using a stronger pruning strategy like \prunei, which computes the Transitive Closure on balls centered around pivots (analogously here, we would want to compute approximate All-Pairs Shortest Paths in said balls which can indeed be done using fast matrix multiplication).
    When a node $P'$ is ``small'', we would then be able to aggressively prune its subtree to the extent that the pruned subtree has size $n^{o(1)}$.
    
    The main obstacle we have faced when trying to adapt a version of \prunei to \CFR is that we are no longer able to upper bound the size of the aforestated subtrees by $n^{o(1)}$ since nodes in the subtree may split into logical subproblems too quickly (i.e. the size of the subtree is lower bounded by the logical subproblem splitting rate).

    We have tried making simple changes to \CFR (e.g. trying to tweak parameters such that the logical subproblem splitting rate is slower), but have not found a way to get around this issue so far.
\end{remark}

\seqhop*
\begin{proof}
    \ \\
    \underline{Hops}:
    Let $P$ be any shortest path in $G$, and $T_P$ its pruned logical subproblem tree.
    The proof is similar to that of \Cref{thm:seq-shortcut-tradeoff-side1}.
    Assume $\cE_{\ref{obs:few-children-2}}$ holds\footnote{The contribution from $\cE_{\ref{obs:few-children-2}}$ not holding is negligible since $\prob{\textrm{not }\cE_{\ref{obs:few-children-2}}}\expect{\card{V(T_P)} \;\middle|\; \textrm{not }\cE_{\ref{obs:few-children-2}}} < 1$. This follows from \Cref{obs:few-children-2} and $\card{V(T_P)} \le n$.}.
    Let $L$ be the number of nodes $P'$ (in subproblem $G'$) over all levels $r$ such that $\card{\drelevant{G'}{P'}{\eta_{\textrm{min}}D_r}} > \rho^2$; we call such nodes larges and otherwise call a node small.
    We have
    \begin{align*}
        \expect{\card{V(T_P)}}
        &
        \le
        \expect{L + \sum_{\textrm{large } P'} \sum_{\substack{P'' \in \children{P'}\\: P'' \textrm{ small}}} \card{V(T_{P''})}}
        \\
        &
        \le
        \expect{L} + \expect{L \cdot \underset{\ref{obs:few-children-2}}{O(\log^{5/2} n)}}
        \tag{\Cref{obs:few-children-2} under the final parameter setting}
        \\
        &
        =
        (1 + O(\log^{5/2} n))\expect{L} = n^{1/2 + o(1)}/\rho.
        \tag{\Cref{lem:topdown-cfr}}
    \end{align*}
    \\
    \underline{Error}:
    The $(1+\eps)$ factor error under the final parameter setting is shown in the unpruned logical subproblem tree by considering for each recursive level $r$ leaf node $P'$ (which is a leaf because a bridge, say $p(P')$, was sampled as a pivot at level $r$), and using the level $r-L$ shortcut $p(P')$ to shortcut the level $r-L$ ancestor node of $P'$.
    A calculation in \cite{cao2020efficient} shows that this gives a $(1+\eps)$ factor error.
    The error in the pruned logical subproblem tree is bounded above by the aforestated error since we no longer count the error accumulated by leaves in the subtree of a pruned node $P'$; the call to $\TDijkstra(G,v_0,\rho^2)$ will shortcut $P'$ with no error.
    \\
    \underline{Time}:
    The time from \CFR under the final parameter settings is $\Ot(m/\eps^2)$ by \Cref{prop:cfr-main} and so it suffices to bound the time from \pruneii calls by $\Ot(n\rho^4)$.
    There are logarithmically many distance guesses and, for each guess, each vertex $v$ makes calls to $\TDijkstra(\cdot,v,\rho^2)$ which take time $\rho^4$ each.
    Summing over each vertex and each of the $O(\log n / \log\log n)$ levels of recursion gives $\Ot(n\rho^4)$.
    \\
    \underline{Size}:
    From \Cref{prop:cfr-main}, \CFR adds $\Ot(n/\eps^2)$ edges.
    \pruneii adds $\Ot(n\rho^2)$ more edges since each call to $\TDijkstra(\cdot,v,\rho^2)$ adds up to $\rho^2$ edges and there are $\Ot(n)$ many such calls.
    \\
    \underline{Main case ($\Ot(m)$ size hopset)}:
    Set $\rho = (m/n)^{1/4}$.
\end{proof}

%% file: sssp.tex
\section{Parallel SSSP}
\label{sec:sssp}

In this short section, we parallelize \CFR \textcolor{\cpruneii}{with \pruneii}, the approximate hopset construction shown in \Cref{sec:seq-hopset}, proving \Cref{thm:par-hopset}.
We then use the approximate hopset to prove \Cref{thm:par-sssp}, our main result for parallel SSSP.

We use the following result from \cite{blackbox} to reduce the construction of hopsets on digraphs to that on so-called shallow digraphs.

\begin{proposition}[Paraphrasing Theorem 3.1 from \cite{blackbox}]
\label{prop:shallow-hopset}
    Suppose $\gamma > c_0 \log^3 n \cdot (1/\eps^2 + 1)$ and $\alpha < 1/(c_0\log^2_\gamma n)$ where $c_0$ is a sufficiently large constant.
    
    Suppose there is a parallel algorithm  $\cA_0$ that, given a digraph $G_{0}$ with $n$ vertices and $m_0$ edges and $(1 + \eps/2)$-approximate shortest path hopbound $h_0 = \gamma \beta$, returns a $(\beta, \eps)$-hopset of size $\alpha m_0 + f(n)$.
    
    Then there is a randomized parallel algorithm $\cA$ that, given a digraph $G$ with $n$ vertices and $m$ edges, returns a $(\beta, \eps \cdot O(\log^2_{\gamma}n))$-hopset of size $S(m) = \Ot(\alpha m + f(n))$.
    $\cA$ makes a polylogarithmic number of sequential calls to $\cA_0$ on digraphs with at most $S(m) + m$ edges, and takes an additional $\Ot(m)$ work and $\Ot(\gamma \beta)$ span.
\end{proposition}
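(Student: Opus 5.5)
The plan is to prove \Cref{prop:shallow-hopset} by \emph{hopbound reduction}: run $O(\log_\gamma n)$ rounds, and in each round shrink the approximate hopbound of the current augmented graph by a factor of about $\gamma$, calling $\cA_0$ only on graphs that are already shallow. Let $G^{(0)} = G$ and let $h_i = h_0\gamma^{-i}$ denote the target hopbound after round $i$, so $h_0$ is at most $n$ at the start. Round $i$ receives $G^{(i)} = G \cup H_{<i}$, whose $(1+\eps_i)$-approximate shortest path hopbound is $h_i$, and produces a set $H_i$ such that $G^{(i)} \cup H_i$ has hopbound about $h_i/\gamma$. This costs an additional $\eps \cdot O(\log_\gamma n)$ factor of error. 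Once $h_i$ reaches $\gamma\beta$, one final call to $\cA_0$ produces a $\beta$-hop guarantee. Because there are only $O(\log_\gamma n)$ rounds, each using polylogarithmically many calls to $\cA_0$, the number of sequential calls is polylogarithmic.

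The key step inside a round is to build shallow instances out of a deep graph. I would sample a hub set $A_i$ at rate $\Theta(\log n\cdot\gamma\beta/h_i)$. With high probability, every $h_i$-hop path then contains a hub in every window of $h_i/(\gamma\beta)$ consecutive vertices. Next, from every hub I would run the Klein--Subramanian hop-limited search with hop limit $h_i/(\gamma\beta)$, in both directions, in $G^{(i)}$. This costs $\Ot(m)$ work and $\Ot(h_i/(\gamma\beta))$ span per source; to keep the total work at $\Ot(m)$, the searches are run only from vertices in a random low-diameter-style partition, the way \cite{blackbox} does it, and the collisions are charged by sampling. These searches produce a graph $G_0$ consisting of $G^{(i)}$ plus weighted ``segment'' edges between consecutive hubs. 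In $G_0$, any $(1+\eps_i)$-approximate path of $h_i$ hops compresses to about $\gamma\beta$ hops, so $G_0$ meets the precondition of $\cA_0$ with $h_0 = \gamma\beta$. The call to $\cA_0$ returns a $(\beta,\eps)$-hopset $H'$ of size $\alpha|E(G_0)| + f(n)$. Each edge of $H'$ is realized by a path in $G_0$, hence by a path in $G$, so the distance lower bound is preserved. Substituting $H'$ for the segment edges then takes an $h_i$-hop path down to $O(\beta\cdot h_i/(\gamma\beta)) = O(h_i/\gamma)$ hops, which is exactly the reduction we wanted.

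The remaining work is accounting. For \emph{size}, round $i$ calls $\cA_0$ on at most $S(m)+m$ edges and adds $\alpha(m + |H_{<i}|) + f(n)$ edges. Since $\alpha < 1/(c_0\log_\gamma^2 n)$ and there are $\log_\gamma n$ rounds, the accumulated size is geometric-like and stays $\Ot(\alpha m + f(n))$. For \emph{error}, each round multiplies the error by $(1+O(\eps))$, and the hub-compression step inside a round costs another $O(\log_\gamma n)$ levels, because the segment edges are themselves only approximate at nested scales. This gives the $\eps\cdot O(\log_\gamma^2 n)$ bound; the hypothesis $\gamma > c_0\log^3 n(1/\eps^2+1)$ is what absorbs the Chernoff slack in the hub sampling and the factor $1/\eps^2$ coming from rounding weights. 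For \emph{span}, round $i$ costs $\Ot(h_i/(\gamma\beta) + \gamma\beta)$ plus the span of $\cA_0$. The bound $\Ot(\gamma\beta)$ for the extra span needs care: in the early rounds $h_i$ is large, so I would instead order the rounds bottom-up. That is, first shortcut segments of $\gamma\beta$ hops, then segments of $\gamma^2\beta$ hops using the already-shortcut graph, so that every search in every round runs with hop limit $\gamma\beta$.

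I expect this last point to be the main obstacle: arranging the scales so that every hop-limited search is $\Ot(\gamma\beta)$-hop while still guaranteeing that the instance handed to $\cA_0$ is shallow, all within $\Ot(m)$ total work. The fallback plan is to follow \cite{blackbox} directly, treating the nested-hub bottom-up scheme as the core construction and proving by induction on the scale that the graph at scale $j$ has $(1+\eps j)$-approximate hopbound $\gamma\beta$ relative to the hub set at that scale.
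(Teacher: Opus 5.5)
The paper gives no proof of this proposition; it is imported (paraphrased) from \cite{blackbox}. So your proposal has to stand on its own, and it does not: the step that carries all the difficulty is deferred to the source being cited.

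\textbf{Cost of building the shallow instance.} In round $i$ you sample $|A_i|\approx n\gamma\beta\log n/h_i$ hubs and run a search from each. The hop-limited balls of different hubs can overlap arbitrarily, so each search may cost $\Theta(m)$ work, giving $\tilde\Theta(|A_i|\,m)$ in total. The algorithm cannot know which hubs are consecutive on which path, so it must add a segment edge for every pair of hubs within the hop limit. That can be $\Theta(|A_i|^2)$ edges; for example, all of $L\times R$ in a DAG $L\to x\to R$ with $O(n)$ edges. In the round with $h_i=\gamma^2\beta$ this is up to $n^2/\gamma^2$ edges. Since $\cA_0$ returns $\alpha m_0+f(n)$ edges, where $m_0$ is the size of the instance you hand it, this breaks three things at once: the $\Ot(m)$ extra work, the requirement that $\cA_0$ be called on at most $S(m)+m$ edges, and your size recursion, which tacitly takes $m_0=m+|H_{<i}|$. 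Your fix (``a random low-diameter-style partition \ldots{} the way \cite{blackbox} does it'') is a pointer, not an argument.

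\textbf{Span.} A top-down round searches to hop depth $h_i/(\gamma\beta)$, which is up to $n/(\gamma\beta)\gg\gamma\beta$ whenever $\gamma\beta\ll\sqrt n$. That is exactly the regime in which the paper applies the proposition. Your bottom-up reordering is where the real difficulty lies, and you leave it open. After scale $\gamma^j\beta$ is handled, only pairs joined by $\gamma^{j+1}\beta$-hop paths of $G$ have $\gamma\beta$-hop paths in $G\cup H_{\le j}$. That graph is therefore not shallow and cannot be passed to $\cA_0$. The missing idea is how to carve out shallow instances that cover all such pairs with $\Ot(m)$ total size and $\Ot(\gamma\beta)$ span.

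\textbf{Two steps fail even if costs are ignored.}

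First, $G_0=G^{(i)}$ plus segment edges is not shallow. The part of a path before its first hub and after its last hub is not compressed, and a pair whose path contains no hub is not compressed at all. So $G_0$ has hopbound about $h_i/(\gamma\beta)$, which exceeds $\gamma\beta$ once $h_i>(\gamma\beta)^2$. At best you could call $\cA_0$ on the hub graph alone.

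Second, $\cA_0$ requires its input to have a $(1+\eps/2)$-approximate hopbound. From round $1$ on, $G^{(i)}$ only has a $(1+\eps_i)$-approximate hopbound with $\eps_i\ge\eps$, and compression does not improve this factor. The instance's true distances are still $\dist_G$: it contains $G$, and on the hub graph the exact shortest path also decomposes into segments. Its $\gamma\beta$-hop distances, however, are only guaranteed within $1+\eps_i$ of $\dist_G$. So the hypothesis of $\cA_0$ is not met. Shrinking the accuracy parameter per round would change both the hypothesis $\gamma>c_0\log^3 n(1/\eps^2+1)$ and what $\cA_0$ guarantees.

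Relatedly, your $\eps\cdot O(\log^2_\gamma n)$ error bound is read off the statement rather than derived. In your construction the segment edges carry exact hop-limited distances, so nothing inside a round produces the second logarithm.
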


We are now ready to parallelize \CFR \textcolor{\cpruneii}{with \pruneii}.

\begin{restatable}[Parallel Near Linear Work Hopset Construction]{theorem}{parhop}
\label{thm:par-hopset}
    There is an $\Ot(m/\eps^4)$ work
    and $\sqrt[4]{n^{3 + o(1)}/m} \cdot \eps^{-3}$ span
    randomized parallel algorithm that, given a polynomially bounded non-negative integer weighted digraph $G$,
    outputs with high probability a
    $\paren{\sqrt[4]{n^{3 + o(1)}/m}, \eps}$-hopset $H$ with size 
    $\Ot\paren{m/\eps^2}$.
\end{restatable}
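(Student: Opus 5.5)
We follow the proof of \Cref{thm:par-shortcut} step by step, replacing \Cref{prop:shallow-shortcut} with \Cref{prop:shallow-hopset}. The algorithm $\cA_0$ is \CFR \textcolor{\cpruneii}{with \pruneii}, run in parallel on a digraph $G_0$ with $m_0$ edges whose $(1+\eps/2)$-approximate shortest path hopbound is $h_0 = \gamma\beta$. We set $\gamma = \Theta(\log^3 n/\eps^2)$, which meets the hypothesis $\gamma > c_0\log^3 n(1/\eps^2+1)$. We also run $\cA_0$ with accuracy $\eps' = \eps/\Theta(\log^2_\gamma n)$, so that the final guarantee of $\cA$, namely $\eps'\cdot O(\log^2_\gamma n)$, is exactly $\eps$. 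Because $\log_\gamma n = O(\log n)$, this changes $\eps$ by only a polylogarithmic factor. That factor is absorbed into the $\Ot$ and into the $n^{o(1)}$ terms.

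\textbf{Step 1 (work and span of $\cA_0$ on shallow graphs).} In every recursive instance, the distance-limited searches (to and from pivots at level $r$, to and from shortcutters at level $r+L$, and the fringe computations) are run as parallel hop-limited shortest path computations via Klein--Subramanian \cite{klein1997randomized} with hop bound $\Ot(h_0)$. Each instance therefore costs work near-linear in its size and span $\Ot(h_0)$. The \TDijkstra calls each touch only $\rho^2$ vertices. They can be replaced by $\rho^2$-bounded parallel searches, for example by running Klein--Subramanian restricted to the $\rho^2$-vertex neighborhood or by a parallel truncated BFS/Dijkstra with at most $\rho^2$ rounds. This gives work $\Ot(\rho^4)$ and span $\Ot(\rho^2) \le \Ot(h_0)$ in the relevant regime. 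The recursion has depth $O(\log n/\log\log n)$, so the total span is $\Ot(h_0) = \Ot(\beta/\eps^2)$ up to polylogarithmic factors. The work is $\Ot(m_0/\eps'^2 + n\rho^4)$, just as in \Cref{thm:seq-hopset}.

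\textbf{Step 2 (the obstacle).} The main point to verify is that the analysis of \Cref{thm:seq-hopset} still holds when searches are hop-limited rather than exact. The input is only promised to have an approximate hopbound $h_0$, so hop-limited distances $\bdist{h_0}$ replace true distances, and they may fail the triangle inequality. I would handle this as \cite{cao2020efficient,blackbox} do: run the whole construction on the metric $\bdist{h_0}$, or on a layered/hop-augmented graph. Every shortcut then still has weight at least the true distance, and every shortest path $P$ of interest is $(1+\eps/2)$-approximated by an $h_0$-hop path, which the subproblem-tree analysis of \Cref{lem:topdown-cfr} and \Cref{obs:few-children-2} handles unchanged. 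Under these assumptions I would argue that the invocation of \Cref{prop:shallow-hopset} is legitimate.

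\textbf{Step 3 (parameters and conclusion).} Fix $\rho = (m_0/n)^{1/4}/\mathrm{polylog}(n)$. Then $\beta = n^{1/2+o(1)}/\rho = \sqrt[4]{n^{3+o(1)}/m_0}$, and the \pruneii edges number $\Ot(n\rho^2) \le \alpha m_0$ with $\alpha < 1/(c_0\log_\gamma^2 n)$, while the \CFR edges number $f(n) = \Ot(n/\eps^2)$. \Cref{prop:shallow-hopset} therefore gives an $(\beta,\eps)$-hopset of size $S(m) = \Ot(m/\eps^2)$. It uses polylogarithmically many sequential calls to $\cA_0$ on graphs with $\Ot(m/\eps^2)$ edges, each costing $\Ot(m/\eps^4)$ work, because $\cA_0$ pays $1/\eps'^2$ per edge on inputs that already carry a $1/\eps^2$ blowup. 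The additional cost is $\Ot(m)$ work and $\Ot(\gamma\beta)$ span. The span is thus $\Ot(\gamma\beta) = \sqrt[4]{n^{3+o(1)}/m}\cdot\eps^{-2}$ polylog. An extra $\eps^{-1}$ comes from the span of each $\cA_0$ call, where hop-limited searches with $1/\eps$-scaled rounding cost $\Ot(h_0/\eps)$. Altogether this gives the claimed $\sqrt[4]{n^{3+o(1)}/m}\cdot\eps^{-3}$ span.
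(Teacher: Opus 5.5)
Your outline follows the paper's bookkeeping: invoke \Cref{prop:shallow-hopset} with $\cA_0$ a parallel \CFR with \TDijkstra-pruning, $\rho=(m_0/n)^{1/4}/\mathrm{polylog}(n)$, inner accuracy $\eps/\Theta(\log^2_\gamma n)$, and $f(n)=\Ot(n/\eps^2)$. There is a genuine gap, though. The one substantive step of the proof is implementing the distance-limited searches in low span without breaking the sequential analysis, and your Step~2 leaves it open. Neither fix you suggest works:
\begin{itemize}
\item $\bdist{h_0}$ is not a metric. The \CFR analysis you want to reuse ``unchanged'' relies on the triangle inequality throughout: splitting a logical subpath into $O(\card{S})$ core/fringe actual subpaths, the ball bounds of \Cref{prop:cfr-1}, \Cref{prop:cfr-3}, and the bridge error estimate $\dist(u',p)+\dist(p,v')\le(1+\eps)D_{r-L}$.
\item Computing exact $h_0$-hop-limited balls, whether by Bellman--Ford or BFS in an $h_0$-layered graph, costs $\Theta(mh_0)$ work, which destroys near-linear work.
\end{itemize}

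The paper's fix is the rounding you mention only in passing in Step~3, and only for span:
\begin{itemize}
\item For each guess $\Delta=2^j$, round every weight up to a multiple of $\delta=\eps_0\Delta/(9h_0)$ to get $G_\Delta$. Run the entire inner recursion on $G_\Delta$ with exact $G_\Delta$-distances and inner accuracy $\eps=\eps_0/9$.
\item All search radii are $O(\eta_{\textrm{max}}\Delta)$, so a BFS counting $\delta$ as one unit computes these distances exactly in $\Ot(h_0/\eps_0)$ layers and near-linear work.
\item Because $G_\Delta$ is an honest weighted graph, the analysis of \Cref{thm:seq-hopset} applies to it verbatim. Its hopset edges have weight $\dist_{G_\Delta}\ge\dist_G$.
\item The only new error is rounding. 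The $h_0$-hop path $P$ that $(1+\eps_0/2)$-approximates $P^*$ and has length in $[\Delta,2\Delta]$ gains at most $h_0\delta\le(\eps_0/9)\,w(P)$. The chain $(1+\eps_0/2)(1+\eps_0/9)^2\le 1+\eps_0$ then gives the $(\beta,\eps_0)$ guarantee that \Cref{prop:shallow-hopset} demands of $\cA_0$.
\end{itemize}
This is where the approximate hopbound of the input actually enters the correctness argument.

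Separately, your span bound for the \TDijkstra calls fails in exactly the regime the theorem targets. The inequality $\rho^2\le h_0$ is false for dense graphs: for constant $\eps$ and $m=n^2$, $\rho^2=n^{1/2}/\mathrm{polylog}(n)$ while $h_0=\gamma\beta=n^{1/4+o(1)}$. Searches with up to $\rho^2$ sequential rounds would therefore cost span about $\sqrt{n}$. The remedy is that the pruning argument only needs the $\eta_{\textrm{min}}D_r$-ball around $v_0$, which has at most $\rho^2$ vertices when $P'$ is small. So these searches can also be radius-limited (and size-capped) BFS calls in $G_\Delta$, taking $\Ot(h_0/\eps_0)$ layers. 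Relatedly, your Step~1 charges span $\Ot(h_0)$ for the searches while Step~3 charges $\Ot(h_0/\eps)$. The latter is correct, and it comes precisely from the rounding that your correctness argument is missing.
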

\begin{proof}
    \ \\
    \underline{Implementing $\cA_0$ --- Parallel \CFR \textcolor{\cpruneii}{with \pruneii} on shallow digraphs}:
    An issue which arises when parallelizing \CFR \textcolor{\cpruneii}{with \pruneii} is that, while the distance-limited searches use (distance-limited) Dijkstra's algorithm in the sequential setting, this is inherently not parallel.
    We use a similar rounding technique to what \CFR uses to parallelize their algorithm (that is taken from \cite{klein1997randomized}), which takes care of this.
    The changes to our algorithm are only in the outer shell, which we describe later below (new lines marked with a \ding{47}), and with this change also setting $\eps \gets \eps_0/9$.
    For each guessed distance $\Delta = 2^j$, we will run the inner call of \CFR \textcolor{\cpruneii}{with \pruneii} on the graph $G_{\Delta}$, which is the same as $G$ but with edge weights rounded up to integer multiples of $\frac{\eps_0 \Delta}{9h_0}$.
    On $G_{\Delta}$, distance-limited searches in the inner call can be performed by a BFS which treats $\frac{\eps_0 \Delta}{9h_0}$ as one unit.
    \begin{tbox}
        \algname{Modified Outer Shell of \CFR}\\
        \textbf{Global Parameters:} $k$, $L$, $\lambda$, and $c$ are global parameters to be fixed later.\\
        \textbf{Input:} A weighted directed graph $G = (V, E)$ with $(1+\eps_{0}/2)$-approximate shortest path hopbound $h_{0}$.\\
        \textbf{Output:} A weighted directed approximate hopset $H \subseteq V^2$.
        \begin{enumerate}
            \item Repeat $\Theta\paren{\log n}$ times: \qquad\qquad \textcolor{blue}{\texttt{(boosting to succeed whp)}}
                \begin{enumerate}
                    \item For $j \in [-1, \log(nW)]$: \qquad\qquad \textcolor{blue}{\texttt{(guessing a distance $\approx 2^j$)}}
                        \begin{enumerate}
                            \item For each $v \in V$, set $\ell(v) \gets$ the first $r \in [0,\log_k n]$ such that a fresh $\textrm{Bernoulli}\paren{\frac{\lambda k^{r+1}\log n}{n}}$ succeeds.
                            \item \ding{47} $\Delta \gets 2^j$
                            \item \ding{47} $G_{\Delta} \gets G$ with edge weights rounded up to a integer multiples of $\frac{\eps_0 \Delta}{9h_0}$
                            \item For each $v \in V$ with $\ell(v) \le L$:
                                \begin{itemize}
                                    \item Add $uv$ with weight $\dist_{G_\Delta}(u,v)$ to $H$ for all $u \in \danc{G_{\Delta}}{v}{2^{j+1}}$.
                                    \item Add $vu$ with weight $\dist_{G_\Delta}(v,u)$ to $H$ for all $u \in \ddesc{G_{\Delta}}{v}{2^{j+1}}$.
                                \end{itemize}
                            \item $H \gets H\ \cup$ \CFR on $G_{\Delta}, r = 0, D=2^j/k^c$.
                        \end{enumerate}
                \end{enumerate}
            \item Output $H$.
        \end{enumerate}
    \end{tbox}
    The above changes give an algorithm that runs in $\Ot(m/\eps_0^2 + n \rho^4)$ work, and $\Ot(h_0/\eps_0)$ depth since the BFS calls (i.e. depth-limited searches) explore BFS trees only up to layer $\eta_{\textrm{max}}\Delta / \frac{\eps_0 \Delta}{9h_0} = \Ot(h_0/\eps_0)$.
    Let $\beta = n^{1/2+o(1)}/\rho$.
    To see that the algorithm returns a $(\beta, \eps_0)$-hopset, we analyze the logical subproblem tree of an arbitrary path $P$ which has hopbound $h_0$ and $(1+\eps_{0}/2)$-approximates the shortest path $P^*$ between its endpoints; say $P$ has length in $[\Delta,2\Delta]$.
    Let us subscript paths with $\Delta$ when their lengths should be considered under $G_{\Delta}$.
    In $G_{\Delta}$, then, $P_{\Delta}$ will have length no more than $(1+\eps_{0}/9)$ times $P$ and be shortcut with high probability to some $\beta$ hop path $\bar{P}_{\Delta}$ that has length at most $(1+\eps_{0}/9)$ times longer than $P_{\Delta}$.
    That is,
    \begin{align*}
        \bar{P} \le \bar{P}_{\Delta} \le (1+\eps) P_{\Delta} = (1+\eps_{0}/9) P_{\Delta} \le (1+\eps_{0}/9)^2 P \le (1+\eps_{0}/2)(1+\eps_{0}/9)^2 P^* \le (1+\eps_{0})P^*
    \end{align*}
    where we have abused notation and used the symbols for paths to refer to their lengths.
    \\
    \underline{Setting parameters}:
    Fix $\rho = (m_0/n)^{1/4}/\textrm{polylog}(n)$ for a sufficiently large power in the polylogarithm term.
    Let the output of \CFR \textcolor{\cpruneii}{with \pruneii} on shallow digraphs be a $(\beta,\eps/\Theta(\log^2_{\gamma}n))$-hopset $H$ with $\beta = n^{1/2 + o(1)}/\rho = \sqrt[4]{n^{3 + o(1)}/m_0}$.
    Note that for this setting of $\rho$, we have $\card{H} = \Ot(n/\eps^2 + n\rho^2) = \alpha m_0 + f(n)$ where $\alpha < 1/(c_0\log^2_\gamma n)$ and $f(n) = \Ot(n/\eps^2)$, whence $S(m) = \Ot(m/\eps^2)$.
    \\
    \underline{Putting things together}:
    Invoking \Cref{prop:shallow-hopset}, $\cA$ outputs with high probability a
    $(\sqrt[4]{n^{3 + o(1)}/m},\eps)$-hopset with size $\Ot\paren{m/\eps^2}$
    in $\Ot(m/\eps^4)$ work and $\sqrt[4]{n^{3 + o(1)}/m} \cdot \eps^{-3}$ span.
\end{proof}

Finally, we proceed to proving \Cref{thm:par-sssp}, our result for SSSP, using the reduction below.

\begin{proposition}[Paraphrasing and composing Theorem 1.7 with Lemma 3.1 from \cite{rozhovn2023parallel}]
\label{prop:reduction}
    There is an algorithm that ---
    given a digraph $G=(V,E)$ with polynomially bounded non-negative integer weights and a source $s\in V$ ---
    computes $\dist(s,v)$ for all $v \in V$, using $\textrm{polylog}(n)$ calls to a $(1 + O(1/ \log n))$-approximate distance oracle $\mathcal{O}$ on digraphs and an additional $\Ot(m)$ work and $\textrm{polylog}(n)$ depth.\footnote{One might notice that Lemma 4.1 in \cite{klein1997randomized} is also about reducing exact shortcut path to approximate shortest path oracles. However, they require a stronger approximate shortest path oracle: The approximate distance must be an exact distance on the shortcut $G'$. Thus, it cannot be used here.}
\end{proposition}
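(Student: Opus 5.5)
The plan is to compose the two cited ingredients of \cite{rozhovn2023parallel}, reconstructing the argument rather than treating it as a black box. There are two layers: an outer \emph{bit-scaling} layer that reduces the problem to instances with small distances, and an inner \emph{potential-refinement} layer that turns approximate oracle answers into exact distances. Since the weights are polynomially bounded, write $W = \mathrm{poly}(n)$ and set $w_i(e) = \lfloor w(e)/2^{\lceil \log W\rceil - i}\rfloor$ for $i = 0,1,\ldots,\lceil\log W\rceil$. Then $w_0 \equiv 0$ and $w_{\lceil \log W\rceil} = w$.

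\textbf{Outer layer (bit scaling).} Suppose we already know exact distances $d_{i-1}(v) = \dist_{w_{i-1}}(s,v)$. Set $\phi = 2d_{i-1}$. Since $w_i(e) \ge 2w_{i-1}(e)$, the reduced weights $w_i(u,v) + \phi(u) - \phi(v)$ are nonnegative integers. Moreover, $w_i \le 2w_{i-1}+1$ edgewise, so every reduced distance from $s$ is at most $n-1$. Each phase is therefore an exact SSSP instance with nonnegative integer weights and all distances bounded by $n$. There are $O(\log n)$ phases, and converting back costs $O(m)$ work and $O(1)$ depth. Unreachable vertices are detected with a single oracle call, so we may ignore them.

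\textbf{Inner layer (Lemma 3.1-type step).} Given such an instance with distance bound $\Delta$, I will use $O(1)$ oracle calls, run with $\eps = 1/\Theta(\log n)$ on suitably modified graphs, to produce a \emph{feasible potential} $\psi$. This means an integer potential with the following two properties:
\begin{enumerate}
\item every reduced weight $w(u,v)+\psi(u)-\psi(v)$ is nonnegative;
\item every reduced distance from $s$ is at most $\Delta/2$.
\end{enumerate}
Iterating this $O(\log n)$ times drives the distance bound below $1$. Once that happens, all reduced distances from $s$ are $0$, so the accumulated potential, shifted by its value at $s$, gives the exact distances. A shortest-path tree is then obtained in $\Ot(m)$ work and $\mathrm{polylog}(n)$ depth by keeping the zero-reduced-weight edges and taking, for example, a parallel BFS/arborescence over them. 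The total number of oracle calls is $O(\log^2 n)$, plus $\Ot(m)$ extra work.

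\textbf{Main obstacle.} The hard step is item~1 of the inner layer. Raw approximate distances $\tilde d$ with $d \le \tilde d \le (1+\eps)d$ do \emph{not} yield nonnegative reduced weights, since $\tilde d(v)$ may exceed $\tilde d(u) + w(u,v)$. My first attempt will be to round: scale $\tilde d$ by $1/(1+\eps)$-type factors and query the oracle on a graph in which each edge weight is inflated by an additive $\Theta(\eps\Delta/n)$-free rounding, so that the answers become consistent up to a rounding unit. If that fails, I will fall back on the structure of Rozho\v{n} et al.'s Theorem 1.7. That approach queries the oracle on a graph with a super-source whose edges encode the current approximate values, so that the oracle's output on the augmented graph itself satisfies the triangle inequality, while losing only the factor $(1+\eps)$ in the distance bound. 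Composing this repair (their Lemma 3.1) with the halving iteration and the bit scaling then gives the stated polylogarithmic number of calls and the claimed work and depth.
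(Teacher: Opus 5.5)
The paper gives no proof of this proposition; it cites it from Rozho\v{n} et al. Your two outer layers are sound. The bit-scaling reduction to nonnegative integer instances with reduced distances at most $n-1$ is correct. So is the step where a feasible integer potential $\psi$ with $\psi(v)\ge \dist(s,v)/2$ halves the distance bound, so that $O(\log n)$ rounds plus integrality give exact distances. Use $\lceil\psi\rceil$ rather than $\lfloor\psi\rfloor$: it stays feasible for integer weights, and it is what makes the last step from bound $1$ to bound $0$ work. This matches the potentials-to-exact-distances half of the cited composition. The shortest-path tree is not part of this statement; note also that a BFS over tight edges has depth equal to their hop-depth, which is not polylogarithmic.

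The gap is the other half, which is the whole substance being cited: turning a black-box $(1+\eps)$-approximate oracle into a \emph{feasible} potential. Neither of your mechanisms does this.

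\textbf{Rounding.} Take $uv$ to be the last edge of a shortest $s$--$v$ path. The oracle may return $\tilde d(u)=\dist(s,u)$ and $\tilde d(v)=(1+\eps)\dist(s,v)$. That is a triangle-inequality violation of $\eps\,\dist(s,v)=\Theta(\eps\Delta)$ on an edge of weight possibly $0$ or $1$. Rounding to a grid leaves violations of that order, and inflating edges to absorb them perturbs an $n$-hop path by $\Theta(n\eps\Delta)\gg\Delta/2$.

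\textbf{Super-source.} This fails for the same reason. The oracle only returns a $(1+\eps)$-approximation of $D(v)=\min_u(\tilde d(u)+\dist(u,v))$. $D$ itself is feasible, but the returned values can violate feasibility by $\eps D(u)$, just as before. A black-box approximate oracle never outputs a feasible potential on any graph.

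A telltale sign is that your plan never uses $\eps=1/\Theta(\log n)$: a one-shot repair losing a factor $1+\eps$ would work with constant $\eps$, and your ``$O(1)$ calls per halving'' claim is unsupported.

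What is needed is a construction where feasibility is forced structurally and the oracle only controls approximation quality. One such construction:
\begin{enumerate}
\item Split $V$ by a threshold $\tau$ on $\tilde d$ into a near part $A\ni s$ and a far part $B$.
\item Recursively compute a potential on $G[A]$ and cap it at $\tau'=\tau/((1+\eps)^2C)$, where $C$ is the approximation factor of the recursive calls. A minimum with a constant stays feasible.
\item On $B$, use $\tau'$ plus a potential computed from a virtual source $r_B$ with edges $r_B\to x$ of weight $\min_{u\in A,\,ux\in E}(\lambda(u)+w(u,x))-\tau'$. Here $\lambda(u)=\min(\tilde d(u)/((1+\eps)C),\tau')$ is a lower bound on $u$'s final potential computed from $\tilde d$ alone, and these weights are nonnegative.
\end{enumerate}
Then edges $A\to B$ and $B\to A$ are feasible by construction. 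The two recursions are independent, so each level is one batched oracle call. The approximation factor degrades by $(1+\eps)^2$ per level over $\Theta(\log n)$ levels, which is what forces $\eps=O(1/\log n)$. Without an argument of this kind, or a plain citation as the paper gives, your proposal does not establish the statement.
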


\parsssp*
\begin{proof}
    Once $d(s,v)$ has been computed for all $v \in V$, the shortest path tree rooted at $s$ can be computed in $\Ot(m)$ work and $\textrm{polylog}(n)$ depth (by doing a search in the neighborhood of each vertex for its parent), so it remains to show how to compute $d(s,v)$.
    
    We compute $d(s,v)$ through the reduction of \Cref{prop:reduction}, using the oracle $\mathcal{O}$ whose implementation is described as follows.
    Let $\eps \gets O(1/\log n)$.
    Use the algorithm of \Cref{thm:par-hopset} on $G$ to get a $(\sqrt[4]{n^{3 + o(1)}/m}, \eps)$-hopset $H$ with $\card{H} = \Ot(m)$, and then use the algorithm of Klein and Subramanian~\cite{klein1997randomized} to compute shortest $\sqrt[4]{n^{3 + o(1)}/m}$ hop paths in $G \cup H$ from $s$ in $\Ot(m)$ work and $\sqrt[4]{n^{3 + o(1)}/m}$ depth.
\end{proof}

%% file: open-problems.tex
\section{Open Problems}
\label{sec:open-problems}

There are two main open problems we leave the reader.

\paragraph{Faster Sequential Construction of the Tunable Folklore Hopset. }
Assuming $\omega = 2+ o(1)$, the shortcut set of \Cref{thm:seq-shortcut} matches the folklore $O(n/\sqrt{m})$-shortcut set in its parameters, up to subpolynomial factors.
The punchline here is that for $\omega = 2$, we give a near-linear time construction of the folklore shortcut set (almost).
On the other hand, the $(O(n/\sqrt{m}), \eps)$-\emph{hopset} given by the folklore result is better than the hopset of \Cref{thm:seq-hopset} in its parameters, especially when $m \gg n$.
For example, in the dense regime where $m = \Omega(n^2)$, the folklore construction gives an $(O(1), \eps)$-hopset whereas we give a near-linear time construction of a $(n^{1/4+o(1)},\eps)$-hopset.
For $\omega = 2$, can we get a near-linear time construction of a $(n^{1+o(1)}/\sqrt{m}, \eps)$-hopset, almost matching the folklore result?

\paragraph{Faster Algorithms in the Sparse Regime. }
Our results do not make any improvements when $m = O(n)$; the best near work-efficient algorithm still has depth $n^{1/2+o(1)}$ in this case, for both reachability and SSSP.
It is particularly interesting to get faster algorithms here.

One potential opportunity --- using the same approach of constructing a shortcut set / hopset and then searching --- is to leverage ideas from the breakthrough of \cite{kogan2022new} who showed that for all digraphs $G$, there are $O(n^{1/3})$-shortcut sets $H$ with $\card{H} = \Ot(n)$.
While there are non-trivial constructions of said shortcut sets (see \cite{kogan2022beating,kogan2023faster}), near-linear time sequential algorithms are not known.
The result for shortcut sets was followed up by \cite{bernstein2023closing}, who showed that there are $(O(n^{1/3}), \eps)$-hopsets $H$ with $\card{H} = \Ot(n)$.
A faster sequential construction of these could provide a good first step towards getting faster parallel reachability and SSSP algorithms on sparse graphs.

One should note, however, that using the shortcut set (resp. hopset) framework for reachability (resp. SSSP) is inherently limited: \cite{bodwin2023folklore} showed that there are sparse graphs for which for any $c > 0$ any $n^{1/4 - c}$-shortcut set requires a strongly super-linear size, meaning there is no nearly work-efficient algorithm with $n^{1/4 - c}$ span using precisely this approach.
More generally, \cite{hoppenworth2024new} showed an $n^{2/9}$ barrier for $O(m)$ size shortcut sets.

%% file: sketch-key-lemma.tex
\section{Proof Sketch of \Cref{prop:jls-2}}
\label{app:sketch-key-lemma}

Here, we give a more conceptual sketch for why \Cref{prop:jls-2} should be true.
For a rigorous proof, see Lemma 4.4 in \cite{jambulapati2019parallel}.
\begin{proof}[Proof Sketch]
    For simplicity, assume $P'$ has no bridges.
    Suppose also that all $p \in S$ are ancestors.
    Under these assumptions, we will sketch a proof of the inequality
    \begin{align*}
        \expect{\sum_{i = 1}^{\card{S} + 1} \card{\anc{G'_i}{P'_i}} \;\middle|\; \card{S}} \lesssim \frac{1}{\card{S} + 1} \card{\anc{G'}{P'}}.
    \end{align*}
    To see this, observe that for any two distinct $p,q \in \anc{G'}{P'}$, there is some $x \in \set{p,q}$ such that $x \in S \implies y \not\in \bigcup_i \anc{G'_i}{P'_i}$, where $x \neq y \in \set{p,q}$.
    Indeed if $p$ reaches $q$, then $q \in S$ implies $p$ receives the label ``I reach $q$'' whereas no vertex in $P'_i$ for all $i$ receives such a label as $q$ is an ancestor of $P'$ which implies $p \not\in \bigcup_i \anc{G'_i}{P'_i}$.
    On the other hand if $p$ and $q$ are unrelated (and say $p$ reaches a weakly earlier vertex of $P'$ than $q$ does), then $p \in S$ implies $q$ receives the label ``I am unrelated to $p$'' whereas every vertex that $q$ reaches on $P'$ receives the label ``$p$ reaches me'' which implies $q \not\in \bigcup_i \anc{G'_i}{P'_i}$.
    We can thus form a tournament $T$ on vertex set $\anc{G'}{P'}$ and reframe the above inequality with the purely combinatorial statement on tournaments
    \begin{align*}
        \expect{\# \ v \in V(T)\textrm{ not dominated by } S \;\middle|\; \card{S}} \lesssim \frac{1}{\card{S} + 1} \card{V(T)},
    \end{align*}
    which is true for any tournament $T$.
    
    An analogous inequality holds for when all $p \in S$ are descendants.
    By ``averaging'' these two inequalities, \Cref{prop:jls-2} follows.
\end{proof}